\newcolumntype{C}[1]{>{\centering\let\newline\\\arraybackslash\hspace{0pt}}m{#1}}
\newtheorem{defn}{\noindent $\mathbf{Definition}$}[section]
\newtheorem{proposition}[defn]{$\mathbf{Proposition}$}
\title{Bijective Density-Equalizing Quasiconformal Map for Multiply-Connected Open Surfaces}
\author{Zhiyuan Lyu\thanks{Department of Mathematics, The Chinese University of Hong Kong
  ({zylyu@math.cuhk.edu.hk}).}
\and Gary P. T. Choi\thanks{Department of Mathematics, The Chinese University of Hong Kong
  ({ptchoi@math.cuhk.edu.hk}).}
  \and Lok Ming Lui\thanks{Department of Mathematics, The Chinese University of Hong Kong
  ({lmlui@math.cuhk.edu.hk}).}}
\date{}
\begin{document}

\maketitle
\begin{abstract}
This paper proposes a novel method for computing bijective density-equalizing quasiconformal (DEQ) flattening maps for multiply-connected open surfaces. In conventional density-equalizing maps, shape deformations are solely driven by prescribed constraints on the density distribution, defined as the population per unit area, while the bijectivity and local geometric distortions of the mappings are uncontrolled. Also, prior methods have primarily focused on simply-connected open surfaces but not surfaces with more complicated topologies. Our proposed method overcomes these issues by formulating the density diffusion process as a quasiconformal flow, which allows us to effectively control the local geometric distortion and guarantee the bijectivity of the mapping by solving an energy minimization problem involving the Beltrami coefficient of the mapping. To achieve an optimal parameterization of multiply-connected surfaces, we develop an iterative scheme that optimizes both the shape of the target planar circular domain and the density-equalizing quasiconformal map onto it. In addition, landmark constraints can be incorporated into our proposed method for consistent feature alignment. The method can also be naturally applied to simply-connected open surfaces. By changing the prescribed population, a large variety of surface flattening maps with different desired properties can be achieved. The method is tested on both synthetic and real examples, demonstrating its efficacy in various applications in computer graphics and medical imaging.
\end{abstract}

\section{Introduction}\label{sec:intro}
In recent years, three-dimensional (3D) data has become increasingly ubiquitous in various fields, including computer graphics, computer vision, and medical imaging. In computer graphics, 3D models are used to create realistic animations, games, and virtual environments. In computer vision, 3D data is used for 3D scene understanding, object recognition and tracking. In medical imaging, 3D data is used to create detailed models of organs and tissues, aiding in diagnosis and treatment planning. However, processing and manipulating 3D data can be challenging due to its complex and irregular nature. One important technique for addressing this challenge is surface parameterization, which involves mapping a surface in $\mathbb{R}^3$ onto a simpler parameter domain while preserving certain desired properties. Surface parameterizations have a wide range of applications in processing 3D data. For example, remeshing involves generating a new mesh with desired properties from an existing mesh, which can be achieved more easily on a parameterized surface. Visualization of 3D models is also facilitated by parameterization, as it allows for efficient rendering and texture mapping. One can also solve partial differential equations (PDEs) on a surface by parameterizing it onto a simpler two-dimensional (2D) domain where the PDEs can be solved more easily. Once the solution is obtained on the 2D domain, it can be mapped back to the original surface using the inverse of the parameterization map.

Different surface parameterization methods have been proposed depending on the properties of the surface parameterizations required for a specific application. Conformal parameterizations, for example, preserve angles and shapes locally and are commonly used in texture mapping and geometric processing. By contrast, area-preserving parameterizations preserve areas and are useful in physical simulations where mass is conserved. Another type of surface parameterizations is the distance-preserving parameterization, which preserves distances between points on the surface. 

Density-equalizing cartograms have recently gained attention in the field of surface parameterizations as a type of map projection technique. In~\cite{gastner2004diffusion}, Gastner and Newman proposed a novel algorithm for generating area cartograms based on the principle of density diffusion. The method deforms a given map such that the density of a prescribed quantity, defined by the population per unit area, becomes constant throughout the deformed map. The technique has found widespread use in data visualization, particularly in the visualization of sociological data such as global population, income, and age-of-death in different regions~\cite{dorling2008atlas}. Recently, Choi and Rycroft~\cite{choi2018density} proposed a finite-element approach for computing density-equalizing flattening maps of simply-connected open surfaces in $\mathbb{R}^3$ onto a parameter domain in $\mathbb{R}^2$, which has later been extended and applied to medical imaging~\cite{choi2020area} and 3D microstructure morphology~\cite{shaqfa2022geometrical,shaqfa2023disk}. However, there remains a gap in the literature with respect to the computation of density-equalizing flattening maps for multiply-connected open surfaces. Additionally, current approaches for computing density-equalizing parameterizations are limited in their ability to control local geometric distortion and bijectivity. These challenges pose significant obstacles to the applicability of these approaches in practical applications.

\begin{figure}[t]
    \centering
    \includegraphics[width=0.95\textwidth]{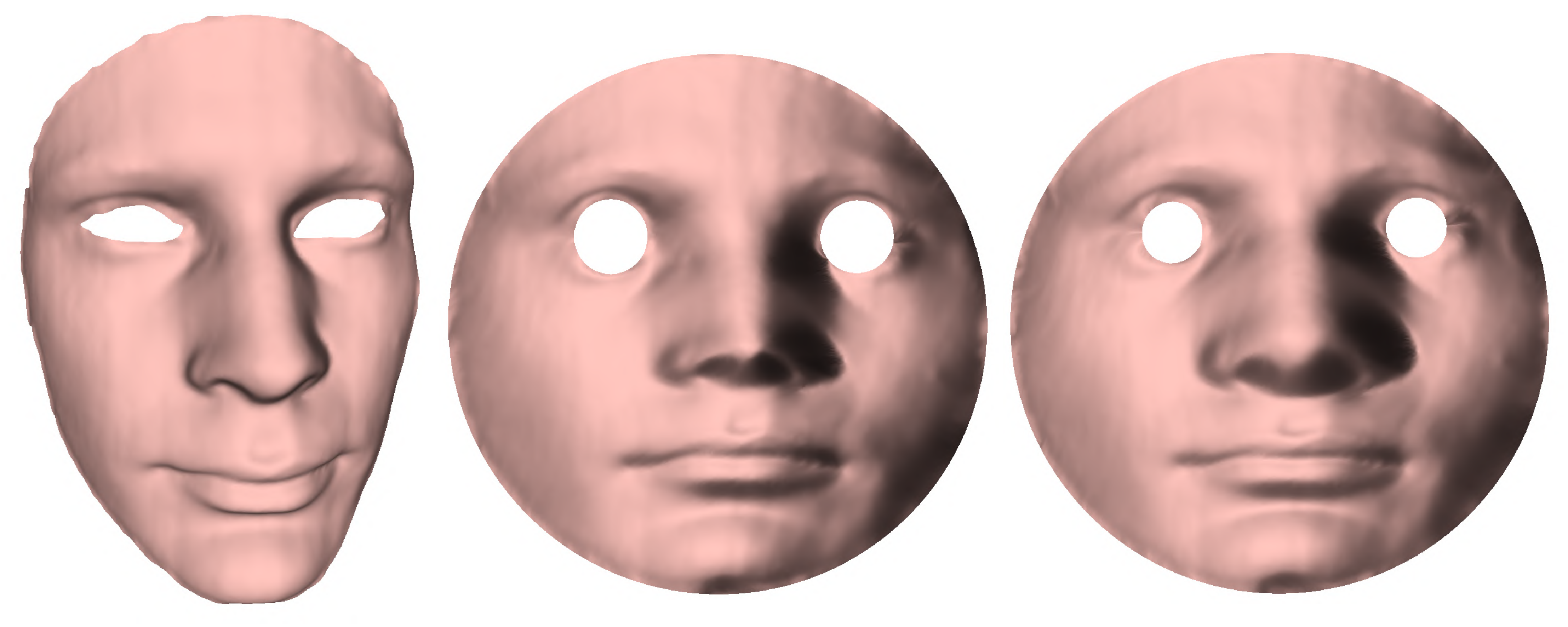}
    \caption{Bijective density-equalizing quasiconformal (DEQ) maps obtained by our proposed algorithm. Given a multiply-connected open surface and a prescribed density distribution, our method optimizes both the shape of the target 2D circular domain and the flattening map of the surface onto the domain to achieve density equalization, with the bijectivity of the mapping guaranteed by quasiconformal theory. By changing the prescribed density distribution, different effects can be achieved. For instance, we can parameterize a human face (left) onto the plane with the nose shrunk (middle) or enlarged (right) by changing the prescribed density distribution at the nose. Note that the size and position of the inner circular holes are also optimized and hence different in the two parameterization results.}
    \label{fig:overview}
\end{figure}

Motivated by this, we propose in this paper a novel method for computing bijective density-equalizing quasiconformal (DEQ) flattening maps for multiply-connected open surfaces, with controlled local geometric distortions and prescribed constraints on the density distribution (see Fig.~\ref{fig:overview}). Our proposed method utilizes quasiconformal flows to control the local geometric distortion during the diffusion process of the density distribution. Specifically, by formulating the diffusion process as a flow of the Beltrami coefficient, which measures the quasiconformality of the map, we can effectively control the local geometric distortion of the map by minimizing an energy functional involving the Beltrami coefficient terms. The bijectivity of the map is guaranteed by a projection operator that truncates the magnitude of the Beltrami coefficient. Our algorithm is capable of generating a large variety of flattening maps with different desired properties for various real-world applications by changing the initial population distribution. Moreover, the proposed method can flatten a multiply-connected open surface to a 2D circular domain, i.e. a unit disk with inner circular holes, where the size and location of the inner holes are further optimized to achieve quasiconformality and density equalization across the entire target domain. Additionally, our algorithm can accommodate landmark constraints to ensure consistent feature alignment. The method is also naturally applicable to simply-connected open surfaces. The effectiveness of our proposed algorithm has been verified through testing on both synthetic and real-world examples, showcasing its potential in numerous applications within the fields of computer graphics and medical imaging. 

The contributions of this paper can be summarized as follows:
\begin{enumerate}
\item We present an algorithm for computing density-equalizing surface parameterizations that overcomes the limitations of the conventional approaches in terms of bijectivity and quasiconformality.
\item To the best of our knowledge, this is the first work addressing the density-equalizing mapping problem as a quasiconformal flow problem, which allows us to incorporate quasiconformal techniques to minimize local geometric distortion in the density-equalizing parameterization. 
\item Our algorithm is capable of computing a bijective density-equalizing quasiconformal parameterization for multiply-connected open surfaces, which cannot be handled by conventional approaches. It can also be naturally applied to simply-connected open surfaces.
\item Our algorithm optimizes the geometry of the target 2D circular domain, i.e. the centers and radii of the inner circles in the circular domain, to obtain a parameterization with reduced geometric distortion for multiply-connected surfaces.
\item Our algorithm is the first density-equalizing mapping approach that incorporates feature landmark alignments, resulting in a bijective landmark-matching density-equalizing quasiconformal map.
\item Our proposed algorithm can be applied to various practical applications including texture mapping, surface remeshing and medical visualization, demonstrating the efficacy and practicality of our method.
\end{enumerate}

The organization of the paper is as follows. In Section \ref{sec:related}, we review the previous works in surface parameterization and density-equalizing maps. In Section \ref{sec:math}, we introduce the mathematical concepts related to our work. In Section~\ref{sec: Proposed math module}, we present our proposed mathematical models for computing bijective density-equalizing quasiconformal maps. In Section \ref{sec:main}, we describe the details of our proposed algorithms. Experimental results are presented in Section \ref{sec:experiments} for demonstrating the effectiveness of our proposed method. In Section \ref{sec:application}, we present applications of our method in different fields. We conclude our work and discuss future directions in Section \ref{sec:conclusion}.

\section{Related work}\label{sec:related}
Over the past few decades, surface parameterization has been widely studied in the area of geometry processing, graphics, and computer vision (see~\cite{floater2005surface,hormann2007mesh, sheffer2007mesh} for surveys). In particular, since isometric (both angle-preserving and area-preserving) parameterization is generally impossible to achieve for non-developable surfaces, a large variety of surface parameterization algorithms have been proposed for minimizing either the angle distortion (conformal parameterization) or the area distortion (area-preserving parameterization).  

Conformal parameterization~\cite{gu2008computational} preserves the local geometry of surfaces but leads to area distortion. Existing conformal parameterization methods for simply-connected open surfaces include least-squares conformal map (LSCM)~\cite{levy2002least}, holomorphic 1-form~\cite{gu2003global}, Yamabe flow~\cite{luo2004combinatorial,wang2007brain}, spectral conformal mapping~\cite{mullen2008spectral}, fast disk conformal map~\cite{choi2015fast}, conformal energy minimization~\cite{yueh2017efficient}, linear disk conformal map~\cite{choi2018linear}, and parallelizable global conformal parameterization~\cite{choi2020parallelizable,choi2022free}. Compared with simply-connected open surfaces, the conformal mapping of multiply-connected surfaces is less studied. Existing conformal parameterization methods for multiply-connected surfaces include the Koebe's iteration method~\cite{koebe1910konforme}, generalized Koebe's method~\cite{zeng2009generalized}, Laurent series~\cite{kropf2014conformal}, discrete conformal equivalence~\cite{bobenko2016discrete}, and poly-annulus conformal map (PACM)~\cite{choi2021efficient}. 

Quasiconformal maps are a generalization of conformal maps. In recent years, there has been an increasing interest in quasiconformal surface parameterization methods~\cite{choi2023recent}. Existing methods for computing quasiconformal parameterization include the auxiliary metric~\cite{zeng2009surface}, Beltrami holomorphic flow (BHF)~\cite{lui2012optimization}, bounded distortion parameterization~\cite{aigerman2014lifted, lipman2012bounded}, Linear Beltrami solver (LBS)~\cite{lui2013texture}, quasi-conformal energy minimization (QCMC)~\cite{ho2016qcmc}, fast spherical quasiconformal parameterization~\cite{choi2016fast}, least-squares quasi-conformal map (LSQC)~\cite{qiu2019computing}, parallelizable global quasi-conformal map (PGQCM)~\cite{zhu2022parallelizable}, and disk Beltrami solver~\cite{guo2023automatic}.

Area-preserving parameterization can preserve the area measure of surfaces while inducing angle distortion. Compared with conformal and quasiconformal parameterizations, there are only a few works focusing on area-preserving mappings. Existing area-preserving parameterization methods include the locally authalic map~\cite{desbrun2002intrinsic}, Lie advection~\cite{zou2011authalic}, optimal mass transport (OMT)~\cite{zhao2013area,cui2019spherical}, density-equalizing map (DEM)~\cite{choi2018density,choi2020area} and stretch energy minimization (SEM)~\cite{yueh2019novel,yueh2023theoretical}.

It is worth noting that there are also various parameterization approaches aiming to balance the trade-offs between conformal and area distortions, such as the most isometric parameterizations (MIPS)~\cite{hormann2000mips}, isometric metric~\cite{smith2015bijective}, optimized conformal parameterization with controllable area distortions~\cite{lam2017optimized}, isometric distortion energy minimization~\cite{rabinovich2017scalable}, and combined quasiconformal and optimal mass transportation spherical parameterization~\cite{lyu2023two}. Besides, there are mapping methods aiming to achieve low geometric distortion while satisfying prescribed landmark constraints, such as the landmark-constrained optimized conformal parameterization~\cite{wang2005optimization,lui2007landmark, wang2007brain,lui2010optimized,choi2015flash} and quasi-conformal landmark registration (QCLR)~\cite{lam2014landmark}. Other related parameterization methods include the optimal conformal parameterization for surfaces with arbitrary topology~\cite{li2008globally}, composite majorization~\cite{shtengel2017geometric}, and the simplicial complex augmentation framework~\cite{jiang2017simplicial}.

\section{Mathematical background}\label{sec:math}
This work aims to propose a method for computing bijective density-equalizing quasiconformal maps for connected open surfaces. The method relies on the physical principle of density diffusion and quasiconformal theory. In this section, we review some basic concepts of quasiconformal theory and density-equalizing maps.

\subsection{Quasiconformal theory}

 \begin{figure}[t!]
	\centering
	\includegraphics[width=0.8\textwidth]{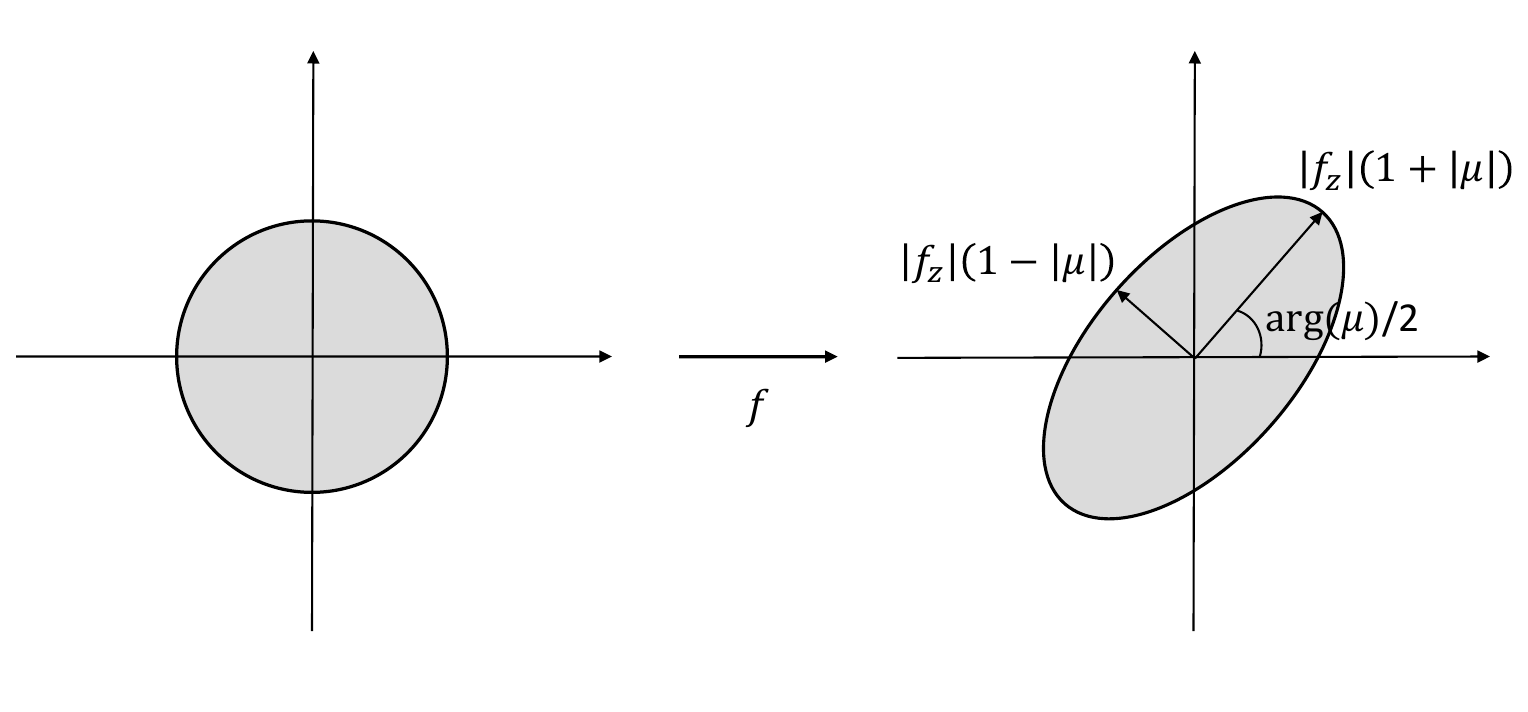}
	\caption{An illustration of how the Beltrami coefficient $\mu$ determines the quasiconformal distortion of a map $f$. Specifically, the maximal magnification factor is given by $|f_z|(1+|\mu|)$, the maximal shrinkage factor is given by $|f_z|(1-|\mu|)$, and the orientation change is given by $\text{arg}(\mu)/2$.}
	\label{fig:QCmapping} 
 \end{figure}
 
A surface $\mathcal{S}$ with a conformal structure is called a \textit{Riemann surface}. Given two Riemann surfaces $\mathcal{M}$ and $\mathcal{N}$, a map $f: \mathcal{M} \rightarrow \mathcal{N}$ is \textit{conformal} if it preserves the surface metric up to the \textit{conformal factor}. Quasiconformal mapping is an extension of conformal mapping in that it allows bounded conformal distortions. Intuitively, conformal mappings map infinitesimal circles to infinitesimal circles, while quasiconformal mappings map infinitesimal circles to infinitesimal ellipses with bounded eccentricity. Mathematically, $f: \mathbb{C} \rightarrow \mathbb{C}$ is \textit{quasiconformal} if and only if it satisfies the Beltrami equation:
\begin{equation}
\frac{\partial f}{\partial \bar{z}}=\mu(z) \frac{\partial f}{\partial z}
\end{equation}
for some complex-valued function $\mu$ satisfying $ \|\mu\|_{\infty}<1 $. $\mu$ is called the \textit{Beltrami coefficient}, which is a measure of quasiconformality, i.e. how far the map at each point deviates from a conformal map. Infinitesimally, around a point $p$, with respect to its local parameter, $f$ can be expressed as follows:
\begin{equation}
\begin{aligned}
f(z) & =f(p)+f_z(p) z+f_{\bar{z}}(p) \bar{z} =f(p)+f_z(p)(z+\mu(p) \bar{z}).
\end{aligned}
\end{equation}
It is easy to see that $f$ is conformal around $p$ if and only if $\mu(p) = 0$. Inside the local parameter domain, $f$ may be considered as a map composed of a translation to $f(p)$ together with a stretch map $S(z) = z + \mu(p) \bar z$, which is post-composed by a multiplication of $f_{z}(p)$, which is conformal. All the quasiconformal distortion of $S(z)$ is caused by $\mu(p)$. Specifically, $S(z)$ maps a small circle to a small ellipse (see Fig.~\ref{fig:QCmapping} for an illustration). Meanwhile, we can determine the angles of the directions of maximal magnification and shrinkage and their magnitude by $\mu(p)$. The maximal dilation of $f$ is given by: 
\begin{equation}
K(f)=\frac{1+\|\mu\|_{\infty}}{1-\|\mu\|_{\infty}} .
\end{equation}

Suppose $f:\Omega_1 \rightarrow \Omega_2$ and $g:\Omega_2 \rightarrow \Omega_3$ are two quasiconformal maps, with the Beltrami coefficients $\mu_f$ and $\mu_g$ respectively. Then, the Beltrami coefficient of the composition map $g\circ f : \Omega_1 \rightarrow \Omega_3$ is given by:
\begin{equation}
\mu_{g\circ f} = \frac{\mu_f + (\Bar{f_z}/f_z)(\mu_g \circ f)}{1 + (\Bar{f_z}/f_z) \Bar{\mu}_f(\mu_g \circ f)}.
\end{equation}
Angle distortions of a map can be eliminated using a suitable composition map. Specifically, for any given map $f$, by finding another map $g$ with $\mu_g = \mu_{f^{-1}}$, we can ensure that $\mu_{g\circ f} = 0$. This implies that the composition map $g\circ f$ is conformal.

For quasiconformal maps of Riemann surfaces, one can generalize the concept of Beltrami coefficients to Beltrami differentials via the local charts of the surfaces. A \textit{Beltrami differential} $\mu(z) \frac{d \bar z}{dz}$ on a Riemann surface $\mathcal{S}$ is an assignment to each chart $(U_{\alpha}, \phi_{\alpha})$ of an $L_{\infty}$ complex-valued function $\mu_{\alpha}$, defined on local parameter $z_{\alpha}$ such that 
\begin{equation}
\mu(z_{\alpha}) \frac{d \bar z_{\alpha}}{dz_{\alpha}} = \mu(z_{\beta}) \frac{d \bar z_{\beta}}{dz_{\beta}}
\end{equation}
on the domain which is also covered by another chart$(U_{\beta}, \phi_{\beta})$, where $\frac{d \bar z_{\beta}}{dz_{\beta}} = \frac{d}{d z_{\alpha}} \phi_{\alpha \beta}$ and $\phi_{\alpha \beta} = \phi_{\beta} \circ \phi_{\alpha}^{-1}$. Note that the Beltrami coefficient can be used to represent the Beltrami differential of the surface if the surface can be covered by a simple chart. As our work focuses on connected open surfaces, the Beltrami coefficient is used. 

\subsection{Linear Beltrami Solver}
Here we briefly describe the \textit{Linear Beltrami Solver (LBS)} method~\cite{lui2013texture} for reconstructing a quasiconformal map $f = u + iv$ from its Beltrami coefficient $\mu = \rho + i\tau$. Consider the real and imaginary parts of the Beltrami equation separately:
\begin{equation}
\mu(x, y)=\frac{\left(u_x-v_y\right)+i\left(v_x+u_y\right)}{\left(u_x+v_y\right)+i\left(v_x-u_y\right)}.
\end{equation}
From this formula, $v_x$ and $v_y$ can be expressed as a combination of $u_x$ and $u_y$:
\begin{equation}
\left(\begin{array}{c}
 v_y \\
-v_x
\end{array}\right)
=
A\left(\begin{array}{l}
u_x \\
u_y
\end{array}\right),
\end{equation}
where 
\begin{equation}
A=\left(\begin{array}{cc}
\alpha_1 & \alpha_2 \\
\alpha_2 & \alpha_3
\end{array}\right), \ \ \alpha_1=\frac{(\rho-1)^2+\tau^2}{1-\rho^2-\tau^2}  \ \ \alpha_2=-\frac{2 \tau}{1-\rho^2-\tau^2},  \ \ \alpha_3=\frac{1+2 \rho+\rho^2+\tau^2}{1-\rho^2-\tau^2}.
\end{equation}
Similarly, $u_x$ and $u_y$ can be expressed as
\begin{equation}
\left(\begin{array}{c}
-u_y \\
v_x
\end{array}\right)
=
A\left(\begin{array}{l}
v_x \\
v_y
\end{array}\right).
\end{equation}
Since 
\begin{equation}
\nabla \cdot\left(\begin{array}{c}
v_y \\
-v_x
\end{array}\right)= v_{xy}-v_{xy} =0 \ \ \text{ and } \ \  
\nabla \cdot\left(\begin{array}{c}
-u_y \\
u_x
\end{array}\right)= u_{xy}-u_{xy} =0,
\end{equation}
from the above equations, we have
\begin{equation}
\nabla \cdot\left(A\left(\begin{array}{l}
u_x \\
u_y
\end{array}\right)\right)=0 \ \ \text{ and } \ \  \nabla \cdot\left(A\left(\begin{array}{l}
v_x \\
v_y
\end{array}\right)\right)=0.
\end{equation}

In the discrete case, the above equations are two sparse symmetric positive definite linear systems. We can obtain $u_x$, $u_y$, $v_x$, and $v_y$ for any given Beltrami coefficient $\mu$. In this paper, we denote the above method by $f = \textbf{LBS}(\mu)$. 

\subsection{Density-equalizing maps}

\begin{figure}[t]
	\centering
	\includegraphics[width=0.8\textwidth]{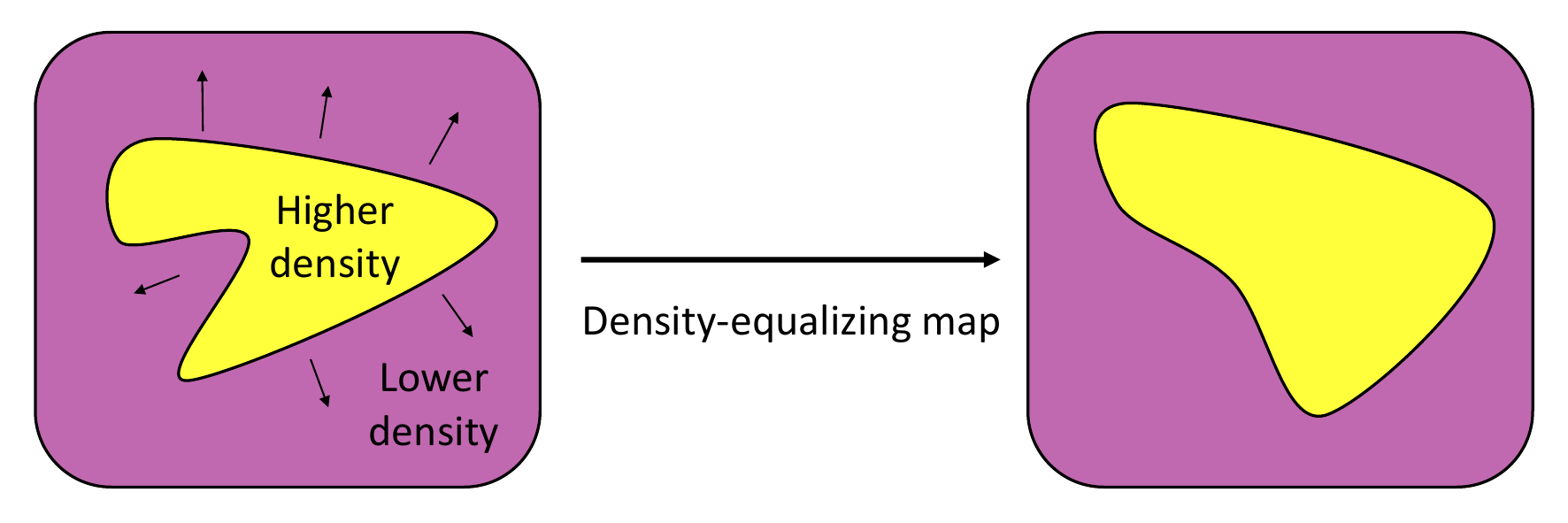}
	\caption{An illustration of the density-equalizing maps. The density diffusion creates a velocity field that enlarges regions with higher density and shrinks regions with lower density until the density is equalized.}
	\label{fig:Illustration} 
\end{figure}

Gastner and Newman~\cite{gastner2004diffusion} proposed a method for producing area cartograms based on density diffusion. Given a planar domain, a positive quantity called \textit{population} is defined at every point on it. A density field $\rho$ is then defined by the population per unit area. For instance, one can consider a geographical map of a country and define the quantity as the population of different regions. The density $\rho$ is then the population density of different regions. Density-equalizing maps aim to deform the domain by enlarging the regions with higher $\rho$ and shrinking the regions with lower $\rho$ (see Fig.~\ref{fig:Illustration} for an illustration). This can be achieved by equalizing $\rho$ through the advection equation
\begin{equation}\label{diffusion_equation}
    \frac{\partial \rho}{\partial t} +  \nabla \cdot \mathbf{j} = 0,
\end{equation}
where $\mathbf{j}$ is the density flux. By Fick's first law, the flux is given by
\begin{equation}
\mathbf{j} = - \nabla \rho.
\end{equation}
This yields the diffusion equation
\begin{equation}
\frac{\partial \rho}{\partial t} -  \Delta \rho = 0.
\end{equation}
By the definition of flux, we have $\mathbf{j} = \mathbf{v}(\mathbf{r},t) \rho$, where $\mathbf{v}(\mathbf{r},t)$ is the velocity of tracers carried by the flux. Therefore, we have
\begin{equation}
\mathbf{v}(\mathbf{r},t) = \frac{\mathbf{j}}{\rho} = - \frac{\nabla \rho}{\rho}.
\end{equation}
One can see that $\frac{\nabla \rho}{\rho}$ is independent of the absolute scale of $\rho$.

As we solve the diffusion equation to reach a steady state, the domain will be deformed according to the velocity field. Finally, the density of the domain will be equal everywhere. To describe the deformation of the domain, the displacement of the tracers is given by 
\begin{equation}
\mathbf{r}(t) = \mathbf{r}(0) + \int_{0}^{t}\mathbf{v}(\mathbf{r},\tau)d\tau,
\end{equation}
where $\mathbf{r}(0)$ is the initial state of the domain. When $t \rightarrow \infty$, the tracers $\mathbf{r}(t)$ produce the final 2D domain, which is density-equalizing. However, note that the domain will expand infinitely if there is no constraint on its boundary shape. To avoid infinite expansion, a large auxiliary region (also known as a \textit{sea}) surrounding the domain of interest is used. More specifically, by defining the density in the sea as the average density of the domain of interest and including the sea in the computation of the diffusion cartogram, the domain of interest will not expand infinitely. In~\cite{choi2018density}, Choi and Rycroft extended the idea of diffusion cartogram for surface mappings and proposed a finite-element approach for computing density-equalizing flattening maps for simply-connected open surfaces.

While the above approaches have been successfully applied to a wide range of problems, they are limited to 2D planar domains and simply-connected open surfaces but not general multiply-connected surfaces. Also, there is no theoretical guarantee for the bijectivity of the resulting density-equalizing maps in the computation. More specifically, there may be mesh fold-overs in the mapping results. Moreover, as the deformation by density-equalizing maps is merely based on the prescribed density, these methods do not preserve other local geometric structures of the surfaces. Our work aims to address these issues and achieve bijective density-equalizing quasiconformal maps.

\section{Proposed mathematical models} \label{sec: Proposed math module}

Let $\mathcal{S}$ be a multiply-connected open surface in $\mathbb{R}^3$, with $\partial \mathcal{S} = \gamma_0 -\gamma_1 - \dots -\gamma_k$ representing its boundaries, where $k \geq 1$, $\gamma_0$ is the outer boundary, and $\gamma_1, \dots, \gamma_k$ are the inner boundaries. Suppose a population is prescribed everywhere on $\mathcal{S}$. Our goal is to obtain an optimal quasiconformal map $f: \mathcal{S} \rightarrow \mathcal{D} \subset \mathbb{R}^2$ such that the final density (population per unit area) after applying $f$ becomes a constant. Here, $\mathcal{D}$ is a planar circular domain with unit outer radius, i.e. a unit disk with $k$ inner circular holes:
\begin{equation}\label{eqt:D}
    \mathcal{D} = \mathbb{D} \setminus \bigcup^{k}_{i=1} B_{r_i}(c_i),
\end{equation}
where $\mathbb D$ denotes the unit disk, $r_i \in \mathbb{R}$, $c_i \in \mathbb{C}$,  $B_{r_i}(c_i) = \{ z \in \mathbb D : |z-c_i|<r_i \}$ for $i = 1, \dots, k$, $\cup^{k}_{i = 1}B_{r_i}(c_i) \subset \mathbb D$, and
$B_{r_i}(c_i) \cap B_{r_j}(c_j)  = \emptyset$ for any $1 \leq i, j \leq k$ with $i \neq j$.

\subsection{Bijective density-equalizing quasiconformal (DEQ) map}

For ease of computation, we divide the desired optimal quasiconformal map $f: \mathcal{S} \rightarrow \mathcal{D}$ into two parts $f = \widetilde{f} \circ g_0$, where $g_0:\mathcal{S} \rightarrow \mathcal{D}^0 $ is a conformal flattening map of $\mathcal{S}$ onto a planar circular domain $\mathcal{D}^0$, and $\widetilde{f}: \mathcal{D}^0 \rightarrow \mathcal{D}$ is a planar quasiconformal map from $\mathcal{D}^0$ to an optimal planar circular domain $\mathcal{D}$. Then, the problem of finding $f$ can be simplified as the problem of finding $\widetilde{f}$ on the plane. Here, we propose a variational model for finding the optimal $\widetilde{f}$ that minimizes an energy functional $E_{\text{DEQ}}$ consisting of a density term and two Beltrami coefficient terms:
\begin{equation}
\label{eqt:DEQ}
    \widetilde{f}=\underset{g: \mathcal{D}^0 \rightarrow \mathcal{D}}{\arg \min } E_{\text{DEQ}}(g):=\underset{g: \mathcal{D}^0 \rightarrow \mathcal{D}}{\arg \min }\left\{\int_{\mathcal{D}^0} |\nabla \rho(g) |^2 + \alpha \int_{\mathcal{D}^0}| \mu(g)|^2+\beta \int_{\mathcal{D}^0}|\nabla \mu(g)|^2\right\}
\end{equation}
subject to 
\begin{align}
    & \left\|\mu(\widetilde{f})\right\|_{\infty}<1, \\
    & \left(\widetilde{f}\circ g_0\right) (\gamma_i) = \partial B_{r_i}(c_i) \quad \text{for} \quad 0 \leq i \leq k.
\end{align}
In this context, $\nabla \rho$ denotes the gradient of the density function $\rho$, $\mu(g)$ and $\mu(\widetilde{f})$ represent the Beltrami coefficients of $g$ and $\widetilde{f}$, respectively, and $\partial B_{r_0}(c_0)$ denotes the outer boundary. Here, $\alpha$ and $\beta$ are two user-defined nonnegative weighting parameters for controlling the quasiconformality. Specifically, the first term $\int_{\mathcal{D}^0} |\nabla \rho(g) |^2$ ensures that the spatial variation of the density function $\rho$ decreases. The second term $\int_{\mathcal{D}^0} |\mu(g)|^2$ helps reduce the quasiconformal distortion and ensure the bijectivity of the mapping. Meanwhile, the third term $\int_{\mathcal{D}^0} |\nabla \mu(g)|^2$ helps enhance the smoothness of the mapping. As our focus is on the density-equalizing effect of the mapping, both $\alpha$ and $\beta$ should be small. We call the resulting composition map $f = \widetilde{f} \circ g_0$ the \emph{bijective density-equalizing quasiconformal (DEQ) map} for $\mathcal{S}$.

The Jacobian $J_{\widetilde{f}}$ of $\widetilde{f}$ can be represented by
\begin{equation}
J_{\widetilde{f}} = \left|\frac{\partial \widetilde{f}}{\partial z}\right|^2(1-|\mu(\widetilde{f})|^2).
\end{equation}
One can see that the constraint on the norm of $\mu$ implies $J_{\widetilde{f}}>0$ everywhere. By the inverse function theorem, the mapping $\widetilde{f}$ is thus locally diffeomorphic everywhere. Meanwhile, $|J_{\widetilde{f}}|>0$ also implies that $\widetilde{f}$ induces a Riemannian metric on $\mathcal{S}$. More specifically, an auxiliary metric $|dz+\mu \overline{dz}|^2$ can be induced by the Beltrami coefficient $\mu$ on $\mathcal{D}^0$. Note that $\widetilde{f}$ is a conformal map under the auxiliary metric. By the Riemann mapping theorem for multiply-connected domains~\cite{henrici1993applied}, every multiply-connected Riemann surface with a valid Riemannian metric can be mapped by a conformal and one-to-one transformation onto a planar circular domain. Therefore, $\widetilde{f}$ is a global bijection on $\mathcal{D}^0$. It is easy to see that if $\widetilde{f}$ is locally diffeomorphic everywhere and globally bijective, $\widetilde{f}$ is a diffeomorphism. Consequently, the composition map $f = \widetilde{f} \circ g_0$ is also a diffeomorphism.

Moreover, a minimizer of the above optimization problem \eqref{eqt:DEQ} is guaranteed to exist by the following proposition.

\begin{proposition}\label{proposition1}
(Existence of the minimizer of $E_{\text{DEQ}}$)
    Let
    \begin{equation}
    \begin{aligned}
        \mathcal{A} = \{\nu \in C^1(\mathcal{D}^0): & \|D \nu\|_{\infty} \leq C_1 ; \ \|\nu\|_{\infty} \leq 1-\epsilon ; \  \nu \text{ is the Beltrami coefficient} \\ & \text{of a quasiconformal map} \}
    \end{aligned}
    \end{equation}
    for some $C_1>0$ and small $\epsilon > 0$, where $\mathcal{D}^0$ is a circular domain. Then $E_{\text{DEQ}}$ has a minimizer in $\mathcal{A}$. In fact, $\mathcal{A}$ is compact.
\end{proposition}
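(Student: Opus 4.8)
The plan is to prove the statement by the direct method of the calculus of variations, regarding $E_{\text{DEQ}}$ as a functional of the Beltrami coefficient $\nu = \mu(g)$ via the reconstruction $g = g[\nu]$. There are three ingredients to assemble: (i) $\mathcal{A}$ is sequentially compact in a suitable (uniform-convergence) topology; (ii) the map $\nu \mapsto g[\nu]$ is continuous, so that all quantities built from $g$ — in particular the transported density $\rho(g)$ — behave well under convergence in $\mathcal{A}$; and (iii) each of the three terms of $E_{\text{DEQ}}$ is lower semicontinuous along sequences in $\mathcal{A}$. Granting these, a minimizing sequence has a subsequence converging in $\mathcal{A}$ to a limit that realizes the infimum, which is exactly the assertion (and the compactness of $\mathcal{A}$ is established along the way).

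For compactness I would argue as follows. Every $\nu \in \mathcal{A}$ satisfies $\|\nu\|_{\infty} \le 1-\epsilon$ and $\|D\nu\|_{\infty} \le C_1$ on the bounded domain $\mathcal{D}^0$, so $\mathcal{A}$ is uniformly bounded and uniformly Lipschitz, hence equicontinuous. By the Arzelà--Ascoli theorem, any sequence $\{\nu_n\} \subset \mathcal{A}$ has a subsequence converging uniformly on $\overline{\mathcal{D}^0}$ to some $\nu^*$, which inherits the Lipschitz constant $C_1$ and the bound $\|\nu^*\|_{\infty} \le 1-\epsilon$. Since $\|\nu^*\|_{\infty} < 1$, the measurable Riemann mapping theorem shows $\nu^*$ is again the Beltrami coefficient of a quasiconformal map (this condition in the definition of $\mathcal{A}$ is in fact automatic once $\|\cdot\|_\infty<1$), so $\nu^* \in \mathcal{A}$ and $\mathcal{A}$ is compact. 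The one technical wrinkle here is that Arzelà--Ascoli delivers only $\nu^* \in W^{1,\infty}$, not $C^1$; this is cosmetic and is handled either by working with $W^{1,\infty}$ throughout (all that $E_{\text{DEQ}}$ needs is $\mu \in H^1$, via $\int|\nabla\mu|^2$), or by noting that $\|D\nu_n\|_\infty \le C_1$ gives weak-$*$ convergence of $D\nu_n$ in $L^\infty$ and hence $\|D\nu^*\|_\infty \le C_1$.

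Next I would invoke continuous dependence of a quasiconformal map on its Beltrami coefficient (the Ahlfors--Bers theorem / Beltrami holomorphic flow, as in~\cite{lui2012optimization}): the normalized solution $g[\nu]$ of the Beltrami equation with coefficient $\nu$ and the prescribed boundary correspondences depends continuously — locally uniformly, and in $W^{1,p}$ — on $\nu$. Hence, with $\nu_n \to \nu^*$ uniformly, $g_n := g[\nu_n] \to g^* := g[\nu^*]$ together with their Jacobians $J_{g_n} = |\partial_z g_n|^2(1-|\nu_n|^2)$, which stay uniformly bounded away from zero because $\|\nu_n\|_{\infty}\le 1-\epsilon$; consequently the transported density $\rho(g_n)$ and its gradient converge to those of $g^*$, so the term $\int_{\mathcal{D}^0}|\nabla\rho(g_n)|^2$ passes to the limit. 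The term $\alpha\int_{\mathcal{D}^0}|\nu_n|^2$ is continuous under uniform (hence $L^2$) convergence. The regularizer $\beta\int_{\mathcal{D}^0}|\nabla\nu_n|^2$ is weakly lower semicontinuous: $\{\nu_n\}$ is bounded in $H^1$ (by the $C^1$ bound on a bounded domain), so a further subsequence converges weakly in $H^1$ to $\nu^*$, giving $\int|\nabla\nu^*|^2 \le \liminf\int|\nabla\nu_n|^2$. Adding the three, $E_{\text{DEQ}}(g^*) \le \liminf_n E_{\text{DEQ}}(g_n)$; taking $\{\nu_n\}$ to be a minimizing sequence shows $\nu^* = \mu(g^*)$ attains the minimum.

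I expect the main obstacle to be the rigorous treatment of the density term $\int_{\mathcal{D}^0}|\nabla\rho(g)|^2$. The Beltrami terms are "soft" and are controlled entirely by Arzelà--Ascoli and standard lower-semicontinuity arguments, but $\rho$ is defined only implicitly through the deformation induced by $g$ (the diffusion/advection process of Section~\ref{sec:math}), so one must pin down in which function space the assignment $g \mapsto \rho(g)$, and then $g \mapsto \nabla\rho(g)$, is continuous, and verify that the mode of convergence $g_n \to g^*$ furnished by the Ahlfors--Bers theorem is strong enough for this — most cleanly by establishing $C^1$-type convergence of $g_n$ away from the (non-degenerate) Jacobian, which the uniform lower bound $1-(1-\epsilon)^2$ on $1-|\nu_n|^2$ makes available. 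The $C^1$-versus-Lipschitz point in the compactness step is a distant second and, as noted, is purely a matter of bookkeeping.
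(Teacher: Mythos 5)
Your proposal is correct in substance but reaches the conclusion by a genuinely different route from the paper. For the compactness of $\mathcal{A}$, the paper (following~\cite{lam2014landmark}) works in the metric $\|\nu\|_s = \|\nu\|_{\infty} + \|D\nu\|_{\infty}$ and proves compactness as \emph{completeness plus total boundedness}, the latter via an explicit finite $\eta$-net built from smoothed tent functions on a regular grid covering $\mathcal{D}^0$; you instead invoke Arzel\`a--Ascoli on the uniform bounds $\|\nu\|_\infty \le 1-\epsilon$, $\|D\nu\|_\infty \le C_1$ to get sequential compactness in the sup norm. Your route is shorter and more standard, but it yields convergence only in $C^0$ (with the limit a priori only Lipschitz, and $D\nu_n$ converging only weak-$*$), whereas the paper's argument delivers compactness in the stronger $C^1$-type metric, which is the literal second assertion of the proposition; your workaround (pass to $W^{1,\infty}$, or accept weak-$*$ control of the derivative) is adequate for the direct method but technically changes the class $\mathcal{A}$. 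On the existence step the roles reverse: the paper disposes of it in one sentence (``$\nabla\rho$ is an $L^2$ function and $E_{\text{DEQ}}$ is continuous on the compact set $\mathcal{A}$''), while you run the direct method explicitly --- continuity of $\nu \mapsto g[\nu]$ via Ahlfors--Bers, uniform nondegeneracy of the Jacobian from $\|\nu\|_\infty \le 1-\epsilon$, convergence of the transported density term, and weak lower semicontinuity of $\int|\nabla\nu|^2$ --- and you correctly single out the density term $\int|\nabla\rho(g)|^2$ as the one genuinely delicate point, which the paper leaves entirely implicit. Neither treatment is fully rigorous on that term, but yours at least names the gap and indicates how to close it.
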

    
\begin{proof}
    Here we follow the approach in~\cite{lam2014landmark} to prove that $\mathcal{A}$ is complete and totally bounded. First, note that the existence of conformal maps (with Beltrami coefficient $\nu = 0$) implies that $\mathcal{A}$ is non-empty. Now, we prove that $\mathcal{A}$ is complete. Define $\|\mu\|_s = \|\mu\|_{\infty} + \|D \mu\|_{\infty}$. Let $\{\nu_n\}_{n=1}^{\infty}$ be a Cauchy sequence in $\mathcal{A}$ under the norm $\|\cdot\|_s$. 
    It is worth noting that $\{\nu_n\}_{n=1}^{\infty}$ and $\{D\nu_n\}_{n=1}^{\infty}$ are two Cauchy sequences with respect to the $\|\cdot \|_{\infty}$ norm. Since $\mathcal{A} \subset W^{1,\infty}\left( \mathcal{D}^0 \right)$ and $ W^{1,\infty}$ is complete, $D\nu_n$ converges uniformly to some $h \in W^{1,\infty}\left( \mathcal{D}^0 \right)$. Moreover, $h$ is continuous since every $D\nu_n$ is continuous. Besides, $\nu_n$ is uniformly convergent in $ W^{1,\infty}$ as well. Based on the above, $\nu_n \rightarrow \nu$ and $D \nu_n \rightarrow D\nu$ uniformly for some $\nu \in C^1\left( \mathcal{D}^0 \right)$. Additionally, $\|\nu_n \|_{\infty} \leq 1-\epsilon$ and $\| D \nu_n \|_{\infty} \leq C_1$ for every $n$ imply that $\| \nu \|_{\infty} \leq 1-\epsilon$ and $\| D \nu \|_{\infty} \leq C_1$, respectively. Therefore, $\nu \in \mathcal{A}$. Consequently, $\mathcal{A}$ is complete.
    
    Next, we prove that $\mathcal{A}$ is totally bounded. To achieve this, we will show that for any given $\eta > 0$, there exists a finite collection of open balls with radius $\eta$. These balls will have their centers located within $\mathcal{A}$, and their union will cover the entire $\mathcal{A}$. First, we construct a net for $\mathcal{A}$. Note that $\mathcal{D}^0$ represents a planar circular domain with circular holes. Let $\Bar{\Omega}$ be the unit rectangle and $\mathcal{D}^0 \subset \Bar{\Omega}$. Now, we construct a regular grid on $\Bar{\Omega}$ with edge length $\frac{1}{m}$. After re-indexing, we denote the grid points as $\{p_i\}_{i \in I}$. Next, we define a smooth tent function on each grid point as follows: $ \mathcal{T}_{m,n,L} = \{ \left.\frac{k}{n}\Tilde{\delta}^{L}_{p_i} \right|_{\mathcal{D}^0}  \}_{i \in I,0 \leq k \leq n} $. In this context, $\Tilde{\delta}^{L}_{p_i}$ represents the smooth approximation of the tent function $\delta_{p_i}$ subject to the conditions $\| D \Tilde{\delta}^L_{p_i} - m\mathbf{Id} \| _{\infty} + \| \Tilde{\delta}^L_{p_i} - \delta_{p_i} \| _{\infty} \leq \frac{1}{L}$ and $\|\Tilde{\delta}^L_{p_i} \| _{\infty} \leq C_1$. Now we consider: $\mathcal{B}_{m,n,L} = \{ \tilde \nu \in \mathcal{A}: \tilde \nu = \sum_{i} T_i, \  T_i \in  \mathcal{T}_{m,n,L} \}$. It is important to note that due to the finite number of grid points, the functions in $\mathcal{B}_{m,n,L}$ are also finite in number. Finally, we construct a finite union cover of $\mathcal{A}$. For any $\eta > 0$ and any $\nu \in \mathcal{A}$, we can choose some sufficiently large numbers $m$, $n$, $L$ such that there exists a sum of tent functions $ \Bar{\nu} = \sum_{i}  \left.\frac{k}{n}\delta_{p_i} \right|_{\mathcal{D}^0}$ and $\tilde \nu \in \mathcal{B}_{m,n,L}$ satisfying $\| \nu - \Bar{\nu} \|_{\infty} < \frac{\eta}{4}$, $\| \Bar{\nu} - \tilde \nu \|_{\infty} < \frac{\eta}{4}$, $\| D\nu - D\Bar{\nu} \|_{\infty} < \frac{\eta}{4}$, and $\| D\Bar{\nu} - D\tilde \nu \|_{\infty} < \frac{\eta}{4}$. Then we have 
    $\| \nu - \tilde \nu \|_{s} \leq \| \nu - \Bar{\nu}\|_{s} + \| \Bar{\nu} - \tilde \nu \|_{s} < \eta$. Hence, $\mathcal{A} \subset \cup_{\Tilde{\nu}\in \mathcal{B}_{m,n,L}} B_{\eta}(\Tilde{\nu})$. Here, $B_{\eta}(\Tilde{\nu})$ is the open neighborhood ball of $\Tilde{\nu}$ with radius $\eta$. Consequently, $\mathcal{A}$ is a totally bounded set.

    Since $\mathcal{A}$ is complete and totally bounded, $\mathcal{A}$ is compact. Also, note that $\nabla \rho $ is a $L^2$ function and $E_{\text{DEQ}}$ is continuous in $\mathcal{A}$. Hence, $E_{\text{DEQ}}$ has a minimizer in $\mathcal{A}$.
\end{proof}

With the theoretical existence of the minimizer of \eqref{eqt:DEQ} proved, in Section~\ref{sect:DEQ} we develop an iterative algorithm to solve for the optimal planar circular domain $\mathcal{D}$ and the optimal density-equalizing quasiconformal map $f:\mathcal{S}\to \mathcal{D}$.

\subsection{Bijective landmark-matching density-equalizing quasiconformal (LDEQ) map}
In many practical situations, it is desirable to compute flattening maps with certain feature points on a surface mapped to some prescribed positions. We can achieve this by incorporating landmark-matching constraints into the bijective density-equalizing quasiconformal model.

Let $\{p_i\}^m_{i = 1}$ be a set of feature landmarks defined on $\mathcal{S}$ and $\{ q_i\}^m_{i = 1}$ be the corresponding target positions defined on the plane. Our objective is to find an optimal planar domain $\mathcal{D}$ and a \emph{bijective landmark-matching density-equalizing quasiconformal (LDEQ)} map $f: \mathcal{S} \rightarrow \mathcal{D}$ satisfying $f(p_i) = q_i$ for all $i = 1,\dots,m$. 

To achieve this, we again consider $f = \widetilde{f} \circ g_0$ where $g_0: \mathcal{S} \to \mathcal{D}^0$ is a conformal flattening map of $\mathcal{S}$ onto a planar circular domain $\mathcal{D}^0$ and $\widetilde{f}: \mathcal{D}^0 \to \mathcal{D}$ is a planar quasiconformal map. We propose a variational approach to obtain $\widetilde{f}$ by minimizing the following energy functional $E_{\text{LDEQ}}$:
\begin{equation}
\label{eqt:LDEQ}
    \widetilde{f}=\underset{g: \mathcal{D}^0 \rightarrow \mathcal{D}}{\arg \min } E_{\text{LDEQ}}(g):=\underset{g: \mathcal{D}^0 \rightarrow \mathcal{D}}{\arg \min }\left\{\int_{\mathcal{D}^0} |\nabla \rho(g) |^2 + \alpha \int_{\mathcal{D}^0}| \mu(g)|^2+\beta \int_{\mathcal{D}^0}|\nabla \mu(g)|^2\right\}
\end{equation}
subject to:
\begin{align}
    & \left(\widetilde{f}\circ g_0\right)(p_i) = q_i \quad \text{for} \quad 1 \leq i \leq m, \\
    & \left(\widetilde{f}\circ g_0\right)(\gamma_i) = \partial B_{r_i}(c_i) \quad \text{for} \quad 0 \leq i \leq k, \\
    & \|\mu(\widetilde{f})\|_{\infty}<1.
\end{align}
Again, here $\mu(\widetilde{f})$ and $\mu(g)$ denote the Beltrami coefficient of $\widetilde{f}$ and $g$, respectively, $\partial B_{r_0}(c_0)$ denotes the outer boundary, and $\alpha$ and $\beta$ are two user-defined small nonnegative weighting parameters for controlling the quasiconformality.

Analogous to the previous model, the existence of the solution to the optimization problem \eqref{eqt:LDEQ} can be proved as follows.
\begin{proposition}\label{proposition2}
(Existence of the minimizer of $E_{\text{LDEQ}}$)
    Let
    \begin{equation}
    \begin{aligned}
        \mathcal{A} = \{ \nu \in C^1(\mathcal{D}^0): & \|D \nu\|_{\infty} \leq C_1 ;  \ \|\nu\|_{\infty} \leq 1-\epsilon ;  \ \nu \text{ is the Beltrami coefficient of a }\\& \text{quasiconformal map} \enspace g^{\nu} \enspace \text{with} \enspace g^{\nu}(z_i) = w_i \text{ for all }  1 \leq i \leq m \}
        \end{aligned}
    \end{equation}
    for some $C_1>0$ and small $\epsilon > 0$, where $\mathcal{D}^0$ is a circular domain, $\{z_i\}_{i=1}^m$ is a set of points on $\mathcal{D}^0$, and $\{w_i\}_{i=1}^m$ is a set of points on $\mathcal{D}$. Then $E_{\text{LDEQ}}$ has a minimizer in $\mathcal{A}$. In fact, $\mathcal{A}$ is compact.
\end{proposition}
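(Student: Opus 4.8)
The plan is to mirror the structure of the proof of Proposition~\ref{proposition1}, since the only new ingredient in Proposition~\ref{proposition2} is the additional landmark constraint $g^{\nu}(z_i) = w_i$ built into the definition of $\mathcal{A}$. As before, the strategy is: (i) show $\mathcal{A}$ is nonempty; (ii) show $\mathcal{A}$ is complete under the norm $\|\mu\|_s = \|\mu\|_{\infty} + \|D\mu\|_{\infty}$; (iii) show $\mathcal{A}$ is totally bounded; conclude $\mathcal{A}$ is compact; and then invoke continuity of $E_{\text{LDEQ}}$ on $\mathcal{A}$ together with the fact that $\nabla\rho \in L^2$ to get a minimizer.

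For nonemptiness, I would note that one can exhibit at least one quasiconformal map of $\mathcal{D}^0$ onto $\mathcal{D}$ satisfying the finitely many interpolation conditions $g^{\nu}(z_i) = w_i$ — for instance, a smooth diffeomorphism realizing the prescribed correspondence, whose Beltrami coefficient lies in $C^1$ with controlled $\|D\nu\|_\infty$ and $\|\nu\|_\infty < 1-\epsilon$ after rescaling the constants $C_1$ and $\epsilon$ appropriately — so $\mathcal{A} \neq \emptyset$. For completeness, the argument of Proposition~\ref{proposition1} goes through essentially verbatim: a Cauchy sequence $\{\nu_n\}$ in $(\mathcal{A}, \|\cdot\|_s)$ yields Cauchy sequences $\{\nu_n\}$, $\{D\nu_n\}$ in $\|\cdot\|_\infty$, which by completeness of $W^{1,\infty}(\mathcal{D}^0)$ converge uniformly to some $\nu \in C^1(\mathcal{D}^0)$ with $\|\nu\|_\infty \le 1-\epsilon$ and $\|D\nu\|_\infty \le C_1$. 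The extra point is to verify that the limit $\nu$ is still the Beltrami coefficient of a quasiconformal map $g^{\nu}$ satisfying the landmark conditions: here I would appeal to the continuous dependence of quasiconformal maps on their Beltrami coefficients (e.g., via the measurable Riemann mapping theorem / Beltrami holomorphic flow), normalizing the maps $g^{\nu_n}$ suitably, so that $g^{\nu_n} \to g^{\nu}$ locally uniformly and hence $g^{\nu}(z_i) = \lim_n g^{\nu_n}(z_i) = w_i$. Thus $\nu \in \mathcal{A}$ and $\mathcal{A}$ is complete. For total boundedness, the net construction with the regular grid on a bounding rectangle $\bar\Omega \supset \mathcal{D}^0$, the smooth tent functions $\tilde\delta^L_{p_i}$, and the finite families $\mathcal{B}_{m,n,L}$ is identical to that in Proposition~\ref{proposition1}; one only needs to ensure the approximating elements $\tilde\nu$ are themselves taken in $\mathcal{A}$, i.e., correspond to quasiconformal maps matching the landmarks, which can again be arranged by the same continuity/normalization argument since the constraint is closed.

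The main obstacle I anticipate is precisely the interaction between the landmark constraint and the limiting/approximation arguments: one must argue that the set of Beltrami coefficients whose associated (appropriately normalized) quasiconformal maps satisfy $g(z_i)=w_i$ is closed under uniform limits of $\nu$, and that it is ``dense enough'' near any given $\nu\in\mathcal{A}$ for the total-boundedness net to actually sit inside $\mathcal{A}$. This requires the well-posedness of the correspondence $\nu \mapsto g^{\nu}$ with suitable normalization (three boundary points, or the landmark points themselves), and the continuity of this correspondence in the sup-norm on $\nu$. Once that continuity is in hand, the landmark conditions, being finitely many closed conditions, pass to the limit and the rest of the argument is a routine repetition of Proposition~\ref{proposition1}. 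Finally, with $\mathcal{A}$ compact and $E_{\text{LDEQ}}$ continuous in $\nu$ (the density term being an $L^2$ functional depending continuously on the map, and the Beltrami terms depending continuously on $\nu$ in $\|\cdot\|_s$), the existence of a minimizer follows from the extreme value theorem.
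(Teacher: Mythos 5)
Your overall structure (nonemptiness, completeness under $\|\cdot\|_s$, total boundedness, compactness, then continuity of $E_{\text{LDEQ}}$) is the same as the paper's, and your completeness argument is essentially identical to the paper's: the key point that the landmark conditions survive the limit because $\nu_n \to \nu$ uniformly forces $g^{\nu_n} \to g^{\nu}$ locally uniformly, hence $g^{\nu}(z_i) = \lim_n g^{\nu_n}(z_i) = w_i$, is exactly what the paper uses.

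The gap is in your treatment of total boundedness. You propose to rerun the tent-function net construction of Proposition~\ref{proposition1} and to ensure the approximating elements $\tilde\nu$ are themselves taken in $\mathcal{A}$, i.e.\ correspond to quasiconformal maps matching the landmarks, claiming this can be arranged ``by the same continuity/normalization argument since the constraint is closed.'' That step is not justified: a finite sum of smoothed tent functions is a specific Beltrami coefficient, and there is no reason its associated normalized quasiconformal map should hit the finitely many prescribed targets $w_i$; closedness of the constraint set says nothing about your ability to place net points inside it, and the landmark condition is a nontrivial restriction (roughly codimension $2m$) that a generic net element will violate. The paper sidesteps this entirely with a one-line observation: $\mathcal{A}$ is a subset of the set $\widetilde{\mathcal{A}}$ from Proposition~\ref{proposition1} (the same class without the landmark constraint), $\widetilde{\mathcal{A}}$ was already shown there to be totally bounded, and any subset of a totally bounded set is totally bounded (cover $\widetilde{\mathcal{A}}$ by finitely many $\eta/2$-balls and recenter the ones meeting $\mathcal{A}$ at points of $\mathcal{A}$). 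With that replacement your argument closes; as written, the total-boundedness step is the one place the proof does not go through.
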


\begin{proof}
    Note that $\mathcal{A} \neq \emptyset$. Now, we prove that $\mathcal{A}$ is complete. Define $\|\mu\|_s = \|\mu\|_{\infty} + \|D \mu\|_{\infty}$. Let $\{\nu_n\}_{n=1}^{\infty}$ be a Cauchy sequence in $\mathcal{A}$ under the norm $\|\cdot\|_s$. Then, $\{D\nu_n\}_{n=1}^{\infty}$ is also a Cauchy sequence with respect to the $\|\cdot \|_{\infty}$ norm. Since $\mathcal{A} \subset W^{1,\infty}(\mathcal{D}^0)$ and $W^{1,\infty}$ is complete, $D\nu_n \rightarrow h$ uniformly for some $h\in W^{1,\infty}(\mathcal{D}^0)$. Since $D\nu_n$ is continuous, $h$ is also continuous. Moreover, $\nu_n$ is convergent in $ W^{1,\infty}(\mathcal{D}^0)$ as well. Hence, $\nu_n \rightarrow \nu$ and $D \nu_n \rightarrow D\nu$ uniformly for some $\nu \in C^1(\mathcal{D}^0)$. In addition, $\|\nu_n \|_{\infty} \leq 1-\epsilon$ for all $n$ implies that $\| \nu \|_{\infty} \leq 1-\epsilon$. Also, $\| D \nu_n \|_{\infty} \leq C_1$ implies that $\| D \nu \|_{\infty} \leq C_1$. Since $\nu_n \rightarrow \nu$ uniformly, $g^{\nu_n} \rightarrow g^{\nu}$ locally uniformly. More specifically, we have $g^{\nu}(z_i) = w_i$ for $i = 1,2,\dots ,m$. Therefore, $\nu \in \mathcal{A}$. Consequently, $\mathcal{A}$ is Cauchy complete.

    Next, we prove that $\mathcal{A}$ is totally bounded. It is easy to see that $\mathcal{A}$ is a subset of $\widetilde{\mathcal{A}} = \{\nu \in C^1(\mathcal{D}^0): \|D \nu\|_{\infty} \leq C_1 ; \ \|\nu\|_{\infty} \leq 1-\epsilon; \  \nu$ is the Beltrami coefficient of a quasiconformal map$\}$, and we have already proved that $\widetilde{\mathcal{A}}$ is totally bounded in the proof of Proposition~\ref{proposition1}. Since any subset of a totally bounded set is totally bounded, $\mathcal{A}$ is totally bounded.

    As a result, $\mathcal{A}$ is compact. Therefore, $E_{\text{LDEQ}}$ has a minimizer in $\mathcal{A}$.
\end{proof}

In Section~\ref{sect:LDEQ}, we develop an iterative algorithm for finding the optimal domain $\mathcal{D}$ and the landmark-matching density-equalizing quasiconformal map $f:\mathcal{S}\to \mathcal{D}$. 

\section{Proposed algorithms}\label{sec:main}
Consider a given multiply-connected open surface $\mathcal{S}$ in $\mathbb R^3$ with $\partial S = \gamma_0-\gamma_1 - \dots - \gamma_k$ discretized as a triangle mesh $(\mathcal{V}, \mathcal{E}, \mathcal{F})$, where $\mathcal{V}$ is the set of vertices, $\mathcal{E}$ is the set of edges, and $\mathcal{F}$ is the set of triangular faces. Suppose a population is assigned to each triangle face of the mesh. We first apply an initial conformal flattening map to map $\mathcal{S}$ onto a planar circular domain. The face density can then be defined as $\rho^{\mathcal{F}} = \frac{\text{Population}}{\text{Area of the triangle}}$ on the circular domain. One can also define the point density $\rho^{\mathcal{V}}$ by considering the 1-ring face neighborhood of each vertex (see Appendix~\ref{appendix:density} for the detailed formulation).

Below, we develop two algorithms based on the models~\eqref{eqt:DEQ} and~\eqref{eqt:LDEQ} for computing bijective density-equalizing quasiconformal (DEQ) maps and bijective landmark-matching density-equalizing quasiconformal (LDEQ) maps.

\subsection{Bijective density-equalizing quasiconformal map}\label{sect:DEQ}

Recall that our objective is to find a circular domain $\mathcal{D} \in \mathbb R^2$ and a diffeomorphism $f:\mathcal{S} \rightarrow \mathcal{D}$ that satisfies two criteria: first, $f$ should be a bijective density-equalizing map between $\mathcal{S}$ and $\mathcal{D}$, and second, it should minimize the local geometric distortion. Equivalently, we want to find an optimal Beltrami coefficient such that the diffeomorphism associated with it satisfies the two criteria above. Due to the presence of higher-order derivative terms, it is challenging to directly solve the optimization problem~\eqref{eqt:DEQ} to get the mapping. Instead, we consider using the Beltrami coefficient as the variable of the optimization problem. The variational problem \eqref{eqt:DEQ} for $\widetilde{f}$ can be transformed into the problem of computing the optimal Beltrami coefficient $\nu^{*}$ that minimizes the following energy functional $E_{\text{DEQ}}$:
\begin{equation}\label{eqt:DEQ_nu}
    E_{\text{DEQ}}(\nu):=\int_{\mathcal{D}^0} |\nabla \rho(g) |^2 + \alpha \int_{\mathcal{D}^0}| \nu|^2+\beta \int_{\mathcal{D}^0}|\nabla \nu|^2
\end{equation}
subject to  
\begin{align}
&\|\nu\|_{\infty}  <1, \\
     &\nu = \mu(g),
\end{align}
where $\mathcal{D}^0$ is the initial flatten circular domain, $g$ is a quasiconformal map, $\mu(g)$ is the Beltrami coefficient of $g$, and $\rho$ is the density function given by the population defined on $\mathcal{S}$.  

Below, we develop an algorithm for finding the optimal map based on the above formulation. Our proposed algorithm involves three major components:
\begin{itemize}
    \item \textbf{Initial conformal flattening map}: We first conformally flatten $\mathcal{S}$ onto a planar circular domain $\mathcal{D}^0$ with $k$ inner circular holes (Section~\ref{sect:initial}).
    \item \textbf{Geometry modification}: We compute a modification mapping and obtain a new circular domain with the size and position of the inner boundaries adjusted (Section~\ref{sect:GM}).
    \item \textbf{Beltrami density-equalizing descent}: We search for a density-equalizing descent direction for the Beltrami coefficient on the new circular domain and reconstruct an updated mapping using it  (Section~\ref{sect:BDED}).
\end{itemize}

\subsubsection{Initial conformal flattening map}\label{sect:initial}
To begin, we compute an initial flattening map $g_0:\mathcal{S} \to \mathcal{D}^0$, where $\mathcal{D}^0 = \mathbb{D} \setminus \cup^{k}_{i=1} B_{r_i}(c_i)$ is a planar circular domain with $k$ inner circular holes as described in Eq.~\eqref{eqt:D}, where $r_i$ and $c_i$ are the radius and center of the $i$-th inner circular hole of $\mathcal{D}^0$, and $g_0(\gamma_i) = \partial B_{r_i}(c_i)$ for all $i = 1, \dots, k$.

In particular, here we look for a conformal flattening map as the initialization. Recall that our goal is to achieve a density-equalizing map with reduced quasiconformal distortion. By starting with a conformal parameterization, we can ensure that the initial quasiconformal distortion is low and the triangulations are regular. This facilitates the subsequent computation of the density-equalizing quasiconformal map. 

To achieve this, various existing parameterization methods for multiply-connected open surfaces can be applied. For instance, the Koebe's iteration method~\cite{koebe1910konforme} can be used.

\subsubsection{Geometry modification of the circular domain}\label{sect:GM}
After getting the initial flattening map, we can then look for a density-equalizing map on the plane. However, it is worth noting that while all inner holes of the flattened domain $\mathcal{D}^0$ are perfectly circular, their size and position may not be optimal for the density diffusion process. To improve the accuracy and efficiency of the density-equalizing map, we consider modifying the geometry of $\mathcal{D}^0$ by changing the centers and radii of its inner circular holes. More specifically, our geometry modification approach involves two processes: \emph{domain alteration} and \emph{domain reconstruction} (see Fig.~\ref{fig:illustration_gm}). After the modification, we get a new circular domain $\mathcal{D}^1$ with the position and size of the circular holes updated. 

\begin{figure}[t]
    \centering
    \includegraphics[width=\textwidth]{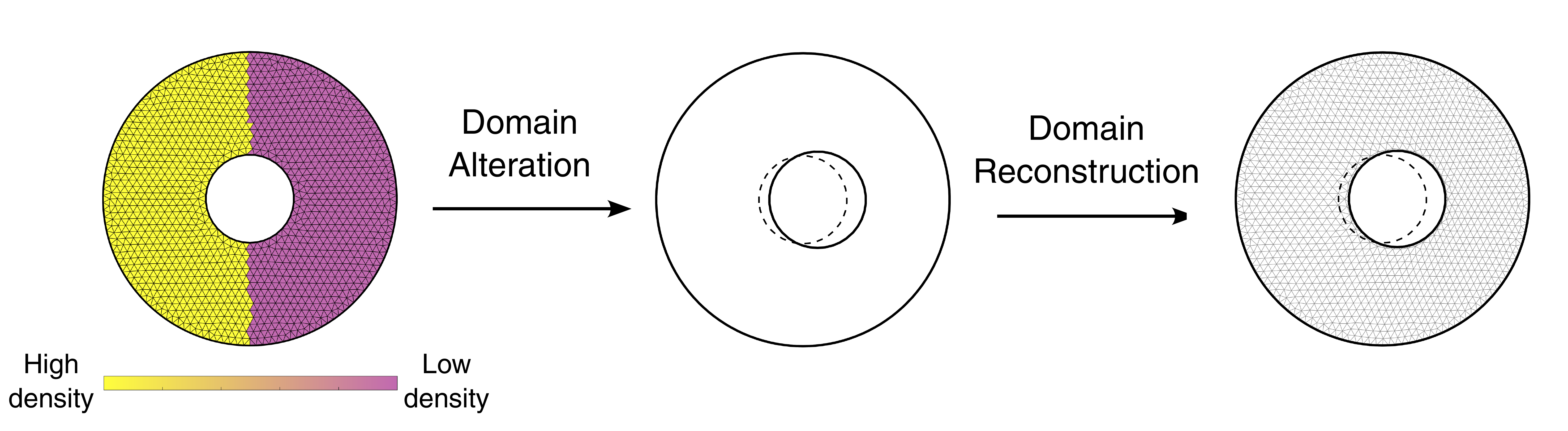}
    \caption{An illustration of the geometry modification (GM) algorithm. Given a planar circular domain, we first change the size and position of the inner circular holes based on the density gradient. Then, we reconstruct a smooth quasiconformal map based on the updated boundaries. Here, the dashed curve represents the previous inner boundary, and the solid curve represents its updated position.}
    \label{fig:illustration_gm}
\end{figure}

We first consider the domain alteration, which is performed on the boundaries of $\mathcal{D}^0$. In this step, our goal is to modify the boundaries $\partial \mathcal{D}^0$ and get a new set of circular holes $\partial \mathcal{D}^1$ such that they are more suitable for the subsequent density-equalizing process. Therefore, the holes $B_{r_i}(c_i) \ (i = 1,2,\dots,k)$ should deform naturally based on the density gradient in the domain while remaining to be circular. In other words, the deformation of each hole must be a combination of translation, scaling, and rotation, where the translation of the hole can be considered as a translation of the hole center, the scaling can be considered as a movement of all points along the radial direction, and the rotation can be considered as a movement of all points along the tangential direction. This motivates us to consider minimizing the following domain modification functional:
\begin{equation}\label{eqt:modify_function}
    E_{\text{bdy}}(\nu,\mathbf{c},\mathbf{r},\mathbf{T}):=\int_{\partial \mathcal{D}^0} |\nabla \rho(\widetilde{g},\mathbf{c},\mathbf{r},\mathbf{T}) |^2 + \alpha \int_{\partial \mathcal{D}^0}| \nu(\mathbf{c},\mathbf{r},\mathbf{T})|^2+\beta \int_{\partial \mathcal{D}^0}|\nabla \nu(\mathbf{c},\mathbf{r},\mathbf{T})|^2,
\end{equation}
subject to $\|\nu\|_{\infty}<1$ and $\nu = \mu(\widetilde{g})$. Here, $\widetilde{g}: \mathcal{D}^0 \rightarrow \mathcal{D}^1$ is the desired modification map, $\mathbf{c}$ is the set of centers of all inner boundaries of $\mathcal{D}^0$, $\mathbf{r}$ is the set of radii of all inner boundaries, and $\mathbf{T}$ is the set of tangent vectors on all inner boundaries. More specifically, $\mu(\widetilde{g})$ is the Beltrami coefficient of $\widetilde{g}$ and $\rho(\widetilde{g},\mathbf{c},\mathbf{r},\mathbf{T})$ is the density function. 
For simplicity, we denote
\begin{align}
    E^1_{\text{bdy}} &= \int_{\partial \mathcal{D}^0} |\nabla \rho(\widetilde{g},\mathbf{c},\mathbf{r},\mathbf{T}) |^2, \\ E^2_{\text{bdy}} &=  \int_{\partial \mathcal{D}^0}| \nu(\mathbf{c},\mathbf{r},\mathbf{T})|^2, \\ E^3_{\text{bdy}} &=  \int_{\partial \mathcal{D}^0}|\nabla \nu(\mathbf{c},\mathbf{r},\mathbf{T})|^2.
\end{align}

Now, we first compute the alteration of $E^1_{\text{bdy}}$. By the diffusion equation and Fick's law, the diffusion-based velocity is given by: 
\begin{equation} \label{eqt:rho_velocity}
\mathbf{v} = -\frac{\nabla \rho^{\mathcal{V}}}{\rho^{\mathcal{V}}},
\end{equation}
where $\rho^{\mathcal{V}}$ is the vertex density (see Appendix~\ref{appendix:density} for the discretization and computation).

Instead of using the gradient, we utilize the velocity $\mathbf{v}$, which points towards the global minimum, as the direction of motion. Therefore, when deforming the inner boundaries, we aim to find suitable directions for the centers and boundary points that are in proximity to the velocity vector $\mathbf{v}$. This ensures that the points move towards the global minimum. The minimization of the energy $E^1_{\text{bdy}}$ can be interpreted as the minimization of the following energy:
\begin{equation}\label{eq:move}
    \widetilde{E}^1_{\text{bdy}}(\mathbf{c},\mathbf{r},\mathbf{T}) = \sum_{i=1}^k \int_{\partial B_{r_{i}}(c_i)}  |v_i-\mathbf{v}|^2 = \sum_{i=1}^k \sum^{n_i}_{j = 1}  |v_{i,j}-\mathbf{v}_{i,j}|^2
\end{equation}
subject to the constraint that all boundaries $\{ \partial B_{r_i}(c_i)\}^k_{i = 1}$ remain circular after the deformation. Here, $n_i$ represents the number of vertices on the boundary $\partial {B_{r_i}(c_i)}$, and $v_{i,j}$ and $\mathbf{v}_{i,j}$ denote the velocity $v$ and $\mathbf{v}$ at the $j$-th vertex on the $i$-th circular hole respectively.

Now, for any point on an inner circular hole, it can be expressed as $c  + r e^{i\theta}$ for some $c,r,\theta$. Similarly, its updated position under a deformation can be expressed as $(c + \Delta c) + (r+ \Delta r) e^{i(\theta + \Delta \theta)}$ for some $\Delta c,\Delta r,\Delta \theta$. Therefore, the velocity $v_i$ at any boundary point on the $i$-th inner circular hole can be decomposed into three distinct components: a translation of the hole center, a scaling of the hole along the radial direction of the hole, and a rotation along the tangential direction of the hole. In other words, we have 
\begin{equation}
v_i = v_i^c +  v_i^r +  v_i^T,
\end{equation}
where $v_i^c$ is the translational velocity associated with the center, $v_i^r$ is the radial velocity, and $v_i^T$ is the tangential velocity. We can then rewrite Eq.~\eqref{eq:move} as follows:
\begin{equation}\label{eq:velocity}
\begin{aligned}
    &\widetilde{E}^1_{\text{bdy}}(\mathbf{c},\mathbf{r},\mathbf{T}) \\ = &\sum^k_{i = 1} \left(\int_{ \partial B_{r_i}(c_i)} |v_i^c - \mathbf{v}^c_i|^2 + \int_{\partial B_{r_i}(c_i)}|v_i^r - \mathbf{v}_i^r|^2 +  \int_{\partial B_{r_i}(c_i)}|v_i^T - \mathbf{v}_i^T|^2\right) \\
     = &\sum^k_{i = 1} \sum^{n_i}_{j = 1} \left( |v_{i,j}^c - \mathbf{v}^c_{i,j}|^2 +  |v^r_{i,j} - \mathbf{v}^r_{i,j}|^2 +  |v^T_{i,j} - \mathbf{v}^T_{i,j}|^2 \right),
\end{aligned}
\end{equation}
where ${v}^c_{i,j}$, ${v}^r_{i,j}$, ${v}^T_{i,j}$ and $\mathbf{v}^c_{i,j}$, $\mathbf{v}^r_{i,j}$, $\mathbf{v}^T_{i,j}$ denote the translational, radial and tangential velocity components of $v$ and $\mathbf{v}$ at the $j$-th vertex on the $i$-th inner circular hole respectively.

To ensure that the inner boundary $\partial B_{r_i}(c_i)$ is still a circular hole after translation, the translation velocity $v_{i,j}^c$ should be the same for all boundary points, and hence we can simply denote it as $v_i^c$ without ambiguity. The radial velocity $v^r_{i,j}$ must be equal to $p^r_i \vec r_{i,j}$ for all boundary points, where $p^r_i$ is a real number and ${\vec{r}}_{i,j}$ is the outer unit normal vector at each point. Similarly, the tangent velocity $v^T_{i,j}$ can be denoted as $p^T_i \vec T_{i,j}$, where $p^T_i$ is a real number and $\vec T_{i,j}$ is the unit tangent vector at each point. Hence, the optimization problem \eqref{eq:velocity} can be rewritten as:
\begin{equation}\label{eq:velocity1}
   \widetilde{E}^1_{\text{bdy}}(\mathbf{c},\mathbf{p}^r,\mathbf{p}^T) =   \sum^k_{i = 1} \sum_{j = 1}^{n_i} \left( \left|v^c_i -\mathbf{v}^c_{i,j} \right|^2 + \left|p^r_i \vec r_{i,j} - \mathbf{v}^r_{i,j}\right|^2 +  \left|p^T_i \vec T_{i,j} - \mathbf{v}^T_{i,j}\right|^2\right),
\end{equation}
where $\mathbf{p}^r = (p^r_1,\dots,p^r_k)$ and $\mathbf{p}^T = (p^T_1,\dots,p^T_k)$. Now, for the $i$-th inner circular hole, we denote 
\begin{equation}
    \widetilde{E}^1_{\text{bdy}_i}(c_i,p^r_i,p^T_i) = \sum_{j = 1}^{n_i} \left|v^c_i -\mathbf{v}^c_{i,j} \right|^2 + \sum_{j = 1}^{n_i} \left|p^r_i \vec r_{i,j} - \mathbf{v}^r_{i,j}\right|^2 + \sum_{j = 1}^{n_i} \left|p^T_i \vec T_{i,j} - \mathbf{v}^T_{i,j}\right|^2.
\end{equation}
Using the principle of minimum energy and Eq.~\eqref{eqt:rho_velocity}, the minimum values of the three parts in $\widetilde{E}^1_{\text{bdy}_i}(c_i,p^r_i,p^T_i)$ can be obtained by:
\begin{align}
& \widetilde{v}^c_i  = \frac{1}{n_i} \sum_{j = 1}^{n_i} \left( \mathbf{v}^c_{i,j} \right) =  \frac{1}{n_i} \sum_{j = 1}^{n_i} \left(-\frac{\nabla \rho^{\mathcal{V}}(w_{i,j})}{\rho^{\mathcal{V}}(w_{i,j})}\right)^c_{i,j},\\
& \widetilde{p}^r_i = \frac{1}{n_i} \sum_{j = 1}^{n_i} \left( \mathbf{v}^r_{i,j} \right) \cdot \vec r_{i,j} = \frac{1}{n_i} \sum_{j = 1}^{n_i} \left(-\frac{\nabla \rho^{\mathcal{V}}(w_{i,j})}{\rho^{\mathcal{V}}(w_{i,j})}\right)^r_{i,j} \cdot \vec r_{i,j},\\
& \widetilde{p}^T_i = \frac{1}{n_i} \sum_{j = 1}^{n_i} \left( \mathbf{v}^T_{i,j} \right) \cdot \vec T_{i,j} = \frac{1}{n_i} \sum_{j = 1}^{n_i} \left(-\frac{\nabla \rho^{\mathcal{V}}(w_{i,j})}{\rho^{\mathcal{V}}(w_{i,j})}\right)^T_{i,j} \cdot \vec T_{i,j},
\end{align}
where $\rho^{\mathcal{V}}(w_{i,j})$ is the vertex density at the boundary point $w_{i,j}$. The descent direction associated with the energy $E^1_{\text{bdy}}$ for all inner boundaries is given by:
\begin{equation}\label{eqt:dw1}
    dw^1 = \sum^k_{i=1} \sum_{j = 1}^{n_i} (\widetilde{v}_{c_i} +  \widetilde{p}^r_i \vec r_{i,j} +  \widetilde{p}^T_i \vec T_{i,j}). 
\end{equation}

Until now, we have only dealt with the optimization problem \eqref{eq:move}. However, as the value of $dw^1$ is solely linked to the diffusion process, the bijectivity and geometry structure are not preserved in general. Therefore, our next step is to consider the other two energy terms $E^2_{\text{bdy}}$ and $E^3_{\text{bdy}}$ on the inner boundaries.

First, we express the energy $E^2_{\text{bdy}}$ as follows:
\begin{equation}
E^2_{\text{bdy}}(\mathbf{c},\mathbf{r},\mathbf{T}) =  \sum^k_{i=1} \sum_{\mathcal{T}_l \in \mathcal{N}^{\mathcal{F}}(\gamma_i)} \operatorname{Area}(\mathcal{T}_l)  |\nu_l|^2,
\end{equation}
where $\mathcal{N}^{\mathcal{F}}(\gamma_i)$ is the 1-ring face neighborhood of the $i$-th inner boundary, and $\nu_l$ is the boundary Beltrami coefficient associated with $\mathcal{T}_l$. Specifically, $\nu_l$ can be expressed in terms of the center $c_i$, radius vector $r_{i,j}$ and tangent vector $T_{i,j}$ of some vertex on the boundary. The detailed computation of $\nu_l = \nu_l(c_i,r_{i,j},T_{i,j})$ is provided in Appendix~\ref{appendix:BC}.

The descent directions associated with $E^2_{\text{bdy}}$ for the $i$-th inner hole can be obtained by:
\begin{align}
    &d w^2_{i,1} = \frac{1}{n_i}\sum^{n_i}_{j=1} \sum_{\mathcal{T}_l \in \mathcal{N}^{\mathcal{F}}_{i,j}} 2\operatorname{Area}(\mathcal{T}_l)\left(\nu_l(c_i,r_{i,j},T_{i,j})\frac{\partial \nu_l}{\partial r_{i,j}}\right),\\
    &d w^2_{i,2} = \frac{1}{n_i}\sum^{n_i}_{j=1} \sum_{\mathcal{T}_l \in \mathcal{N}^{\mathcal{F}}_{i,j}} 2\operatorname{Area}(\mathcal{T}_l)\left(\nu_l(c_i,r_{i,j},T_{i,j})\frac{\partial \nu_l}{\partial T_{i,j}}\right),\\
    &d w^2_{i,3} = \frac{1}{n_i}\sum^{n_i}_{j=1} \sum_{\mathcal{T}_l \in \mathcal{N}^{\mathcal{F}}_{i,j}} 2\operatorname{Area}(\mathcal{T}_l)\left(\nu(c_i,r_{i,j},T_{i,j})\frac{\partial \nu_l}{\partial c_i}\right).    
\end{align}
Here, $\mathcal{N}^{\mathcal{F}}_{i,j}$ is the set of 1-ring neighborhood triangle elements that contain the $j$-th vertex of the $i$-th inner hole, the partial derivatives $\frac{\partial \nu_l}{\partial r_{i,j}}$ and $\frac{\partial \nu_l}{\partial T_{i,j}}$ correspond to the descent direction along the radial and tangential directions, respectively, and $\frac{\partial \nu_l}{\partial c_i}$ represents the descent direction associated with the center of the circular hole.

Combining the above formulas for $dw^2_{i,1}, dw^2_{i,2}, dw^2_{i,3}$ for $i = 1, \dots, k$, the final descent direction associated with the energy $E^2_{\text{bdy}}$ for all inner boundaries is given by: 
\begin{equation} \label{eqt:dw2}
dw^2 =  \sum^k_{i = 1} \left(d w^2_{i,1} +  d w^2_{i,2} +  d w^2_{i,3}\right).
\end{equation}

Then, we express the smoothness term $E^3_{\text{bdy}}$ as follows:
\begin{equation}
\begin{aligned}
    E^3_{\text{bdy}}(\mathbf{c},\mathbf{r},\mathbf{T}) = &  \sum^{k}_{i=1} \sum_{\mathcal{T}_l \in \mathcal{N}^{\mathcal{F}}(\gamma_i)} \left(  \frac{\cot{\theta_{l}^0}}{2}(\nu(v_{l}^1)-\nu(v_l^2))^2 \right.\\
    &+ \left. \frac{\cot{\theta_l^1}}{2}(\nu(v_l^2)-\nu(v_l^0))^2 +\frac{\cot{\theta_l^2}}{2}(\nu(v_l^0)-\nu(v_l^1))^2\right),
\end{aligned}
\end{equation}
where $\mathcal{T}_l = [v_{l}^0,v_{l}^1,v_{l}^2]$ is a triangle element on the $i$-th inner boundary, $\theta_l^0,\theta_l^1,\theta_l^2$ are the angles in $\mathcal{T}_l$, and $\nu(v_l^0)$, $\nu(v_l^1)$, $\nu(v_l^2)$ are the vertex Beltrami coefficients at $v_l^0$, $v_l^1$, $v_l^2$, which can be obtained by taking the average of the Beltrami coefficients at their 1-ring face neighborhood. Below, we denote $\nu(v_l^0)$, $\nu(v_l^1)$, $\nu(v_l^2)$ as $\nu_l^0, \nu_l^1, \nu_l^2$ for simplicity.   

The descent direction associated with $E^3_{\text{bdy}}$ for the $i$-th inner hole can be given by the three following terms:
\begin{equation}
\begin{aligned}
    dw^3_{i,1} = \frac{1}{n_i} \sum^{n_i}_{j=1} \sum_{\mathcal{T}_l \in \mathcal{N}^{\mathcal{F}}_{i,j}}&\left(\cot{\theta_l^0}(\nu_l^1 - \nu_l^2)\left(\frac{\partial \nu_l^1}{\partial r_{i,j}}-\frac{\partial \nu_l^2}{\partial r_{i,j}}\right) \right.\\
    &+ \cot{\theta_l^1}(\nu_l^2 - \nu_l^0)\left(\frac{\partial \nu_l^2}{\partial r_{i,j}} - \frac{\partial \nu_l^0}{\partial r_{i,j}}\right)\\
    &+\left. \cot{\theta_l^2}(\nu_l^0 - \nu_l^1)\left(\frac{\partial \nu_l^0}{\partial r_{i,j}} - \frac{\partial \nu_l^1}{\partial r_{i,j}}\right)\right),  
\end{aligned}
\end{equation}
\begin{equation}
\begin{aligned}
     dw^3_{i,2} = \frac{1}{n_i} \sum^{n_i}_{j=1} \sum_{\mathcal{T}_l \in \mathcal{N}^{\mathcal{F}}_{i,j}}&\left(\cot{\theta_l^0}(\nu_l^1 - \nu_l^2)\left(\frac{\partial \nu_l^1}{\partial T_{i,j}}-\frac{\partial \nu_l^2}{\partial T_{i,j}}\right)\right.\\
    &+ \cot{\theta_l^1}(\nu_l^2 - \nu_0)\left(\frac{\partial \nu_l^2}{\partial T_{i,j}} - \frac{\partial \nu_l^0}{\partial T_{i,j}}\right)\\
    &+\left.\cot{\theta_l^2}(\nu_l^0 - \nu_l^1)\left(\frac{\partial \nu_l^0}{\partial T_{i,j}} - \frac{\partial \nu_l^1}{\partial T_{i,j}}\right)\right),  
\end{aligned}
\end{equation}
\begin{equation}
\begin{aligned}
     dw^3_{i,3} = \frac{1}{n_i} \sum^{n_i}_{j=1} \sum_{\mathcal{T}_l \in \mathcal{N}^{\mathcal{F}}_{i,j}}&\left(\cot{\theta_l^0}(\nu_l^1 - \nu_l^2)\left(\frac{\partial \nu_l^1}{\partial c_i}-\frac{\partial \nu_l^2}{\partial c_i}\right)\right.\\
    &+ \cot{\theta_l^1}(\nu_l^2 - \nu_l^0)\left(\frac{\partial \nu_l^2}{\partial c_i} - \frac{\partial \nu_l^0}{\partial c_i}\right)\\
    &+\left.\cot{\theta_l^2}(\nu_l^0 - \nu_l^1)\left(\frac{\partial \nu_l^0}{\partial c_i} - \frac{\partial \nu_l^1}{\partial c_i}\right)\right). 
\end{aligned}
\end{equation}

Combining the above formulas for $dw^3_{i,1}, dw^3_{i,2}, dw^3_{i,3}$ for $i = 1, \dots, k$, the descent direction associated with the energy $E^3_{\text{bdy}}$ for all inner boundaries is given by:
\begin{equation}\label{eqt:dw3}
dw^3 = \sum^k_{i=1} \left( d w^3_{i,1} +   d w^3_{i,2} +  dw^3_{i,3}\right).
\end{equation}

Finally, combining Eq.~\eqref{eqt:dw1},~\eqref{eqt:dw2}, and~\eqref{eqt:dw3}, the overall descent direction for the domain alteration is given by:
\begin{equation}
    dw = dw^1 + \alpha dw^2 + \beta dw^3.
\end{equation}
We can then obtain the updated boundaries $\partial \mathcal{D}^1$ by
\begin{equation}\label{eqt:boundary}
     \partial \mathcal{D}^1 = \partial \mathcal{D}^0 + \delta t dw,
\end{equation}
where $\delta t$ is a prescribed time step size. We will later introduce an iterative scheme involving this updating step so that the boundaries become optimal over time.  

After obtaining $\partial \mathcal{D}^1$, we move on to the domain reconstruction step. Here, our goal is to obtain a quasiconformal map $\widetilde{g}$ such that $\mu(\widetilde{g})$ is close to $\nu$ and $\widetilde{g}(\partial \mathcal{D}^0) = \partial \mathcal{D}^1$. Also, it is desirable that the quasiconformal distortion of $\widetilde{g}$ is low so that the resulting mesh is sufficiently regular to serve as a good starting point for the subsequent density-equalization step. To achieve this, we compute the descent direction of the following energy involving the Beltrami coefficient:
\begin{equation}\label{eqt:reconstruct}
    E_R(\nu) = \alpha \int_{\mathcal{D}^0} |\nu|^2  + \beta \int_{\mathcal{D}^0} |\nabla \nu|^2.
\end{equation}
Here, the first term $\alpha \int_{\mathcal{D}^0} |\nu|^2$ is utilized to ensure that the quasiconformal distortion of the resulting map is small, while the second term $\beta \int_{\mathcal{D}^0} |\nabla \nu|^2$ is used to enhance the smoothness of the resulting map. Below, we denote the two integrals as $E_R^1(\nu) = \int_{\mathcal{D}^0}| \nu|^2$ and $E_R^2(\nu) = \int_{\mathcal{D}^0}|\nabla  \nu|^2$ for simplicity.

For $E_R^1(\nu) =  \sum^{|\mathcal{F} |}_{i=1} \text{Area}(\mathcal{T}_i)|\nu(\mathcal{T}_i)|^2$, where $\mathcal{T}_i$ is the $i$-th triangle element, the descent direction $d\nu^1$ can be computed by
\begin{equation}\label{eqt:ER1}
\begin{aligned}
    \left.\frac{d}{dt} \right|_{t = 0}E_R^1(\nu + \delta t\omega) &= \left.\sum^{|\mathcal{F} |}_{i=1} \text{Area}(\mathcal{T}_i) \frac{d}{d t}\right|_{t=0}\left|\nu(\mathcal{T}_i) + \delta t\omega_i \right|^2 \\
    & = 2\sum^{|\mathcal{F} |}_{i=1} \text{Area}(\mathcal{T}_i) \nu(\mathcal{T}_i) \omega_i.
\end{aligned}
\end{equation}
The derivative is negative when $\omega_i = - \nu(\mathcal{T}_i)$ for $i = 1,2,\dots,|\mathcal{F}|$. Hence, the descent direction of the Beltrami coefficient is $d\nu^1 = - \nu$. 

Using the same method, $E_R^2$ can be discretized on the triangle mesh:
\begin{equation}\label{eqt:E_R2}
\begin{aligned}
    E_R^2(\nu) = \sum^{|\mathcal{F} |}_{i=1} & \int_{\mathcal{T}_i}|\nabla \nu|^2 \\
     = \sum^{|\mathcal{F} |}_{i=1} & \left(\frac{\cot \theta_{i,0}}{2} (\nu(v_{i,1}) - \nu(v_{i,2}))^2 + \frac{\cot \theta_{i,1}}{2} (\nu(v_{i,2}) - \nu(v_{i,0}))^2\right.\\
     & \left.  + \frac{\cot \theta_{i,2}}{2} (\nu(v_{i,0}) - \nu(v_{i,1}))^2\right),
\end{aligned}
\end{equation}
where $\mathcal{T}_i = [v_{i,0},v_{i,1},v_{i,2}]$ is the $i$-th triangle element and $\theta_{i,0},\theta_{i,1},\theta_{i,2}$ are the three angles in it. Then we can obtain the increment $d\nu^2 =  \Delta \nu $ as well.

Therefore, the descent direction for the problem \eqref{eqt:reconstruct} is
\begin{equation}
    d\nu =  \alpha d\nu^1 + \beta d\nu^2.
\end{equation}
We can then obtain the updated Beltrami coefficient by:
\begin{equation}\label{eqt:update_nu}
\widetilde{\nu} = \nu + \delta t d\nu,
\end{equation}
where $\delta t$ is the time step size.
Using the LBS algorithm~\cite{lui2013texture}, we can compute the modification map $\widetilde{g}$,
and the new circular domain $\mathcal{D}^1$ is given by 
$ \mathcal{D}^1 = \widetilde{g}(\mathcal{D}^0)$. The proposed geometry modification method is summarized in Algorithm~\ref{alg:GM}.

\begin{algorithm}[htb]
  \caption{Geometry modification (GM)}
  \label{alg:GM}
  \begin{algorithmic}[1]
    \Require
      A circular domain $\mathcal{D}^0$ with density $\rho$ and initial Beltrami coefficient $\nu$.
    \Ensure
      A modification map $\widetilde{g}:\mathcal{D}^0 \to \mathcal{D}^1$, where $\mathcal{D}^1$ is a new circular domain.
    \State Compute the velocity $\mathbf{v} = -\frac{\nabla \rho^{\mathcal{V}}}{\rho^{\mathcal{V}}}$;
    \State Obtain the new boundaries $\partial \mathcal{D}^1$ based on $\mathbf{v}$ using Eq.~\eqref{eqt:dw1}--\eqref{eqt:boundary};
    \State Compute $\widetilde{\nu}$ using Eq.~\eqref{eqt:ER1}--\eqref{eqt:update_nu};
    \State Use the LBS method to obtain the modification map $\widetilde{g}$ from $\widetilde{\nu}$;  
    \State Obtain the new circular domain $\mathcal{D}^1 = \widetilde{g}(\mathcal{D}^0)$.   
  \end{algorithmic}
\end{algorithm}

\subsubsection{Beltrami density-equalizing descent}\label{sect:BDED}
From the above geometry modification step, we obtain a modified circular domain. However, the density may not be well-equalized in the modified domain. Therefore, our next step is to use the modified domain as the initial map and search for a suitable descent direction for density-equalization by solving the minimization problem~\eqref{eqt:DEQ_nu}.

We first discuss the energy $E^1_{\text{DEQ}} = \int |\nabla \rho |^2$ in Eq.~\eqref{eqt:DEQ_nu}, where $\rho = \rho_0$ is the density on $\widetilde{g}_1 \circ g_0(\mathcal{S})$. Here, we aim to express the descent direction for this energy as a descent direction of the Beltrami coefficient. By diffusion theory and Fick's law, the velocity induced by the density gradient is:
\begin{equation}
     dg = - \frac{\nabla \rho^{\mathcal{V}}}{\rho^{\mathcal{V}}}, 
\end{equation}
where the computation of $\nabla \rho^{\mathcal{V}}$ is described in Appendix~\ref{appendix:density}. To preserve the circular boundary shape obtained by Algorithm~\ref{alg:GM}, we enforce the Neumann boundary condition $\nabla \rho \cdot \mathbf{n} = 0$ as in~\cite{choi2018density} here, where $\mathbf{n}$ is the unit normal vector at each boundary. It is noteworthy that while the shape of each boundary is fixed, the boundary vertices are still allowed to slide along the boundary based on the density gradient to achieve a better density-equalizing effect.

It is easy to see that as $\widetilde{g}_1$ is perturbed, its associated Beltrami coefficient $\nu_0$ will be adjusted by $d\nu^1$. We can compute the adjustment via the Beltrami equation as follows:
\begin{equation}
    \frac{\partial (\widetilde{g}_1 + dg)}{\partial \Bar{z}} = (\nu_0 + d\nu^1)\frac{\partial (\widetilde{g}_1 + dg)}{\partial z},
\end{equation}
which implies
\begin{equation}
        \frac{\partial \widetilde{g}_1}{\partial \Bar{z}} + \frac{\partial dg}{\partial \Bar{z}} = \nu_0 \frac{\partial \widetilde{g}_1}{\partial z} +\nu_0 \frac{\partial dg}{\partial z} + d\nu^1 \frac{\partial \widetilde{g}_1}{\partial z} +d\nu^1 \frac{\partial dg}{\partial z}.
\end{equation}
Note that $\frac{\partial \widetilde{g}_1}{\partial \Bar{z}} = \nu_0 \frac{\partial \widetilde{g}_1}{\partial z}$. Hence, $d\nu^1$ can be obtained by
\begin{equation}\label{eqt:dnu1}
    d\nu^1 = \left(\frac{\partial dg}{\partial \Bar{z}} - \nu_0 \frac{\partial dg}{\partial z}\right) \left/ \frac{\partial (\widetilde{g}_1 + dg)}{\partial z} \right.. 
\end{equation}

Next, for the second term $E^2_{\text{DEQ}} =  \int |\nu_0|^2 =  \sum^{|\mathcal{F} |}_{i=1} \text{Area}(\mathcal{T}_i)|\nu_0(\mathcal{T}_i)|^2$, we compute the descent direction $d\nu^2$ by minimizing it iteratively. Analogous to Eq.~\eqref{eqt:ER1}, the Euler--Lagrange equation is given by
\begin{equation}
        \left.\frac{d}{dt} \right|_{t = 0}E^2_{\text{DEQ}}(\nu_0 + \delta t\omega) = 2\sum^{|\mathcal{F} |}_{i=1} \text{Area}(\mathcal{T}_i) \nu_0(\mathcal{T}_i) \omega_i.
\end{equation}
The derivative is negative when $\omega_i = - \nu_0(\mathcal{T}_i)$ for $i = 1,2,\dots,|\mathcal{F}|$. Hence, we obtain the descent direction of the Beltrami coefficient: $d\nu^2 = - \nu_0$. 

For the third part $E^3_{\text{DEQ}} =  \int|\nabla \nu_0|^2 =    \sum^{|\mathcal{F} |}_{i=1} \int_{\mathcal{T}_i}|\nabla \nu_0|^2$, we first discretize it on the triangle mesh as in Eq.~\eqref{eqt:E_R2}:
\begin{equation}
\begin{aligned}
    E^3_{\text{DEQ}}(\nu_0) =  &  \sum^{|\mathcal{F} |}_{i=1} \left(\frac{\cot \theta_{i,0}}{2} (\nu_0(v_{i,1}) - \nu_0(v_{i,2}))^2 + \frac{\cot \theta_{i,1}}{2} (\nu_0(v_{i,2}) - \nu_0(v_{i,0}))^2 \right.\\
    & \left. + \frac{\cot \theta_{i,2}}{2} (\nu_0(v_{i,0}) - \nu_0(v_{i,1}))^2\right),
\end{aligned}
\end{equation}
where $\mathcal{T}_i = [v_{i,0},v_{i,1},v_{i,2}]$ is the $i$-th triangle element and $\theta_{i,0},\theta_{i,1},\theta_{i,2}$ are the three angles in it. We can then obtain the increment $d\nu^3 =  \Delta \nu_0$.

Therefore, the descent direction for solving the optimization problem \eqref{eqt:DEQ_nu} is
\begin{equation}
    d\nu = d\nu^1 + \alpha d\nu^2 + \beta d\nu^3.
\end{equation}

Using the above descent direction formula, the updated Beltrami coefficient can be expressed as:
\begin{equation}\label{eqt:update_nu1}
\widetilde{\nu}_1 = \nu_0 + \delta t d\nu,
\end{equation}
where $\delta t$ is the time step size.

After obtaining $\widetilde{\nu}_1$, we apply the following cut-off function $\mathbb I$ to it:
\begin{equation}
\mathbb I(\widetilde{\nu}_1(z))= \begin{cases} \widetilde{\nu}_1(z) & \text { if } |\widetilde{\nu}_1(z)|< 1, \\ (1-\delta) \widetilde{\nu}_1(z) & \text { if } |\widetilde{\nu}_1(z)| \geq 1,\end{cases}
\end{equation}
where $\delta$ is a prescribed cut-off parameter. In practice, we set $\delta = 0.1$. We denote $\nu_1 = \mathbb I(\widetilde{\nu}_1)$. If $\|\nu_1\|_{\infty}$ is still larger than 1, we apply the cut-off function $\mathbb I$ repeatedly until $\|\nu_1\|_{\infty} <1$, which guarantees that the quasiconformal map associated with $\nu_1$ is bijective. Finally, we obtain a bijective deforming map $g_1:\mathcal{D}^0 \rightarrow \mathcal{D}^1$ by applying the LBS algorithm:
\begin{equation}
g_1 = \textbf{LBS}(\nu_1).
\end{equation}

We call this procedure \emph{Beltrami density-equalizing descent (BDED)} and summarize it in Algorithm~\ref{alg:BDED}.

\begin{algorithm}[htb]
  \caption{Beltrami density-equalizing descent (BDED)}
  \label{alg:BDED}
  \begin{algorithmic}[1]
    \Require
      A connected open triangulated surface $\mathcal{S}$, a population on each triangle of $\mathcal{S}$, a conformal flattening map $g_0: \mathcal{S} \to \mathcal{D}^0$, a modification map $\widetilde{g}_1: \mathcal{D}^0 \to \mathcal{D}^1$.
    \Ensure
      A map $g_1:\mathcal{D}^0 \rightarrow \mathcal{D}^1$.
    \State Compute the initial density function $\rho_0$ on $\widetilde{g}_1 \circ g_0(\mathcal{S})$;
    \State Compute the velocity $\mathbf{v}_0 = -\frac{\nabla \rho_0}{\rho_0}$;
    \State Initialize $\nu_0 = \mu(\widetilde{g}_1)$;
    \State Fixing the boundary shape, obtain $\widetilde{\nu}_{1}$ based on $\mathbf{v}_0$ and $\nu_0$ using Eq.~\eqref{eqt:dnu1}--\eqref{eqt:update_nu1};
    \State Obtain $\nu_{1} = \mathbb I(\widetilde{\nu}_{1})$;
    \State Use the LBS method to reconstruct $g_{1}$ from $\nu_{1}$; 
  \end{algorithmic}
\end{algorithm}

\begin{figure}[t]
    \centering
    \includegraphics[width=\textwidth]{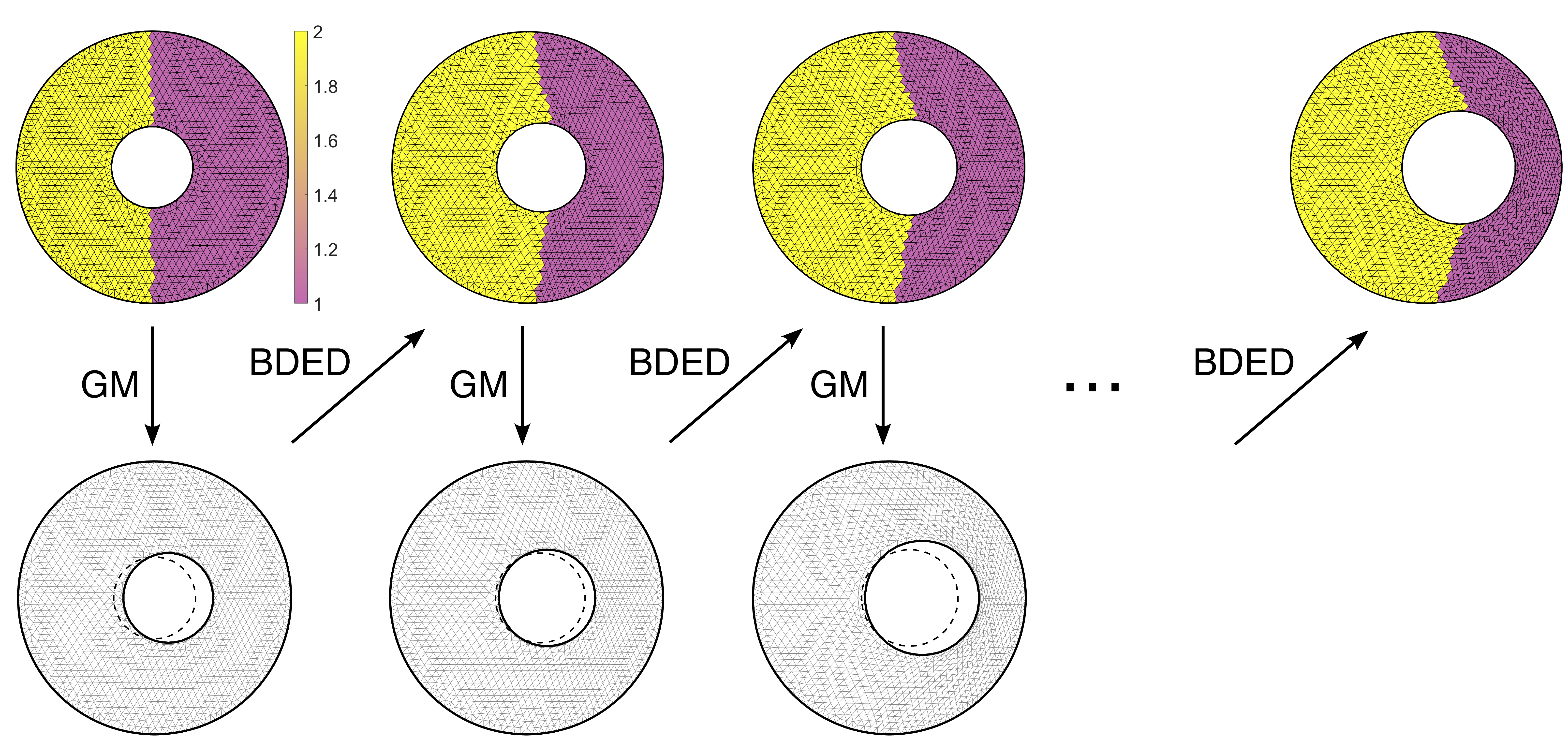}
    \caption{The proposed iterative scheme for computing density-equalizing quasiconformal (DEQ) maps. The geometry modification (GM) method and the Beltrami density-equalizing descent (BDED) method are applied iteratively to improve both the geometry of the target circular domain and the mapping result. Here, each triangle element is color-coded with the input population. The dashed curve in each GM step represents the previous inner boundary, and the solid curve represents its updated position.}
    \label{fig:illustration_deq}
\end{figure}

\subsubsection{DEQ algorithm}
In this section, we propose a method for computing the bijective density-equalizing quasiconformal (DEQ) maps.

For any given connected open triangle mesh $\mathcal{S}$, our goal is to find a quasiconformal mapping $f:\mathcal{S} \rightarrow \mathbb R^2$, such that $f(\mathcal{S})$ is a circular domain in $\mathbb R^2$ and the final density per unit area in $f(\mathcal{S})$ becomes a constant. Based on the GM and BDED steps, an iterative scheme is proposed to solve the optimization problem \eqref{eqt:DEQ_nu} (see Fig.~\ref{fig:illustration_deq}). More specifically, each iteration consists of two parts. Firstly, a modification function is computed for obtaining a new planar circular domain using the GM step. Secondly, a diffusion-based updated map is calculated for the new domain using the BDED step. We then repeat the GM and BDED steps to further improve both the domain geometry and the mapping until convergence.

It is natural to ask why alternating between GM and BDED is needed to achieve the optimal final mapping result. To explain this, note that in our approach we regard the diffusion process as a quasiconformal flow. In the GM step, the domain alteration is induced by the energy \eqref{eqt:modify_function} in the neighborhood of the inner boundaries. Then in the BDED step, we search for a diffusion-based descent direction with the new domain geometry prescribed. As the density at the regions away from the boundaries is not fully utilized in the GM step, the domain reconstructed and the descent direction result may not be optimal. By alternating between the GM and BDED steps, we can allow the density-equalization process to take place in a dynamically changing domain, thereby yielding an optimal circular domain and an optimal density-equalizing quasiconformal map.

The detailed procedure of the iterative scheme is as follows. Suppose $g_n$ is obtained at the $n$-th iteration, and $\mathcal{D}^n$ is the corresponding circular domain. The density function $\rho^{\mathcal{V}}_n$ can be calculated using the same method as above. Then, we compute the modification map $\widetilde{g}_{n+1}:\mathcal{D}^0 \to \mathcal{D}^{n+1}$ by Algorithm~\ref{alg:GM}, where $\mathcal{D}^{n+1}$ is the updated circular domain. Fixing the shape of $\mathcal{D}^{n+1}$, we then compute the updated map $g_{n+1}:\mathcal{D}^0 \to \mathcal{D}^{n+1}$ using the BDED step and obtain the updated Beltrami coefficient $\nu_{n+1} = \mu(g_{n+1})$. We repeat the iteration and stop when $\left|E_{\text{DEQ}}^n - E_{\text{DEQ}}^{n-1}\right| < \epsilon$ for some stopping parameter $\epsilon$, where $E_{\text{DEQ}}^n$ denotes the energy at the $n$-th iteration. The final density-equalizing quasiconformal map is then given by
\begin{equation}
    f = g_{N}  \circ g_0,
\end{equation}
where $N$ is the total number of iterations.

The proposed bijective density-equalizing quasiconformal (DEQ) mapping algorithm is summarized in Algorithm~\ref{alg:DEQ}.

\begin{algorithm}[htb]
  \caption{Bijective density-equalizing quasiconformal map (DEQ)}
  \label{alg:DEQ}
  \begin{algorithmic}[1]
    \Require
      A connected open triangulated surface $\mathcal{S}$, a population on each triangle, and a stopping parameter $\epsilon$.
    \Ensure
      A density-equalizing quasiconformal map $f:\mathcal{S} \rightarrow \mathcal{D}$, where $\mathcal{D}$ is an optimal circular domain.
        \State Compute an initial conformal flattening map $g_0:\mathcal{S} \rightarrow \mathcal{D}^0$, where $\mathcal{D}^0$ is a circular domain.
    \State Compute the initial density function $\rho_0$;
    \State Set $n$ = 0;
    \Repeat 
    \State Apply the GM method (Algorithm~\ref{alg:GM}) to get a modification map $\widetilde{g}_{n+1}:\mathcal{D}^0 \to \mathcal{D}^{n+1}$, where $\mathcal{D}^{n+1}$ is a new circular domain;
    \State Apply the BDED method (Algorithm~\ref{alg:BDED}) to obtain a map $g_{n+1}: \mathcal{D}^{0} \to \mathcal{D}^{n+1}$;     
        \State Update $\rho_{n+1}$ based on $g_{n+1}$;
        \State Update $n = n+1$;
    \Until $\left|E_{\text{DEQ}}^n - E_{\text{DEQ}}^{n-1}\right| < \epsilon$; \\
    \Return $f= g_{N} \circ g_0$, where $N$ is the total number of iterations;
  \end{algorithmic}
\end{algorithm}

We remark that by skipping the GM step in the DEQ iterative scheme, a shape-preserving density-equalizing quasiconformal map can be obtained, where the centers and radii of the circular holes will remain unchanged throughout the iterations. 

\subsection{Bijective landmark-matching density-equalizing quasiconformal map}\label{sect:LDEQ}
After establishing the DEQ method, we extend it to incorporate landmark-matching constraints based on the model~\eqref{eqt:LDEQ}. Let $\{ p_i\}^m_{i = 1}$ and $\{ q_i\}^m_{i = 1}$ denote the landmark vertices on $\mathcal{D}^0$ and their corresponding target positions, respectively. Analogous to the DEQ method, here we find an optimal planar circular domain $\mathcal{D}$ and an optimal Beltrami coefficient $\nu^*:\mathcal{D}^0\rightarrow \mathbb{C}$, minimizing the following variational model:
\begin{equation}\label{eqt:LDEQ_nu}
    E_{\text{LDEQ}}(g,\nu) = \int_{\mathcal{D}^0} |\nabla \rho(g) |^2 + \alpha \int_{\mathcal{D}^0} |\nu|^2 + \beta \int_{\mathcal{D}^0} |\nabla \nu|^2 
\end{equation}
subject to 
\begin{align}
&\|\nu\|_{\infty}  <1, \\
     &\nu = \mu(g),\\
     &g(p_i) = q_i, \ i = 1,\dots,m,
\end{align}
where $\mathcal{D}^0$ is the initial flatten circular domain, $g$ is a quasiconformal map, $\mu(g)$ is the Beltrami coefficient of $g$, and $\rho$ is the density function given by the population defined on $\mathcal{S}$. 
To achieve this, we develop an iterative scheme analogous to the DEQ algorithm, with some components of the BDED method modified to incorporate the landmark constraints.

More specifically, here we again start with an initial flattening map $g_0: \mathcal{S}\to \mathcal{D}^0$ and apply the GM method to obtain a new circular domain $\mathcal{D}^1$ and a modification map $\tilde{g}_1: \mathcal{D}^0 \to \mathcal{D}^1$. Then, we move on to the BDED step with the following changes. Instead of directly using the LBS method to solve the constraint optimization problem~\eqref{eqt:LDEQ_nu}, we follow the approach in~\cite{lam2014landmark} and solve the problem using the penalty splitting method. We consider the following energy functional:
\begin{equation}\label{eq:Land_penalty}
    E^{\text{split}}_{\text{LDEQ}}(\nu,\mu) = \int_{\mathcal{D}^0} |\nabla \rho(g) |^2 + \alpha \int_{\mathcal{D}^0} |\nu|^2 + \beta \int_{\mathcal{D}^0} |\nabla \nu|^2 + \eta \int_{\mathcal{D}^0} |\nu - \mu(g)|^2
\end{equation}
subject to the constraints that $\|\nu \|_{\infty} <1$ and $g(p_i) = q_i$ for $i = 1,\dots,m$. Here, note that we have included an extra term $\eta \int_{\mathcal{D}^0} |\nu - \mu(g)|^2$. Recall that the LBS method reconstructs a mapping $g$ based on a given Beltrami coefficient $\nu$. However, if landmark constraints are enforced, the resulting Beltrami coefficient $\mu(g)$ may not be close to the input $\nu$. In other words, the resulting mapping may not achieve the desired density-equalizing effect encoded by $\nu$. 
To resolve this issue, here we include the term $\eta \int_{\mathcal{D}^0} |\nu - \mu(g)|^2$ in the energy functional, where $\eta$ is a sufficiently large parameter. This term enforces that the Beltrami coefficient $\mu(g)$ is close to the solution $\nu$ of the model~\eqref{eqt:LDEQ_nu} even with the extra landmark constraints. The resulting mapping is then a bijective landmark-matching density-equalizing map. We then repeat the GM step and the modified BDED step until convergence. 

The detailed numerical procedure of the new approach is as follows. After getting the updated circular domain $\mathcal{D}^{n+1}$ and the modification map $\widetilde{g}_{n+1}: \mathcal{D}^{0} \to \mathcal{D}^{n+1}$ from the GM step, we compute the bijective landmark-matching density-equalizing map $g_{n+1}: \mathcal{D}^{0} \to \mathcal{D}^{n+1}$. Assume $\nu_n$ and $\mu_n$ are obtained at the $n$-th iteration. Fixing $\nu_n$, $\mu_{n+1}$ can be obtained by minimizing 
\begin{equation}\label{eqt:split1}
    E^{\text{split}}_{1}(\nu_n,\mu) = \sum^{|\mathcal{F}|}_{i = 1} \left(\int_{\mathcal{T}_i} |\nabla \rho^{\mathcal{F}}_n |^2 + \eta \int_{\mathcal{T}_i} |\nu_n - \mu|^2\right).
\end{equation}
By using the velocity $\mathbf{v}_n = \frac{\nabla \rho^{\mathcal{V}}_n}{\rho^{\mathcal{V}}_n}$ induced by the density function $\rho^{\mathcal{V}}_n$, we can get the adjustment to the Beltrami coefficient $d\mu^1$ of the first term. For the second term, the descent direction is given by $d\mu^2 = -2(\mu - \nu_n)$. Then, the final adjustment is
\begin{equation}
    d\mu = d\mu^1 + \eta d\mu^2.
\end{equation}
By the formula for the descent direction above, we can obtain the updated Beltrami coefficient
\begin{equation}\label{eqt:nu_np1_from_nu_n}
    \mu_{n+1} = \mu_n + \delta t d\mu,
\end{equation}
where $\delta t$ is the time step size.

Then, we follow the procedure in~\cite{lam2014landmark} and use LBS to compute $g_{n+1}^*:\mathcal{D}^0 \to \mathcal{D}^{n+1}$ such that $\mu(g_{n+1}^*) = \mu_{n+1}$ and $g_{n+1}^*$ matches all the landmarks constraints, and then update $\mu_{n+1}$ by $\mu_{n+1} = \mu(g_{n+1}^*)$. Once $\mu_{n+1}$ is obtained, we fix it and minimize the following energy
\begin{equation}\label{eqt:split2}
    E^{\text{split}}_{2}(\nu,\mu_{n+1}) = \sum^{|\mathcal{F}|}_{i = 1} \left(\alpha \int_{\mathcal{T}_i} |\nu|^2 + \beta \int_{\mathcal{T}_i} |\nabla \nu |^2 + \eta \int_{\mathcal{T}_i} | \nu - \mu_{n+1} |^2 \right)
\end{equation}
to get the updated Beltrami coefficient $\nu_{n+1}$. By considering the Euler--Lagrange equation, minimizing the above energy is equivalent to solving
\begin{equation}\label{eqt:nu_np1_from_mu_np1}
    (\alpha I- \beta \Delta + \eta I)\nu_{n+1} = \eta \mu_{n+1},
\end{equation}
where $\Delta = A^{-1}L$ (see Appendix~\ref{appendix:density} for the discretization details).

However, $\nu_{n+1}$ calculated above may not be associated with a landmark-matching density-equalizing diffeomorphism. Therefore, we use LBS with $\nu_{n+1}$ as the input together with the landmark constraints to obtain a landmark-matching quasiconformal map $g_{n+1}:\mathcal{D}^0 \to \mathcal{D}^{n+1}$, whose Beltrami coefficient $\mu(g_{n+1})$ closely resembles $\nu_{n+1}$. Then, using the direction $d = \mu(g_{n+1}) - \nu_{n+1}$ as the descent direction, we update $\nu_{n+1}$ by $\nu_{n+1} \leftarrow \nu_{n+1}+\delta t d$, where $\delta t$ is the time step size. Analogous to the DEQ algorithm, we repeat the above geometry modification and density-equalizing mapping processes until $\left|E_{\text{LDEQ}}^n - E_{\text{LDEQ}}^{n-1}\right| < \epsilon$ for some stopping parameter $\epsilon$, where $E_{\text{LDEQ}}^n$ denotes the energy at the $n$-th iteration. The proposed bijective landmark-matching density-equalizing quasiconformal (LDEQ) mapping algorithm is summarized in Algorithm~\ref{alg:LDEQ}.

\begin{algorithm}[htb]
  \caption{Bijective landmark-matching density-equalizing quasiconformal map (LDEQ)}
  \label{alg:LDEQ}
  \begin{algorithmic}[1]
    \Require
      A connected open triangulated surface $\mathcal{S}$, prescribed landmark correspondences $\{p_i\}^{m}_{i = 1} \leftrightarrow \{q_i\}^{m}_{i = 1}$, a population on each triangle, and a stopping parameter $\epsilon$.
    \Ensure
      A bijective landmark-matching density-equalizing quasiconformal map $f:\mathcal{S} \rightarrow \mathcal{D}$, where $\mathcal{D}$ is an optimal circular domain.
    \State Compute an initial conformal flattening map $g_0:\mathcal{S} \rightarrow \mathcal{D}^0$, where $\mathcal{D}^0$ is a circular domain.
    \State Compute the initial density function $\rho_0$;
    \State Set $n$ = 0;
    \State Initialize $\mu_0 = \nu_0 = 0$;
    \Repeat 
        \State Apply the GM method (Algorithm~\ref{alg:GM}) to get a modification map $\widetilde{g}_{n+1}:\mathcal{D}^0 \to \mathcal{D}^{n+1}$, where $\mathcal{D}^{n+1}$ is a new circular domain;
        \State Fixing the boundary shape, obtain $\mu_{n+1}$  using $\mu_{n}$ and $\nu_n$ by Eq.~\eqref{eqt:split1}--\eqref{eqt:nu_np1_from_nu_n};
        \State Use \text{LBS} to reconstruct $g_{n+1}^*:\mathcal{D}^0 \to \mathcal{D}^{n+1}$ based on $\mu_{n+1}$ together with the landmark constraints;
        \State $\mu_{n+1} \leftarrow \mu(g_{n+1}^*)$;
        \State Obtain $\nu_{n+1}$ using $\mu_{n+1}$ by Eq.~\eqref{eqt:nu_np1_from_mu_np1};
        \State Use \text{LBS} to reconstruct $g_{n+1}:\mathcal{D}^0 \to \mathcal{D}^{n+1}$ based on $\nu_{n+1}$ together with the landmark constraints;
        \State $\nu_{n+1} \leftarrow \nu_{n+1} + \delta t d$;
        \State Update $\rho_{n+1}$ based on $g_{n+1}$;
        \State Update $n = n+1$;
    \Until $\left|E_{\text{LDEQ}}^n - E_{\text{LDEQ}}^{n-1}\right| < \epsilon$; \\
    \Return $f= g_{N}  \circ g_0$, where $N$ is the total number of iterations;
  \end{algorithmic}
\end{algorithm}

\section{Experimental results}
\label{sec:experiments}
In this section, we present experimental results to demonstrate the effectiveness of our proposed DEQ and LDEQ algorithms. The algorithms are implemented using Matlab R2021a on the Windows platform. All experiments are conducted on a computer with an Intel(R) Core(TM) i9-12900 2.40 GHz processor and 32GB memory. All surfaces are discretized in the form of triangular meshes. In the following experiments, the stopping parameter is set to be $\epsilon = 10^{-2}$, and the time step size is set to be $\delta t = 0.1$.

\subsection{Bijective density-equalizing quasiconformal map}
We first test our proposed DEQ algorithm on synthetic and real data. In all experiments, we set $\alpha = 0.1$ and $\beta = 0.05$. The Koebe's iteration method~\cite{koebe1910konforme} is used for the initial flattening map.

We begin by presenting a synthetic example of mapping an annulus with four different density regions (Fig.~\ref{fig:annulus}(a)). The mapping result obtained by our proposed DEQ method is shown in Fig.~\ref{fig:annulus}(b), from which we can see that the position and size of the inner circular hole are optimized, yielding a smooth and bijective mapping result. As shown in the density histograms in Fig.~\ref{fig:annulus}(c)--(d), the DEQ method effectively equalizes the density. From the histogram of the norm of the Beltrami coefficient in Fig.~\ref{fig:annulus}(e), we can also see that the quasiconformal distortion is low. Moreover, from the energy plot in Fig.~\ref{fig:annulus}(f), we can see that the method converges quickly. 

    \begin{figure}[t!]
        \centering
       \includegraphics[width=\textwidth]{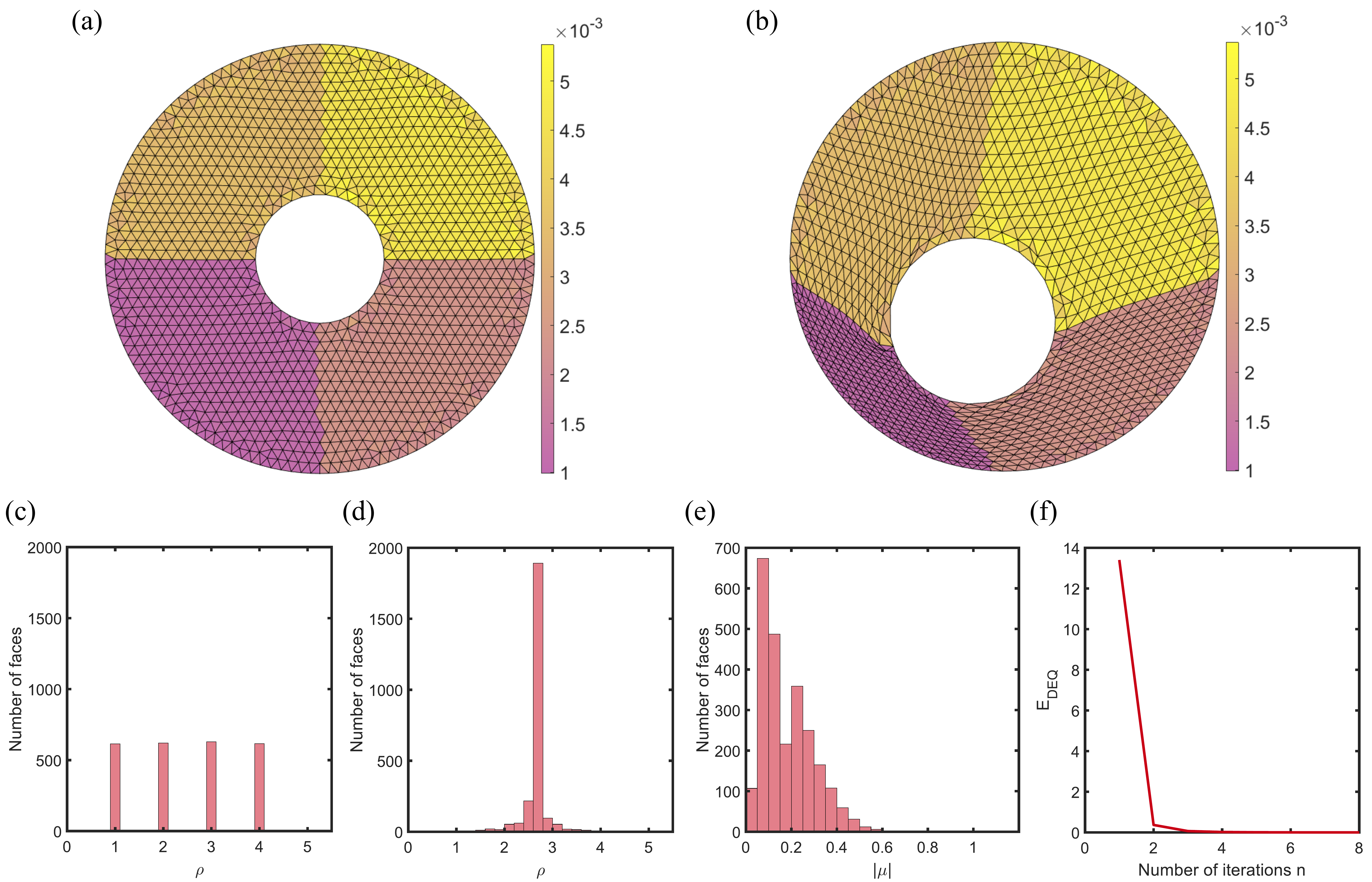}
        \caption{Density-equalizing quasiconformal map of an annulus with four different density regions. (a) The initial shape color-coded with the prescribed population. (b) The final DEQ map. (c) The histogram of the initial density on every triangle element. (d) The histogram of the final density. (e) The histogram of the norm of the Beltrami coefficient $|\mu|$. (f) The energy $E_{\text{DEQ}}$ throughout the iterations.}
        \label{fig:annulus}
    \end{figure}

It is natural to ask whether the geometry modification of the circular domain is important for achieving a good density-equalizing quasiconformal map. In Fig.~\ref{fig:2holes_compare}(a), we consider a multiply-connected domain with two holes and set the population to be two times larger at the middle part of the domain. Note that the prescribed population will naturally cause an expansion at the center of the domain, pushing the vertices to the two sides. Now, if we skip the GM step and directly compute a DEQ map with the domain shape fixed (i.e. a shape-preserving DEQ map), it can be observed in Fig.~\ref{fig:2holes_compare}(b) that the mapping result is with large distortion near the two holes. In particular, as the density gradient near the holes induces a velocity field that attempts to translate and enlarge the holes while they are enforced to be fixed, the triangle elements surrounding the two holes are largely distorted. By contrast, if we include the GM step, the DEQ mapping result becomes much more natural (Fig.~\ref{fig:2holes_compare}(c)), with the two holes moved to the two sides and enlarged. As shown in the density histograms in Fig.~\ref{fig:2holes_compare}(d)--(e), including the GM step gives a better density-equalization effect. We can also compare the norm of the Beltrami coefficients of the two mapping results as shown in Fig.~\ref{fig:2holes_compare}(f)--(g), from which we can see that the mapping result with GM yields a lower quasiconformal distortion. This demonstrates the significance of the GM step.

\begin{figure}[t]
    \centering
    \includegraphics[width=\textwidth]{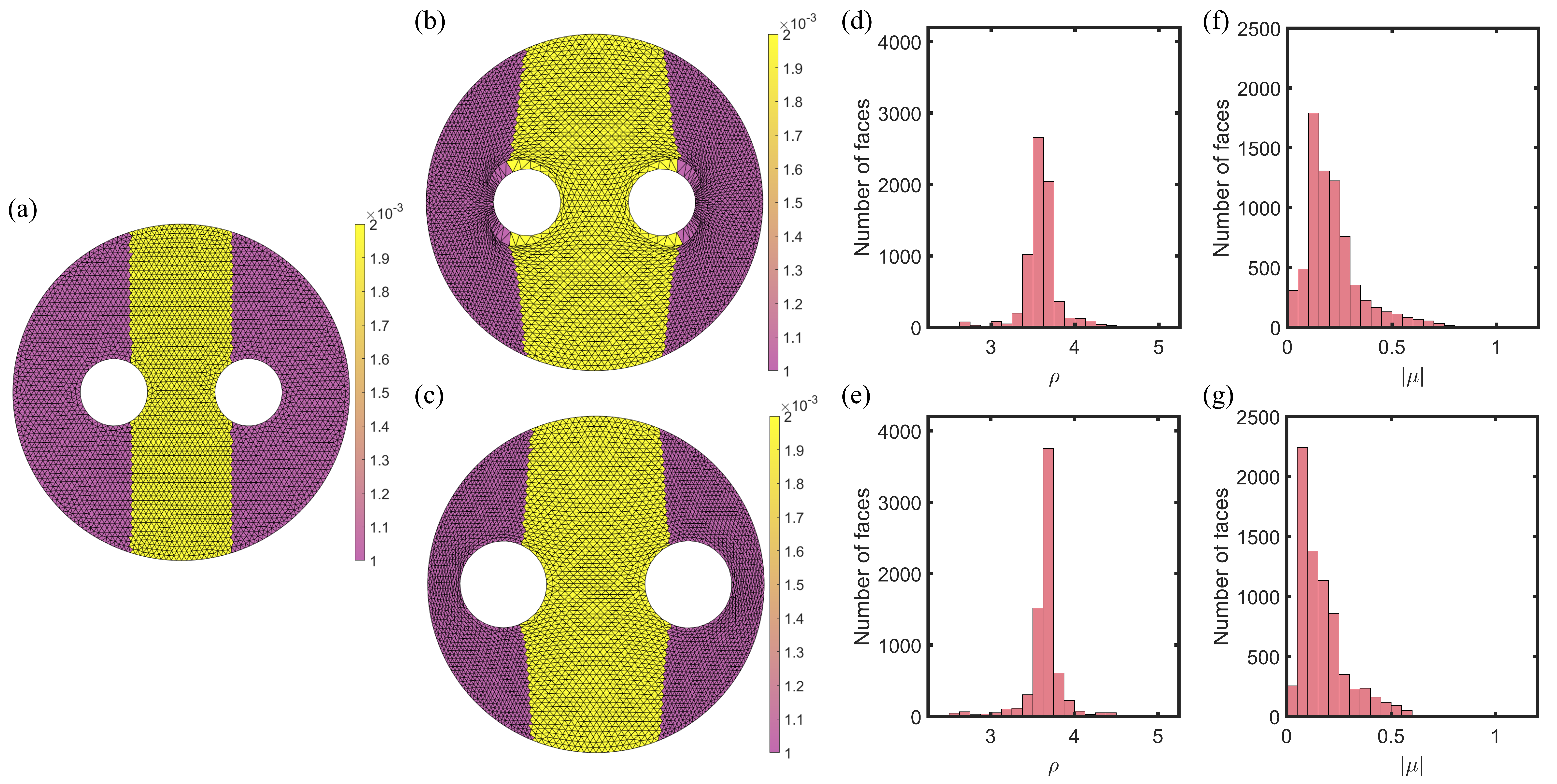}
    \caption{The importance of the geometry modification (GM) step in the computation of density-equalizing quasiconformal maps. (a) A multiply-connected domain with 2 holes, color-coded with the population prescribed on each triangle. (b) The DEQ mapping result without the GM step. (c) The DEQ mapping result with the GM step. (d) The final density histogram of DEQ without GM. (e) The final density histogram of DEQ. (f) The histogram of the norm of the Beltrami coefficient $|\mu|$ of DEQ without GM. (g) The histogram of $|\mu|$ of DEQ.}
    \label{fig:2holes_compare}
\end{figure}

After testing our proposed algorithm with the above synthetic examples, we consider computing a bijective density-equalizing flattening map of a human face model with 1 hole using our algorithm (Fig.~\ref{fig:1hole_face}(a)). Here, we set the population as the area of each triangle element on the original surface and apply the DEQ method in order to achieve an area-preserving parameterization. Fig.~\ref{fig:1hole_face}(b) and (c) show the initial flattening map and the final mapping result obtained by the DEQ algorithm, from which we can see that the mouth, eyes and eyebrows are shrunk in the initial map and restored in the final mapping result. The histogram of the initial density in Fig.~\ref{fig:1hole_face}(d) shows that the area distortion in the initial flattening map is large. By contrast, the final density highly concentrates at 1 as shown in Fig.~\ref{fig:1hole_face}(e), which indicates that the final bijective density-equalizing quasiconformal map effectively preserves the area ratio. We also consider the histogram of the norm of the Beltrami coefficient $|\mu|$ in Fig.~\ref{fig:1hole_face}(f), which shows that the quasiconformal distortion of the mapping result is also small. Fig.~\ref{fig:1hole_face}(g) shows the change of the energy $E_{\text{DEQ}}$ throughout the iterations, from which we can see that our method converges rapidly.

    \begin{figure}[t!]
        \centering
        \includegraphics[width=\textwidth]{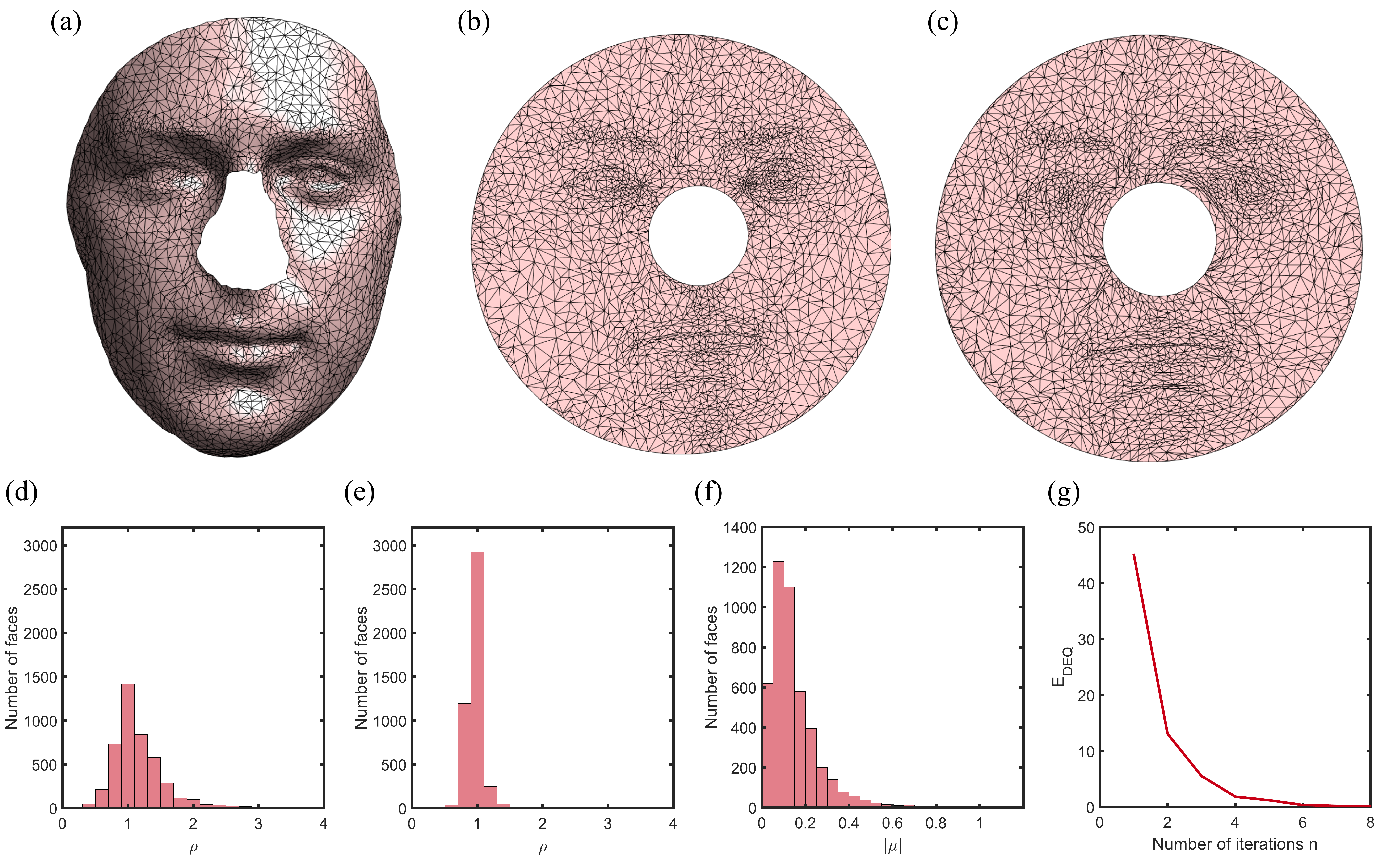}
        \caption{Bijective area-preserving parameterization of a human face with 1 hole. (a) The input surface. (b) The initial flattening map. (c) The final bijective density-equalizing quasiconformal map. (d) The histogram of the initial density on every triangle element. (e) The histogram of the final density. (f) The histogram of the norm of the Beltrami coefficient $|\mu|$. (g) The energy $E_{\text{DEQ}}$ throughout the iterations.}
        \label{fig:1hole_face}
    \end{figure}

Fig.~\ref{fig:2hole_face}(a) shows another human face with two holes located at the eyes. It can be observed that the nose is shrunk in the initial flattening map (Fig.~\ref{fig:2hole_face}(b)). By using the DEQ method with the triangle area as the prescribed population, we obtain an area-preserving parameterization (Fig.~\ref{fig:2hole_face}(c)) with the nose enlarged. Again, we can see that the density is effectively equalized (Fig.~\ref{fig:2hole_face}(d)--(e)) and the quasiconformal distortion is low (Fig.~\ref{fig:2hole_face}(f)). Also, the energy $E_{\text{DEQ}}$ is significantly reduced throughout the iterations (Fig.~\ref{fig:2hole_face}(g)). 

    \begin{figure}[t!]
        \centering
        \includegraphics[width=\textwidth]{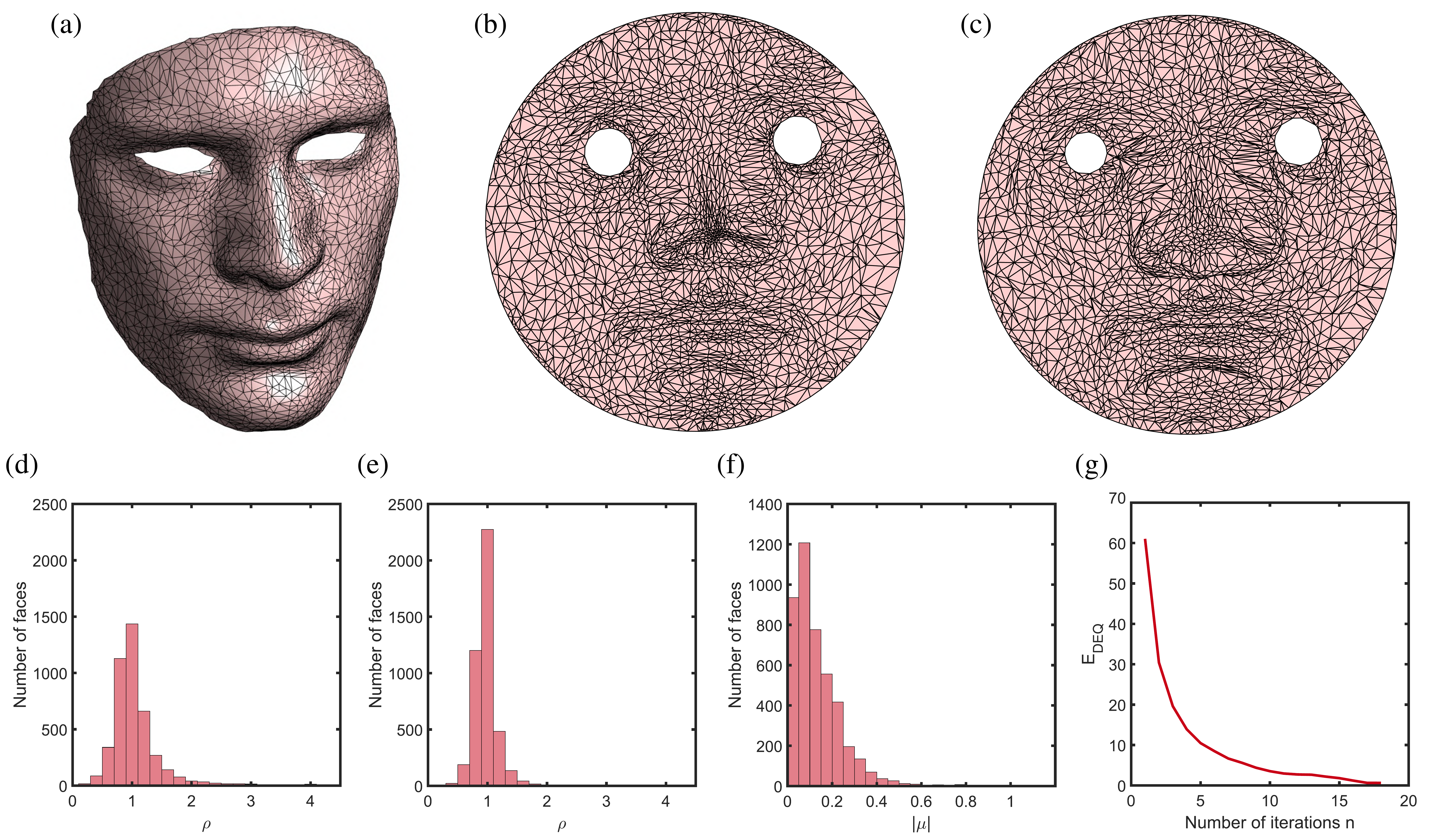}
        \caption{Bijective area-preserving parameterization of a human face with 2 holes. (a) The input surface. (b) The initial flattening map. (c) The final bijective density-equalizing quasiconformal map. (d) The histogram of the initial density on every triangle element. (e) The histogram of the final density. (f) The histogram of the norm of the Beltrami coefficient $|\mu|$. (g) The energy $E_{\text{DEQ}}$ throughout the iterations.}
        \label{fig:2hole_face}
    \end{figure}

It is noteworthy that our proposed algorithm can also be applied to simply-connected open surfaces. Specifically, we can skip the GM step and directly apply the BDED method iteratively (i.e. using the shape-preserving DEQ approach) to obtain a bijective density-equalizing disk parameterization. We can then compare the existing DEM method~\cite{choi2018density} and our DEQ method for mapping simply-connected open surfaces. For a fair comparison, the same stopping criterion is used in the two methods. Here, we consider flattening the human face in Fig.~\ref{fig:comparison_dem}(a) using the two methods, with the population set to be the area of each triangle element on the mesh except the mouth, and the population at the mouth set to be twice the area of the triangles there. As shown in the DEM result in Fig.~\ref{fig:comparison_dem}(b), the DEM method may lead to mesh fold-overs. By contrast, the proposed DEQ method preserves the bijectivity of the mapping and so there is no mesh overlap as shown in Fig.~\ref{fig:comparison_dem}(c). We can further compare the final density histograms (Fig.~\ref{fig:comparison_dem}(d)--(e)) and the histograms of the Beltrami coefficient $|\mu|$ (Fig.~\ref{fig:comparison_dem}(f)--(g)), from which we can see that the DEQ method achieves a comparable density-equalizing effect while significantly reducing the quasiconformal distortion.

    \begin{figure}[t]
        \centering
        \includegraphics[width=\textwidth]{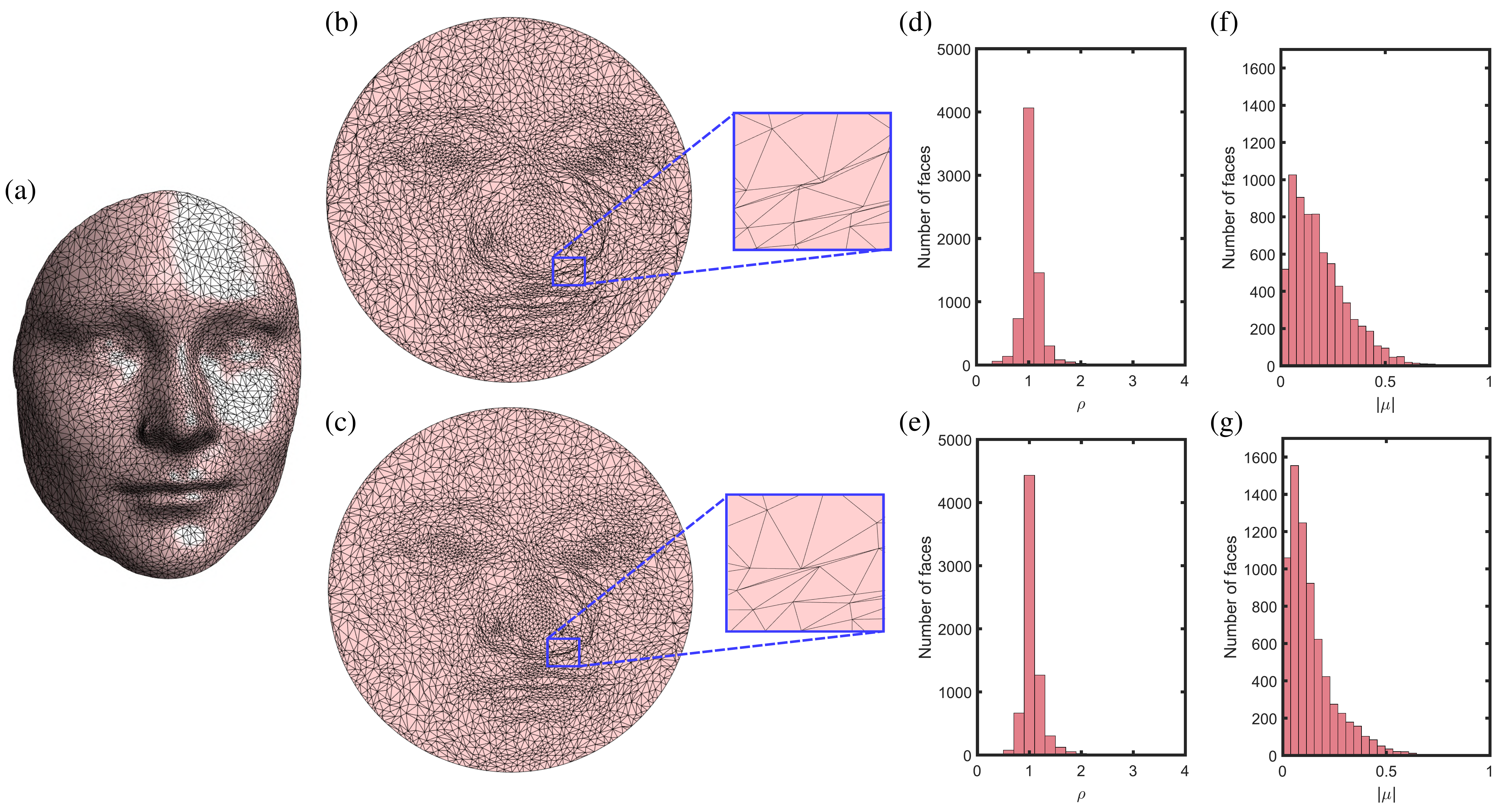}
        \caption{Comparison between DEM~\cite{choi2018density} and the proposed DEQ method for simply-connected open surfaces. (a) A simply-connected open face. (b) The DEM result with a zoom-in showing the presence of mesh overlaps. (c) The DEQ result with a zoom-in of the same region. (d) The final density histogram of DEM. (e) The final density histogram of DEQ. (f) The histogram of $|\mu|$ of DEM. (g) The histogram of $|\mu|$ of DEQ.}
        \label{fig:comparison_dem}
    \end{figure}

Fig.~\ref{fig:additional_results} shows several other examples of simply-connected and multiply-connected open surfaces and the DEQ mapping results, from which we can again see that the mappings are highly density-equalizing and bijective. 

    \begin{figure}[t!]
        \centering
        \includegraphics[width=0.9\textwidth]{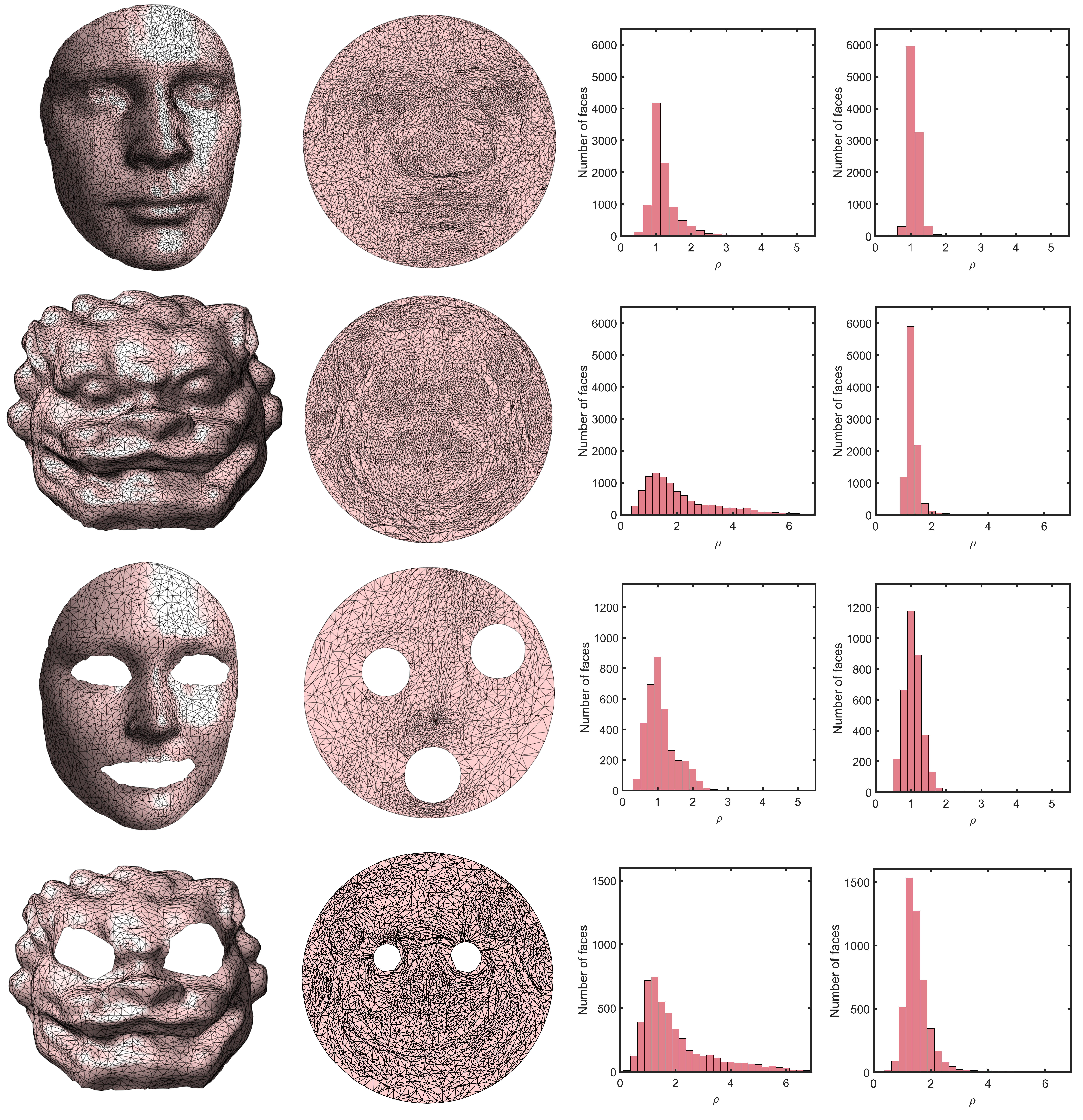}
        \caption{More examples of simply-connected and multiply-connected open surfaces and the DEQ mapping results. Each row shows one example. Left to right: The input surfaces, the DEQ mapping results, the histogram of the initial density, and the histogram of the final density.}
        \label{fig:additional_results}
    \end{figure}
    
To provide a more quantitative analysis of our DEQ algorithm, we present detailed statistics in Table \ref{tab:DEQ}, from which we can see that our method is highly efficient and accurate. Specifically, by considering the variance of the normalized final density, we can see that the mapping results are highly density-equalizing. Also, the mean value of the norm of the Beltrami coefficient shows that our method yields a low quasiconformal distortion. Moreover, as the bijectivity is guaranteed by our method, there is no mesh overlap in the mapping results.

\begin{table}[t!]
\footnotesize
    \caption{The performance of our DEQ algorithm. For each surface, we record the number of triangle elements, the time taken for the entire algorithm, the variance of the normalized final density $\widetilde{\rho} = \frac{\rho}{\text{Mean}(\rho)}$ where $\rho = \frac{\text{Given population}}{\text{Final area of triangle elements}}$, the mean value of the norm of the Beltrami coefficient $\mu$, and the number of overlaps. }\label{tab:DEQ}
  \begin{center}
  \begin{tabular}{|c|c|c|c|c|c|} \hline
    \bf Surface & \bf \# Faces & \bf Time(s) &\bf $\text{Var}(\widetilde{\rho})$  &\bf  Mean$(|\mu|)$ & \bf \# Overlaps \\\hline
    Lion (0 holes)& 9999 & 2.42 & 0.07 & 0.18 & 0 \\ \hline
    Alex (0 holes) & 10000 & 0.38 & 0.05 & 0.14 & 0 \\ \hline
    Square (0 holes) & 10368 & 0.65 & 0.02 & 0.24 & 0\\ \hline
    Annulus (1 hole) & 2479 & 0.25 & 0.10 & 0.17 & 0 \\ \hline 
    Rectangle (1 hole) & 3000 & 0.28 & 0.10 & 0.19  & 0\\ \hline
    Amoeba (1 hole) & 14255 & 2.12 & 0.06 & 0.27 & 0 \\ \hline
    Alex (1 hole) & 37999 & 1.23 & 0.03 & 0.11 & 0 \\ \hline
    Lion (2 holes) & 4900 & 0.43 & 0.12 &  0.25 & 0\\ \hline
    David (2 holes) & 48853 & 4.86 & 0.03 & 0.15  & 0\\ \hline
    Sophie (3 holes) & 4000 & 0.33 & 0.05 & 0.16 & 0 \\ \hline
    David (3 holes) & 47550 & 5.95 & 0.08 & 0.22  & 0\\ \hline
  \end{tabular}
\end{center}
\end{table}

Table~\ref{tab:Compare} presents a quantitative comparison between DEQ and DEQ without the GM step. As shown in the table, incorporating the GM step results in a better density-equalization effect and lower quasiconformal distortion. Moreover, it is noteworthy that including the GM step in the DEQ method significantly accelerates the computation. This can be explained by the fact that the GM step effectively simplifies the optimization problem, thereby reducing the number of iterations needed. Recall that our goal is to find a bijective density-equalizing diffeomorphism by solving Eq.~\eqref{eqt:DEQ_nu}. As the initial circular domain may not be suitable for the deformation, more iterations may be necessary to achieve the desired result. By contrast, with the GM step included, we can modify the domain based on the diffusion process and quasiconformal energy, thereby getting a more suitable domain for the deformation. Consequently, the iterative process is accelerated significantly with the use of GM.

\begin{table}[t]
\footnotesize
  \caption{The performance of the proposed DEQ algorithm compared with DEQ without the geometry modification (GM) step.}\label{tab:Compare}
\begin{center}
\begin{tabular}{|c|c|c|c|c|c|c|}
\hline
\multicolumn{1}{|c|}{ \multirow{2}*{\textbf{Surface}} }& \multicolumn{3}{c|}{\textbf{DEQ}}&\multicolumn{3}{c|}{\textbf{DEQ without GM}} \\
\cline{2-7}
\multicolumn{1}{|c|}{}&\bf Time(s) & \bf $\text{Var}(\widetilde{\rho})$  & \bf Mean$(|\mu|)$ &\bf Time(s) & \bf $\text{Var}(\widetilde{\rho})$  & \bf Mean$(|\mu|)$\\
\hline
Annulus (1 hole) &0.25  & 0.10 & 0.17 & 0.28 & 0.13 & 0.19\\ \hline 
Rectangle (1 hole) & 0.28 & 0.10 & 0.19 & 0.42 & 0.13 & 0.26\\ \hline 
Amoeba (1 hole) & 2.12 & 0.06 & 0.27 & 3.32 & 0.10 & 0.29\\ \hline 
Alex (1 hole) & 1.23 & 0.03 & 0.11 & 2.12 & 0.04 & 0.17 \\ \hline 
David (2 holes) & 4.86 & 0.03& 0.15 & 7.54 & 0.08 & 0.20 \\ \hline 
David (3 holes) & 5.95 & 0.08 & 0.22 & 8.25  & 0.10 &  0.28 \\ \hline
\end{tabular}
\end{center}
\end{table}

We further compare the performance of our DEQ method and the existing DEM method~\cite{choi2018density} for mapping simply-connected open surfaces, with the same stopping criterion used in the two methods. As shown in Table~\ref{tab:SS}, DEQ effectively reduces the quasiconformal distortion while achieving a comparable density-equalizing effect. It also outperforms DEM in terms of the bijectivity of the mappings. This demonstrates the effectiveness of the proposed DEQ method for not only multiply-connected open surfaces but also simply-connected open surfaces.

\begin{table}[t]
  \caption{Comparison between the proposed DEQ method and the DEM method~\cite{choi2018density} for simply-connected open surfaces.}\label{tab:SS}
\begin{center}
\resizebox{\textwidth}{!}{
\begin{tabular}{|c|c|c|c|c|c|c|c|c|c|}
\hline
\multicolumn{1}{|c|}{ \multirow{2}*{\textbf{Surface}} }& \multicolumn{1}{c|}{ \multirow{2}*{\textbf{\# Faces}}} & \multicolumn{4}{c|}{\textbf{DEQ}}&\multicolumn{4}{c|}{\textbf{DEM}} \\
\cline{3-10}
\multicolumn{1}{|c|}{}&\multicolumn{1}{c|}{}& \bf Time(s) & \bf $\text{Var}(\widetilde{\rho})$  & \bf Mean$(|\mu|)$ &\bf \# Overlaps &\bf Time(s) & \bf $\text{Var}(\widetilde{\rho})$  & \bf Mean$(|\mu|)$ & \bf \# Overlaps\\
\hline
Peaks & 4108 & 0.20 & 0.02 & 0.44  & 0 & 0.09 & 0.01 & 0.47  & 45  \\ \hline
Human face & 5000 & 1.35 & 0.07 & 0.13 & 0 & 1.01 & 0.05  & 0.21 & 128 \\ \hline
Lion & 9999 & 2.16 & 0.08 & 0.17  & 0 & 1.20 & 0.05 & 0.25  & 81 \\\hline
Alex & 10000 & 0.27 & 0.06  & 0.15 & 0 & 0.15 & 0.04 & 0.28 & 63 \\\hline
Square & 10368 & 0.74 & 0.02 & 0.23 & 0 & 0.52 & 0.01 & 0.27 & 22 \\ \hline
\end{tabular}
}
\end{center}
\end{table}

    \begin{figure}[t!]
        \centering
        \includegraphics[width=\textwidth]{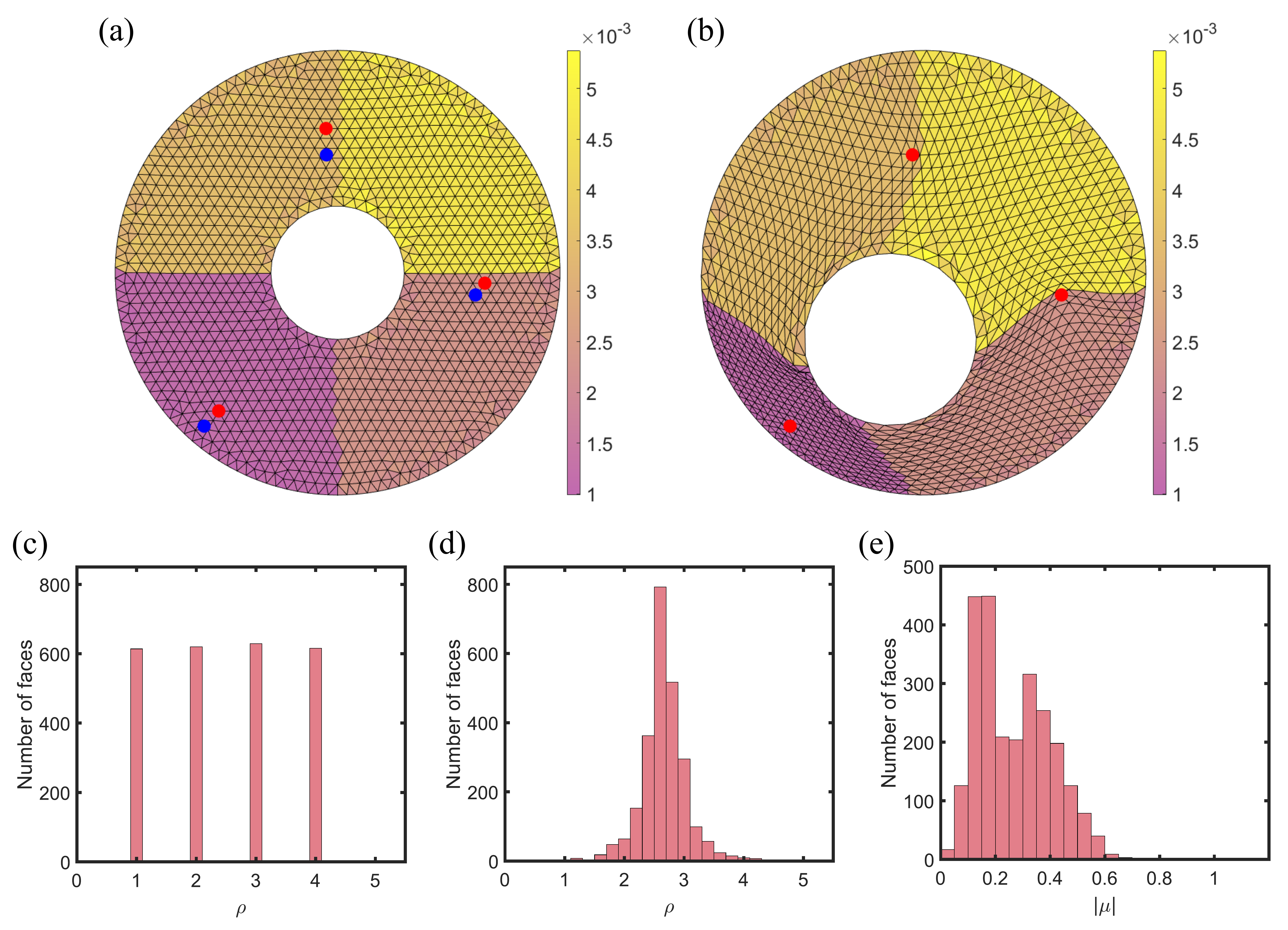}
        \caption{Bijective landmark-matching density-equalizing quasiconformal map of an annulus with 4 different density regions. (a) The initial shape color-coded with the prescribed population. The source landmarks and target positions are highlighted in red and blue respectively. (b) The LDEQ map with the red landmark vertices mapped to the target positions. (c) The histogram of the initial density. (d) The histogram of the final density. (e) The histogram of the norm of the Beltrami coefficient $|\mu|$. }
        \label{fig:annulus_LDEQ}
    \end{figure}

\subsection{Bijective landmark-matching density-equalizing quasiconformal maps}
Next, we test our proposed LDEQ algorithm for computing landmark-matching density-equalizing maps. In all experiments, we set $\alpha = 0.1$, $\beta = 0.05$ and $\eta = 10$. We first consider a synthetic example of an annulus in $\mathbb R^2$ with the input population and prescribed constraints shown in Fig.~\ref{fig:annulus_LDEQ}(a). Note that the input population is identical to the one used in the example in Fig.~\ref{fig:annulus}. It is easy to see that the LDEQ mapping result is bijective and all the corresponding landmarks are matched (Fig.~\ref{fig:annulus_LDEQ}(b)). Also, as shown in the histograms of the initial density, final density, and norm of the Beltrami coefficient (Fig.~\ref{fig:annulus_LDEQ}(c)--(e)), the density is well-equalized and the quasiconformal distortion is low. We can further compare the LDEQ mapping result with the DEQ mapping result in Fig.~\ref{fig:annulus}, from which we can clearly see the effect of the landmark constraints on the mapping result. 

We then consider computing the LDEQ mappings for multiply-connected open surfaces in $\mathbb{R}^3$. The first two rows in Fig.~\ref{fig:additional_results_LDEQ} show two examples of multiply-connected human face surfaces and the LDEQ mapping results. Here, the input population is the area of every triangle element, and landmark constraints are enforced to control the position of prominent features such as the mouth or the nose of the human faces. From the LDEQ mapping results, it can be observed that the landmark constraints are satisfied and the mappings are bijective. Comparing the initial and final density histograms, we can also see that the density is well-equalized under the LDEQ method.

    \begin{figure}[t!]
        \centering
        \includegraphics[width=\textwidth]{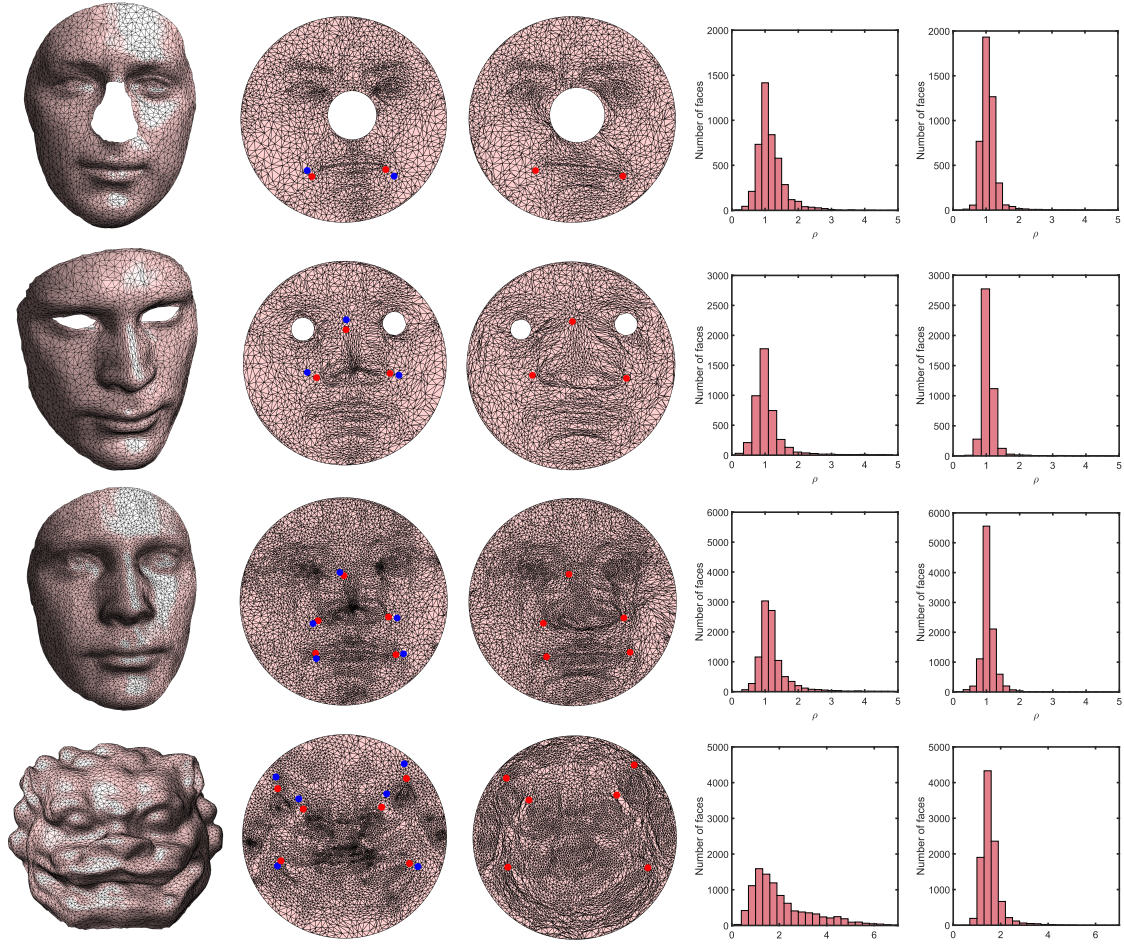}
        \caption{Examples of multiply-connected and simply-connected open surfaces and the LDEQ mapping results. Each row shows one example. Left to right: The input surface, the initial flattening map with the landmark constraints (red dots: source, blue dots: target), the LDEQ mapping result with the red dots mapped to the target position, the histogram of the initial density, and the histogram of the final density.}
        \label{fig:additional_results_LDEQ}
    \end{figure}

Analogous to the DEQ method, the LDEQ method can be applied to not only multiply-connected open surfaces but also simply-connected open surfaces. The last two rows in Fig.~\ref{fig:additional_results_LDEQ} show two examples of simply-connected open surfaces and the LDEQ mapping results. Again, it can be observed that the mapping results are landmark-matching and bijective, and the density is well-equalized.

For a more quantitative analysis, Table \ref{tab:LDEQ} shows the performance of our LDEQ algorithm. It is noteworthy that because of the landmark constraints, LDEQ results in larger $\text{sd}(\rho)$ and $\text{mean}(|\mu|)$ in general when compared to DEQ as in Table~\ref{tab:DEQ}. Also, the computation of LDEQ is slower than that of DEQ. Nevertheless, LDEQ can effectively incorporate landmark constraints and guarantee the bijectivity of the mapping results. 

\begin{table}[t]
\footnotesize
  \caption{The performance of our LDEQ algorithm. }\label{tab:LDEQ}
\begin{center}
  \begin{tabular}{|c|c|c|c|c|c|} \hline
    \bf Surface & \bf \# Faces & \bf Time(s) &\bf $\text{Var}(\widetilde{\rho})$ &\bf Mean$(|\mu|)$ & \bf \# Overlaps \\ \hline
    Alex (0 holes) & 10000 & 0.65 & 0.07 & 0.17  & 0\\ \hline
    Annulus (1 hole) & 2480 & 0.42 & 0.11 & 0.26 & 0\\ \hline 
    Sophie (1 hole) & 4499 & 0.35 & 0.06 & 0.16 & 0 \\ \hline
    David (2 holes) & 4400 & 0.75 & 0.06 & 0.18 & 0 \\ \hline    
    Lion (2 holes) & 4900 & 0.63 & 0.12 & 0.25 & 0 \\ \hline
  \end{tabular}
\end{center}
\end{table}

\section{Applications}\label{sec:application}
In this section, we present several applications of our proposed bijective density-equalizing quasiconformal mapping method.  

\subsection{Surface remeshing}
Our proposed method can be applied to surface remeshing. Given an open surface $\mathcal{S}$ in $\mathbb{R}^3$, the goal is to construct a new mesh structure on the surface with better quality. To achieve this, we can first prescribe a population on the surface and apply our proposed algorithm to compute a bijective DEQ map $f:\mathcal{S} \rightarrow \mathcal{D}$ onto a planar circular domain. Then, we can construct a regular mesh structure $\mathcal{M}$ on $\mathcal{D}$. Finally, using inverse mapping $f^{-1}$, the regular mesh structure $\mathcal{M}$ can be mapped back onto the surface $\mathcal{S}$ and we obtain the remeshed surface $f^{-1}(\mathcal{M})$. 

It is noteworthy that the effect of the DEQ map is closely related to the initial population distribution. Assigning larger populations to specific regions can increase the level of detail in those areas, as they are enlarged in the DEQ mapping result and more points of $\mathcal{M}$ are mapped onto those parts. Consequently, $f^{-1}$ maps more points back onto the corresponding regions of $\mathcal{S}$, resulting in the display of more details in those domains. Also, we can further preserve certain desired features of $\mathcal{S}$ during the remeshing process by using the proposed LDEQ method. Specifically, we can first place landmarks around the features and then compute the LDEQ mapping $g: \mathcal{S} \rightarrow \mathcal{D}$ with the landmarks mapped to their corresponding target positions. A regular mesh $\mathcal{M}$ is then constructed on $\mathcal{D}$ using DistMesh~\cite{persson2004simple}, and finally $\mathcal{M}$ is interpolated onto $\mathcal{S}$ using the inverse mapping $g^{-1}$. The resulting mesh $g^{-1}(\mathcal{M})$ is then a feature-preserving remeshed representation of $\mathcal{S}$.

    \begin{figure}[t!]
        \centering
        \includegraphics[width=\textwidth]{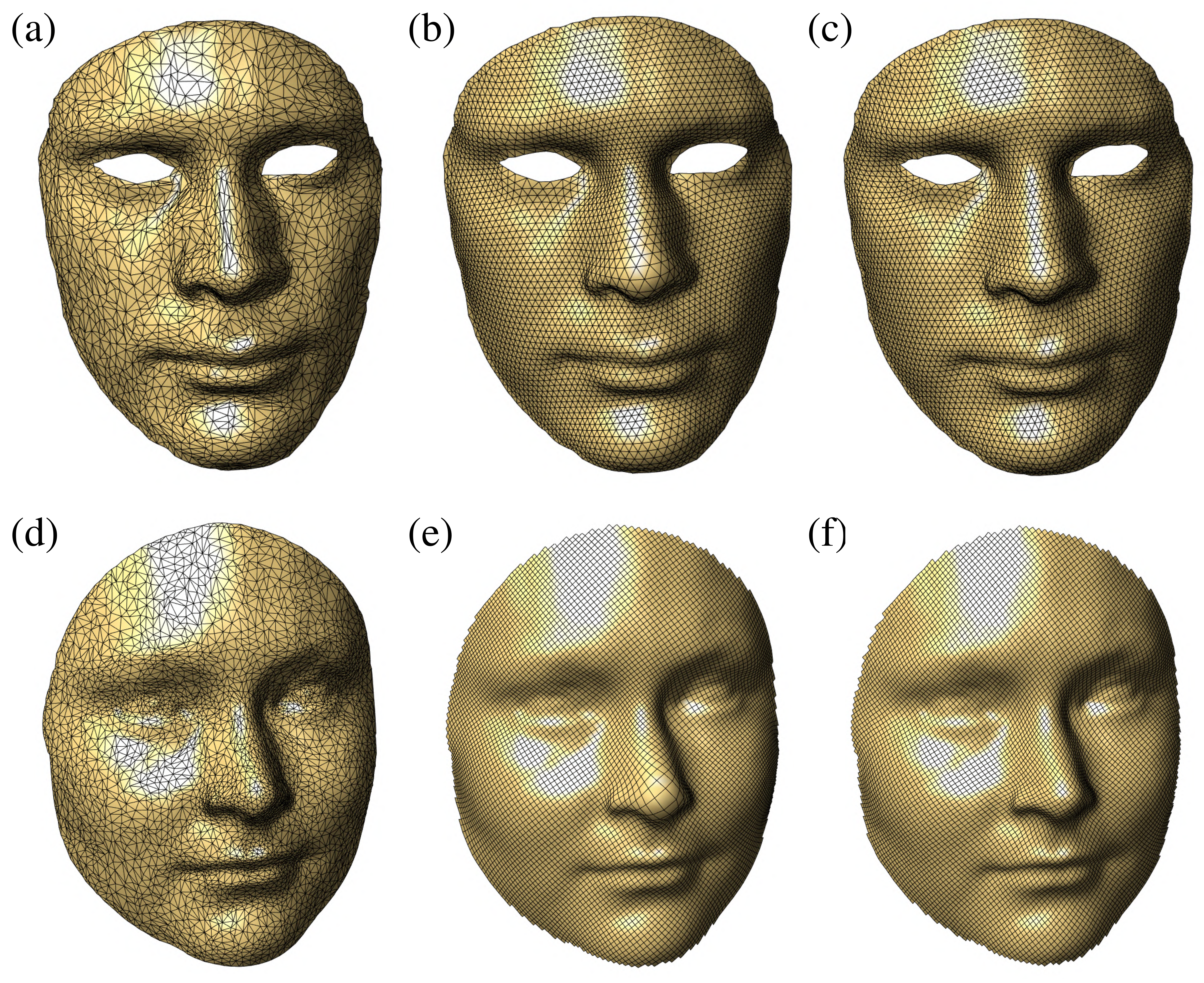}
        \caption{Remeshing human face models. (a) A multiply-connected human face mesh with 2 holes. (b):  The remeshing result via conformal parameterization. (c) The remeshing result via the proposed method. (d) A simply-connected human face mesh. (e) The quad-remeshing result via conformal parameterization. (f) The quad-remeshing result via the proposed method.}
        \label{fig:remeshing_face}
    \end{figure}

Fig.~\ref{fig:remeshing_face} shows two examples of remeshing a human face model using the DEQ method. We first consider a multiply-connected human face with 2 holes (Fig.~\ref{fig:remeshing_face}(a)), in which we can see that the initial triangulation is highly irregular. Fig.~\ref{fig:remeshing_face}(b) shows the remeshing results obtained via conformal parameterization. It can be observed that the triangulations are highly nonuniform at the nose in the conformal remeshing results, which can be explained by the fact that the conformal method does not control the area distortion of the parameterization. By contrast, using our proposed DEQ method with the population set to be proportional to the triangle area of the original mesh, we can achieve a much better remeshing result with a more uniform distribution (Fig.~\ref{fig:remeshing_face}(c)). As discussed previously, our method can also be applied to simply-connected open surfaces. In the second example, we consider remeshing a simply-connected human face (Fig.~\ref{fig:remeshing_face}(d)) using quad meshes. Again, we can see in the quad-remeshing result in Fig.~\ref{fig:remeshing_face}(e) that the conformal method leads to a non-uniform distribution, while the result obtained via our DEQ method is much more uniform (Fig.~\ref{fig:remeshing_face}(f)).

We then consider another experiment of remeshing a multiply-connected car model (Fig.~\ref{fig:remeshing_car}(a)) using the LDEQ method. It can be observed that the remeshing result via conformal parameterization (Fig.~\ref{fig:remeshing_car}(b)) loses the prominent features of the car hood and the headlamps. Also, the triangle density of the remeshed surface is highly uneven. By contrast, using our proposed algorithm, we can achieve a much more balanced distribution of points. Here, landmarks are placed at the car hood and the headlamps in the mapping computation to ensure the preservation of the prominent features, yielding the remeshing result in Fig.~\ref{fig:remeshing_car}(c). 

The above experiments show that our method is more advantageous than the conformal method for surface meshing.
 
    \begin{figure}[t!]
        \centering
        \includegraphics[width=\textwidth]{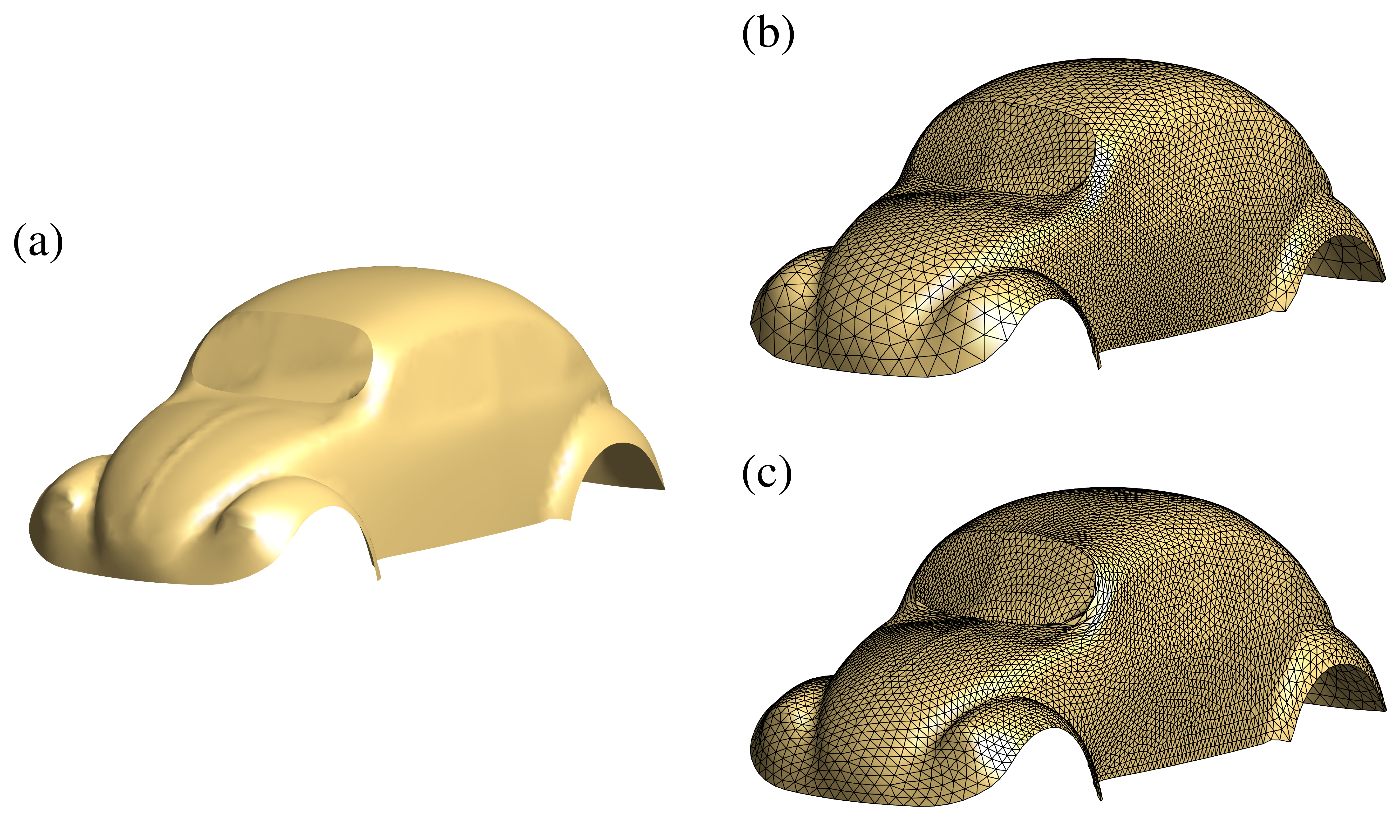}
        \caption{Remeshing a car model. (a) The original car surface. (b) The remeshing result via conformal parameterization. (c) The remeshing result via the proposed method.}
        \label{fig:remeshing_car}
    \end{figure}

\subsection{Texture mapping}
Our algorithm can be used for texture mapping on connected open surfaces. Specifically, after flattening a given 3D surface onto a planar circular domain, we can design a texture on the circular domain freely. We then use the inverse mapping to map the texture back onto the original surface. 

Note that conformal parameterizations are widely used for texture mapping because they can preserve the local geometry. However, the area distortion of the designed texture is large. When compared with the conformal method, our DEQ algorithm can preserve the area if we set the initial population as the area of the original surface. Moreover, since our method is obtained by solving Eq.~\eqref{eqt:DEQ}, the angle distortion is also small while the density of the unit area becomes constant.

Fig.~\ref{fig:texture} shows two examples of our method applied to different surfaces. A simply-connected human face and a multiply-connected human face are shown in Fig.~\ref{fig:texture}(a) and (d) respectively. We first design different textures and map them back onto the surfaces using conformal parameterization (Fig.~\ref{fig:texture}(b) and (e)). It can be observed that the right angles in the checkerboard pattern are well-preserved on the human face. However, some area distortions are visible around the nose and chin, indicating that the conformal method does not work well for those regions. By contrast, by applying the DEQ method, we can obtain texture mapping results with the checkerboard patterns uniformly mapped onto the faces (Fig.~\ref{fig:texture}(c) and (f)), which indicates the area is not distorted during the process. Moreover, it can be observed that orthogonality is preserved in most regions of the human face, which is attributed to the inclusion of the quasiconformality terms in the DEQ energy minimization model. The examples show that our DEQ method can achieve density-equalizing texture mapping with minimal angle distortion.  

    \begin{figure}[t!]
        \centering
        \includegraphics[width=0.95\textwidth]{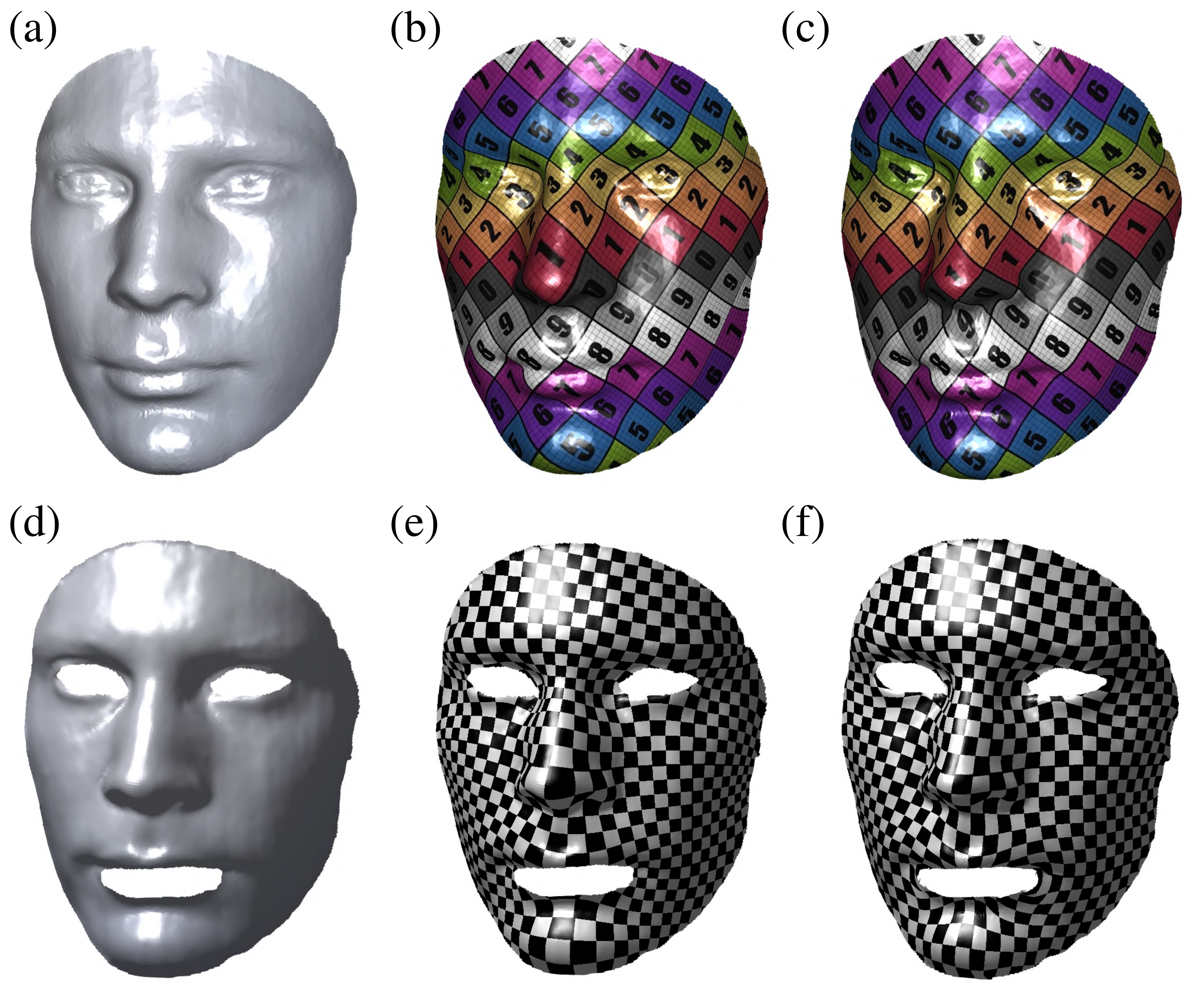}
        \caption{Texture mapping using the conformal and DEQ parameterizations. (a) A simply-connected human face. (b) Texture mapping using conformal parameterization. (c) Texture mapping using the DEQ method. (d) A multiply-connected human face. (e) Texture mapping using conformal parameterization. (f) Texture mapping using the DEQ method.} 
        \label{fig:texture}
    \end{figure}

\begin{figure}[t!]
    \centering
    \includegraphics[width=\textwidth]{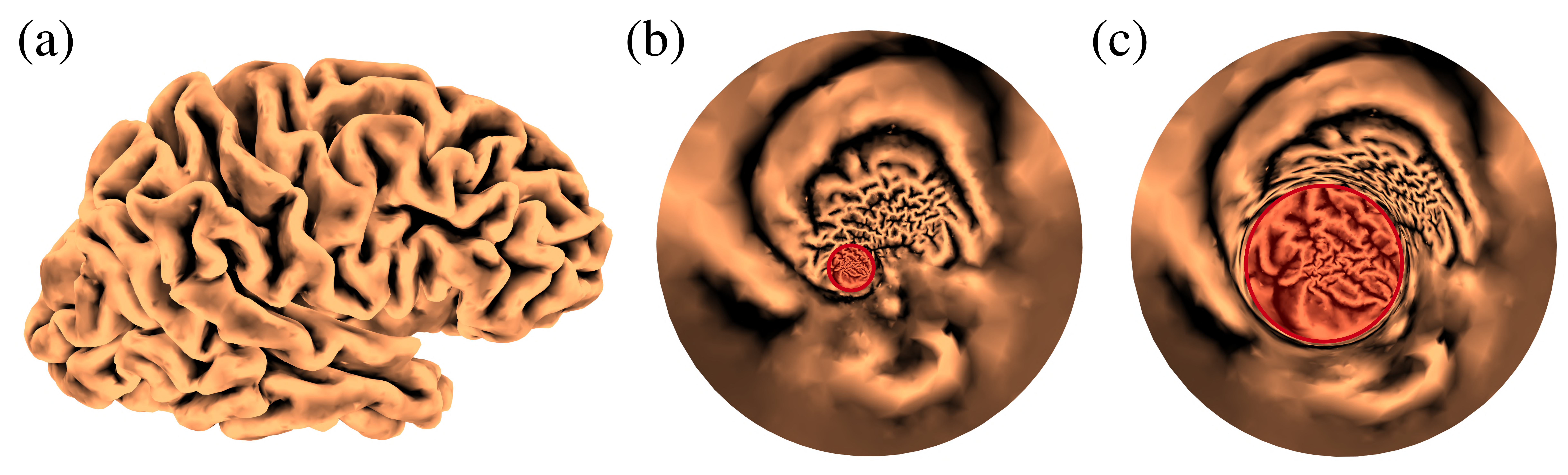}
    \caption{Application of our proposed method to medical visualization. (a) A highly convoluted human cortical surface. (b) Conformal parameterization of the surface onto the unit disk, where gyri and sulci are highly squeezed in a small region (highlighted in red). (c) The highlighted region is effectively enlarged using the DEQ method.}
    \label{fig:medical_visualization}
\end{figure}

\subsection{Medical visualization}
Our method can also be applied to medical visualization as an interactive shape magnifier. Given a complicated anatomical structure such as a human cortical surface (Fig.~\ref{fig:medical_visualization}(a)), one common approach for achieving a better visualization of it is to flatten it onto a simpler domain. However, conformal parameterization of such surfaces will easily produce a flattening result with certain parts highly squeezed in a small region (such as the sulci and gyri in Fig.~\ref{fig:medical_visualization}(b)), as the area distortion of conformal mappings is uncontrolled. This significantly hinders the visualization and shape analysis of such medical structures. 

To overcome this issue, we can apply our DEQ method with a suitable population prescribed. More specifically, the user can first prescribe a larger population at a region of interest (ROI). Then, by computing a DEQ map, we can effectively enlarge the ROI while preserving the bijectivity of the parameterization result (Fig.~\ref{fig:medical_visualization}(c)). Moreover, as the quasiconformal distortion is also taken into consideration in the DEQ model, the local geometry of the ROI is well-preserved under the mapping.

\section{Discussion}\label{sec:conclusion}
In this paper, we proposed a novel method for computing bijective density-equalizing quasiconformal mappings for both simply-connected and multiply-connected open surfaces. The main strategy is to minimize an energy functional that comprises a density term and two quasiconformality terms, where the density term measures the density distribution and the quasiconformality terms control the quasiconformal distortion and smoothness of the mapping. By minimizing the energy functional, our method can obtain a density-equalizing parameterization onto a 2D circular domain for any given connected open surface, with the bijectivity preserved and the quasiconformal distortion controlled. Landmark constraints can also be included in the mapping computation. Besides the applications to surface remeshing, texture mapping and medical visualization demonstrated in our paper, the proposed approach can be potentially combined with other mapping methods for multiply-connected domains~\cite{crowdy2020solving,nasser2020plgcirmap} for a broader range of shape analysis and geometry processing tasks. 

While we have demonstrated the effectiveness of our method for handling different surfaces, it is noteworthy that the density diffusion formulation assumes a smooth density. Therefore, a highly discontinuous and extreme density function induced by the input population may affect the accuracy of the final mapping result. Also, the current approach involves solving for an optimal Beltrami coefficient and reconstructing the entire circular domain at each iteration, which may be time-consuming for large datasets. With the recent development of parallelizable methods for conformal~\cite{choi2020parallelizable} and quasiconformal~\cite{zhu2022parallelizable} mapping, a natural next step is to explore the parallel computation of density-equalizing maps. Besides, our current work only focuses on connected open surfaces in $\mathbb R^3$. In the future, we plan to extend the method for closed surfaces and volumetric domains~\cite{choi2021volumetric,zhang2022unifying}.

\appendix
\section{Discretization and computation of density diffusion}\label{appendix:density}
Below, we describe the discretization and computation of density diffusion. Readers are referred to~\cite{choi2018density} for more details.

\subsection{Discretization of density}
Note that for every vertex $v \in \mathcal{V}$, the point density $\rho^{\mathcal{V}}(v)$ depends on the face density $\rho^{\mathcal{F}}$ of its 1-ring face neighborhood. Hence, the vertex density is given by 
\begin{equation}
    \rho^{\mathcal{V}}(v)=\frac{\sum_{\mathcal{T} \in \mathcal{N}^{\mathcal{F}}(v)} \rho^{\mathcal{F}}(\mathcal{T}) \operatorname{Area}(\mathcal{T})}{\sum_{\mathcal{T} \in \mathcal{N}^{\mathcal{F}}(v)} \operatorname{Area}(\mathcal{T})},
\end{equation}
where $\operatorname{Area}(\mathcal{T})$ is the area of the face $\mathcal{T}$ and $\mathcal{N}^{\mathcal{F}}(v)$ is the set of faces in the 1-ring neighborhood of the vertex $v$. If we regard $\rho^{\mathcal{V}}$ as a $|\mathcal{V}|\times 1$ matrix and $\rho^{\mathcal{F}}$ as a $|\mathcal{F}|\times 1$ matrix, the vertex density on the whole surface can be given by
\begin{equation}
    \rho^{\mathcal{V}} = W^{\mathcal{F} \mathcal{V}} \rho^{\mathcal{F}},
\end{equation}
where $W^{\mathcal{F} \mathcal{V}}$ is a $| \mathcal{F}| \times |\mathcal{V}  |$ sparse matrix:
\begin{equation}
    W^{\mathcal{F V}}=\left(
    \begin{array}{c}
    W_{1,:} /\left\|W_{1,:}\right\|_1 \\
    W_{2,:} /\left\|W_{2,:}\right\|_1 \\
    \vdots \\
    W_{|\mathcal{V}|:} /\left\|W_{|\mathcal{V}|,:}\right\|_1
    \end{array}\right),
\end{equation}
with 
\begin{equation}
    W_{i j}= \begin{cases}\operatorname{Area}\left(T_j\right) & \text { if the } j \text {-th triangle } T_j \text { contains the } i \text {-th vertex,} \\ 0 & \text { otherwise. }\end{cases}
\end{equation}

\subsection{Solving the diffusion equation}
To solve the diffusion equation~\eqref{diffusion_equation}, we discrete the Laplacian operator on the triangle mesh. Using the finite element method, the Laplacian operator $\Delta$ can be discretized as:
\begin{equation}
    \Delta \rho(v_i) = \frac{1}{A(v_i)} \sum_{j \in \mathcal{N}(v_i)} \frac{(\cot{\alpha_{ij}}+\cot{\beta_{ij}})}{2}(\rho(v_j) - \rho(v_i)),
\end{equation}
where $\mathcal{N}(v_i)$ is the 1-ring vertex neighborhood of $v_i$, $\alpha_{ij}$ and $\beta_{ij}$ are the two angles opposite to the edge $[v_i,v_j]$, and $A(v_i)$ is the vertex area with
\begin{equation}
A(v_i) = \frac{1}{3}\sum_{\mathcal{T}\in \mathcal{N}^{\mathcal{F}}(v_i)} \operatorname{Area}(\mathcal{T}),
\end{equation}
where $\mathcal{N}^{\mathcal{F}}(v_i)$ is the 1-ring face neighborhood of $v_i$. Alternatively, the Laplacian operator $\Delta$ can be represented as a matrix multiplication 
\begin{equation}
    \Delta =  A^{-1} L,
\end{equation}
where $A$ is a $|\mathcal{V}| \times |\mathcal{V}|$ diagonal matrix consisting of the vertex area and $L$ is a $|\mathcal{V}| \times |\mathcal{V}|$ sparse matrix with
\begin{equation}
    L_{i j}= \begin{cases} -\sum_{v_k \in \mathcal{N}(v_i)} \frac{(\cot{\alpha_{ik}}+\cot{\beta_{ik}})}{2} & \text { if } j = i, \\
    \frac{(\cot{\alpha_{ij}}+\cot{\beta_{ij}})}{2} & \text { if } v_j \in \mathcal{N}(v_i), \\ 
    0 & \text { otherwise. }\end{cases}
\end{equation}

With the densities defined on all vertices, the diffusion equation~\eqref{diffusion_equation} can be solved by the semidiscrete backward Euler method
\begin{equation}\label{discrete_diffusion}
    \frac{\rho^{\mathcal{V}}_{n} - \rho^{\mathcal{V}}_{n-1}}{\delta t} = \Delta_{n-1} \rho^{\mathcal{V}}_n,
\end{equation}
where $\rho^{\mathcal{V}}_{n}$ is the vertex density at $n$-th iteration, $\Delta_n$ is the Laplace operator, and $\delta t$ is the time step size. Combining with the Laplacian operator, the vertex density can be updated by
\begin{equation}
    \rho^{\mathcal{V}}_{n}  =(I- \delta t \Delta_{n-1})^{-1} \rho^{\mathcal{V}}_{n-1} = (A_{n-1} - \delta t L_{n-1})^{-1} (A_{n-1}\rho^{\mathcal{V}}_{n-1}).
\end{equation}

Then, we consider discretizing the gradient operator $\nabla$. 
Let $\mathcal{T} = [v_i, v_j, v_k]$ be a triangle element, where the three directed edges are denoted by $e_{ij} = [v_i,v_j]$, $e_{jk} = [v_j,v_k]$, and $e_{ki} = [v_k, v_i]$. Denote $N$ as the unit normal vector of $\mathcal{T}$. For any point $x$ lying on $\mathcal{T}$, $\rho^{\mathcal{F}}_{n}$ can be interpolated by
\begin{equation}
    \rho^{\mathcal{F}}_{n}(x) = \rho^{\mathcal{V}}_{n}(v_i) \varphi_i(x) + \rho^{\mathcal{V}}_{n}(v_j) \varphi_j(x) + \rho^{\mathcal{V}}_{n}(v_k) \varphi_k(x),
\end{equation}
where $\rho^{\mathcal{V}}_{n}(v_i)$, $\rho^{\mathcal{V}}_{n}(v_j)$, $\rho^{\mathcal{V}}_{n}(v_k)$ are vertex densities at $v_i, v_j, v_k$, and $\varphi_i$ are hat functions defined on the vertices:
\begin{equation}
    \varphi_i(p)= \begin{cases} 1 & \text { if } p = v_i,  \\
    \text{affine} & \text { if } p \in \mathcal{N}^\mathcal{F}(v_i), \\ 
    0 & \text { otherwise. }\end{cases}
\end{equation}
Using the property that $\nabla \varphi_i = \frac{N \times e_{j k}}{2\text{Area}(\mathcal{T})}$, $(\nabla \rho)^{\mathcal{F}}_{n} (\mathcal{T})$ can be expressed as
\begin{equation}\label{gradient}
\begin{split}
    (\nabla \rho)^{\mathcal{F}}_{n} (\mathcal{T}) & = \nabla \left(\rho^{\mathcal{V}}_{n}(v_i) \varphi_i(x) + \rho^{\mathcal{V}}_{n}(v_j) \varphi_j(x) + \rho^{\mathcal{V}}_{n}(v_k) \varphi_k(x)\right)\\
    & = \rho^{\mathcal{V}}_{n}(v_i) \nabla \varphi_i(x) + \rho^{\mathcal{V}}_{n}(v_j) \nabla \varphi_j(x) + \rho^{\mathcal{V}}_{n}(v_k) \nabla \varphi_k(x) \\
    & = \frac{1}{2\text{Area}(\mathcal{T})}N \times \left(\rho^{\mathcal{V}}_{n}(v_i) e_{jk} + \rho^{\mathcal{V}}_{n}(v_j) e_{ki} + \rho^{\mathcal{V}}_{n}(v_k) e_{ij}\right).
\end{split}
\end{equation}

After obtaining the gradient $(\nabla \rho)^{\mathcal{F}}_{n}$ on the triangle element, we can calculate the gradient operator on the vertices by
\begin{equation}\label{matrix:discrete}
    (\nabla \rho)^{\mathcal{V}}_{n} = W^{\mathcal{F} \mathcal{V}}  (\nabla \rho)^{\mathcal{F}}_{n}.
\end{equation}
As mentioned in the main text, we enforce the Neumann boundary condition $\nabla \rho \cdot \mathbf{n} = 0$ to preserve the circular shape of all boundaries. However, during the iterations, some numerical errors may occur, resulting in minor changes to the boundary shape as described in~\cite{choi2018density}. Hence, after every iteration, we add one step to project all boundary vertices onto the prescribed shape given by Algorithm~\ref{alg:GM}. As we have enforced the Neumann boundary condition, this projection step will not significantly affect the consistency between the density field and the shape deformation.

\section{Computing the boundary Beltrami coefficient in terms of the hole parameters}\label{appendix:BC}

In this section, we describe the process of computing the boundary Beltrami coefficient in terms of the parameters of an inner circular hole on the circular domain. Specifically, consider the $i$-th inner circular hole and a face $\mathcal{T}_l = [p_0, p_1, p_2]$ in the 1-ring face neighborhood of the hole. There are then two possibilities:
\begin{enumerate}[(i)]
    \item Exactly one vertex of $\mathcal{T}_l$ (say, $p_0$) lies on the circular hole.
    \item Exactly two vertices of  $\mathcal{T}_l$ (say, $p_0$ and $p_1$) lie on the circular hole.
\end{enumerate}
To simplify our discussion, we focus on the first case below and then briefly describe the second case at the end of this section. 

In the first case, we start by further simplifying some notations as follows. We denote the boundary Beltrami coefficient $\nu_l$ as $\nu$ and its corresponding triangle element $\mathcal{T}_l$ as $\mathcal{T}$. Additionally, we denote the hole parameters $(c_i,r_i,T_{i,j})$ for the boundary point $p_0$ as $(c,r,T)$.

Consider a piecewise linear function $g$. Let $w_k = g(p_k)$, where $p_k = m_k + in_k$ and $w_k = s_k + it_k$ for $k = 0, 1, 2$. On the triangle face $\mathcal{T}$, $g$ can be written as:
\begin{equation}
\left.g\right|_\mathcal{T}(x, y)=\left(\begin{array}{c}
a_\mathcal{T} x+b_\mathcal{T} y+p \\
c_\mathcal{T} x+d_\mathcal{T} y+q
\end{array}\right),
\end{equation}
where $D_xu(\mathcal{T}) = a_\mathcal{T}$, $D_yu(\mathcal{T}) = b_\mathcal{T}$, $D_xv(\mathcal{T}) = c_\mathcal{T}$, and $D_yv(\mathcal{T}) = d_\mathcal{T}$. 

Based on the Beltrami equation, the Beltrami coefficient $\nu(g)$ can be expressed as:
\begin{equation}
\nu(g) = \frac{\partial g / \partial \Bar{z}}{\partial g / \partial z} = \frac{(a_\mathcal{T} - d_\mathcal{T})+i(c_\mathcal{T} +b_\mathcal{T} )}{(a_\mathcal{T} + d_\mathcal{T})+i(c_\mathcal{T} - b_\mathcal{T} )}.
\end{equation}
For simplicity, we divide $\nu$ into two parts:
\begin{equation}\label{eqt:nu_two_parts}
\nu = \rho + i \tau = \frac{R}{Q} +i\frac{I}{Q}, 
\end{equation}
where 
\begin{equation}
\begin{aligned}
& R = a^2_\mathcal{T} - d^2_\mathcal{T} + c^2_\mathcal{T} - b^2_\mathcal{T}, \\
& I = 2a_\mathcal{T} b_\mathcal{T} + 2c_\mathcal{T} d_\mathcal{T}, \\
& Q = a^2_\mathcal{T} + d^2_\mathcal{T} + c^2_\mathcal{T} + b^2_\mathcal{T} 2a_\mathcal{T} d_\mathcal{T} + 2c_\mathcal{T} b_\mathcal{T}.
\end{aligned}
\end{equation}

The gradient of $g$ on every triangle face $\mathcal{T}$ is given by
\begin{equation}
\nabla_\mathcal{T} g = (D_xg(\mathcal{T}),D_yg(\mathcal{T}))^{T},
\end{equation}
where $D_xg(\mathcal{T}) = (D_x u, D_x v)$ and $D_yg(\mathcal{T}) = (D_y u, D_y v)$. To compute $\nabla_\mathcal{T} g$, we solve the following linear system:
\begin{equation}
\left(\begin{array}{c}
p_1-p_0 \\
p_2-p_0
\end{array}\right) \nabla_\mathcal{T} g=\left(\begin{array}{c}
\frac{g\left(p_1\right)-g\left(p_0\right)}{\left|p_1-p_0\right|} \\
\frac{g\left(p_2\right)-g\left(p_0\right)}{\left|p_2-p_0\right|}
\end{array}\right),
\end{equation}
where $[p_0,p_1]$ and $[p_0,p_2]$ are two edges of $\mathcal{T}$. After solving the above equation, the coefficients $a_\mathcal{T}$, $b_\mathcal{T}$, $c_\mathcal{T}$, and $d_\mathcal{T}$ can be obtained as follows:
\begin{equation}
\begin{aligned}
& a_\mathcal{T}=A_0 s_0 + A_1 s_1 + A_2 s_2, \\
& b_\mathcal{T}=B_0 s_0 + B_1 s_1 + B_2 s_2, \\
& c_\mathcal{T}=A_0 t_0 + A_1 t_1 + A_2 t_2, \\
& d_\mathcal{T}=B_0 t_0 + B_1 t_1 + B_2 t_2,
\end{aligned}
\end{equation}
where
\begin{equation}
\begin{aligned}
A_0 & =\left(n_1-n_2\right) / \operatorname{Area}(\mathcal{T}), \\
A_1 & =\left(n_2-n_0\right) / \operatorname{Area}(\mathcal{T}), \\
A_2 & =\left(n_0-n_1\right) / \operatorname{Area}(\mathcal{T}), \\
B_0 & =\left(m_2-m_1\right) / \operatorname{Area}(\mathcal{T}), \\
B_1 & =\left(m_0-m_2\right) / \operatorname{Area}(\mathcal{T}), \\
B_2 & =\left(m_1-m_0\right) / \operatorname{Area}(\mathcal{T}).
\end{aligned}
\end{equation}

With a focus solely on the boundary points, the formula above can be expressed as follows:
\begin{equation}\label{BC:express}
\begin{aligned}
& a_\mathcal{T}=\widetilde{A}_1+A_0 s_0, \\
& b_\mathcal{T}=\widetilde{B}_1+B_0 s_0, \\
& c_\mathcal{T}=\widetilde{A}_2+A_0 t_0, \\
& d_\mathcal{T}=\widetilde{B}_2+B_0 t_0.
\end{aligned}
\end{equation}
where $\widetilde{A}_1$, $\widetilde{B}_1$, $\widetilde{A}_2$, $\widetilde{B}_2$ are constants.

Substituting the above results into Eq.~\eqref{eqt:nu_two_parts}, the Beltrami coefficient at $p_0$ can be expressed as:
\begin{equation}
    \begin{split}
        \nu(s_0,t_0) = &\rho(s_0,t_0) + i\tau(s_0,t_0) \\ 
            = &\frac{R(s_0,t_0)}{Q(s_0,t_0)} + i\frac{I(s_0,t_0)}{Q(s_0,t_0)},
    \end{split}
\end{equation}
where
\begin{equation}
\begin{aligned}
& R(s_0,t_0) = (\widetilde{A}_1+A_0 s_0)^2 - (\widetilde{B}_2+B_0 t_0)^2 + (\widetilde{A}_2+A_0 t_0)^2 -(\widetilde{B}_1+B_0 s_0)^2 , \\
& I(s_0,t_0) = 2(\widetilde{A}_1+A_0 s_0)(\widetilde{B}_1+B_0 s_0)^2)+2(\widetilde{A}_2+A_0 t_0)(\widetilde{B}_2+B_0 t_0), \\
& Q(s_0,t_0) = (\widetilde{A}_1+A_0 s_0)^2 + (\widetilde{B}_2+B_0 t_0)^2 + (\widetilde{A}_2+A_0 t_0)^2 + (\widetilde{B}_1+B_0 s_0)^2 \\
 &  \quad \quad \quad \quad \quad \quad + 2(\widetilde{A}_1+A_0 s_0)(\widetilde{B}_2+B_0 t_0)  + 2(\widetilde{A}_2+A_0 t_0)(\widetilde{B}_1+B_0 s_0)^2.
\end{aligned}
\end{equation}
Using polar coordinates, we can express $s_0$ as $r \cos{\theta}-c^1$ and $t_0$ as $r \sin{\theta}-c^2$, where $c = (c^1,c^2)$ represents the center of the circular hole, $r$ denotes the radius of the circular hole, and $\theta$ denotes the angular coordinate of the point. Now, note that the angular component of the point can be expressed in terms of the tangent vector $T$ at it. Also, by an abuse of notation, $r$ is interchangeable with the radial vector as in the main text. Therefore, the boundary Beltrami coefficient $\nu_l$ associated with $\mathcal{T}_l$ can be expressed in terms of the hole parameters $(c,r,T)$:
\begin{equation}
\nu(s_0,t_0) = \nu(c,r,T) = \frac{R(c,r,T)}{Q(c,r,T)} + i\frac{I(c,r,T)}{Q(c,r,T)}.
\end{equation}

We now briefly discuss the second case where two vertices $p_0, p_1$ of the triangle element lie on the circular hole. Analogous to Eq.~\eqref{BC:express} in the first case, we can obtain the following expressions for $a_\mathcal{T}$, $b_\mathcal{T}$, $c_\mathcal{T}$, and $d_\mathcal{T}$:
\begin{equation}
\begin{aligned}
& a_\mathcal{T}=\hat{A}_1+A_0 s_0 +A_1 s_1, \\
& b_\mathcal{T}=\hat{B}_1+B_0 s_0 +B_1 s_1, \\
& c_\mathcal{T}=\hat{A}_2+A_0 t_0 +A_1 t_1 , \\
& d_\mathcal{T}=\hat{B}_2+B_0 t_0 + B_1 t_1,
\end{aligned}
\end{equation}
where $\hat{A}_1$, $\hat{B}_1$, $\hat{A}_2$, $\hat{B}_2$, $A_0$, $B_0$, $A_1$, and $B_1$ are some constants.

Using polar coordinates, we can express the coordinates of the two boundary points as follows: $s_0 = r \cos{\theta^0}-c^1$, $s_1 = r \cos{\theta^1}-c^1$, $t_0 = r \sin{\theta^0}-c^2$, and $t_1 = r \sin{\theta^1}-c^2$, where $c = (c^1,c^2)$ represents the center of the circular hole, $r$ denotes the radius of the circular hole, and $\theta^0$, $\theta^1$ denote the angular coordinates of the two points. Similar to the previous case, we can express $s_0, s_1, t_0, t_1$ using $(c, r^0, T^0)$ and $(c, r^1, T^1)$, where $r^0$, $r^1$ are two radial vectors and $T^0, T^1$ are two tangent vectors at the two points. We can then express $\nu_l$ in terms of $c, r^0, r^1, T^0, T^1$ as follows:
\begin{equation}
    \begin{split}
        \nu(s_0,t_0,s_1,t_1) = & \nu(c,r^0,r^1,T^0,T^1) \\
        = & \frac{R(c,r^0,r^1,T^0,T^1)}{Q(c,r^0,r^1,T^0,T^1)} + i\frac{I(c,r^0,r^1,T^0,T^1)}{Q(c,r^0,r^1,T^0,T^1)}.
    \end{split}
\end{equation}
Finally, note that in the domain alteration step of the geometry modification method, all points on an inner boundary are enforced to rescale and rotate consistently. Therefore, we can further simplify the above expression as $\nu = \nu(c, r, T)$ without ambiguity.

\bibliographystyle{ieeetr}
\bibliography{references}
\end{document}